\documentclass[11pt]{article}

\usepackage[T1]{fontenc}
\usepackage[utf8]{inputenc}
\usepackage{lmodern}
\usepackage[left=1.08in,right=1.08in,top=0.82in,bottom=0.90in]{geometry}
\usepackage{amsmath,amssymb,amsthm,mathtools}
\usepackage{booktabs,array}
\usepackage{graphicx}
\usepackage{flafter}
\usepackage[section]{placeins}
\usepackage{xcolor}
\usepackage{microtype}
\usepackage[numbers,square,sort&compress]{natbib}
\usepackage{hyperref}
\definecolor{linkred}{RGB}{155,0,0}
\hypersetup{
  colorlinks=true,
  linkcolor=linkred,
  citecolor=linkred,
  urlcolor=linkred,
  pdftitle={Certified Alpha Capacity: Statistical Arbitrage When Learning Takes Time},
  pdfauthor={Nicolò Bonacorsi},
  pdfsubject={Statistical certification of finite-lived trading signals},
  pdfkeywords={Certified Alpha Capacity, alpha decay, sequential testing, information theory, quantitative finance}
}
\ifdefined\pdfinfoomitdate\pdfinfoomitdate=1\fi
\ifdefined\pdftrailerid\pdftrailerid{}\fi
\ifdefined\pdfsuppressptexinfo\pdfsuppressptexinfo=15\fi
\usepackage{url}
\numberwithin{equation}{section}
\theoremstyle{plain}
\newtheorem{theorem}{Theorem}[section]
\newtheorem{proposition}[theorem]{Proposition}

\newtheorem{corollary}[theorem]{Corollary}
\theoremstyle{definition}

\theoremstyle{remark}
\newtheorem{remark}[theorem]{Remark}

\DeclareMathOperator{\KL}{D_{\mathrm{KL}}}
\newcommand{\Pzero}{\mathbb P_0}
\newcommand{\Pone}{\mathbb P_1}
\newcommand{\Ezero}{\mathbb E_0}
\newcommand{\Eone}{\mathbb E_1}
\newcommand{\Fobs}{\mathbb F}
\newcommand{\Ginfo}{\mathbb G}

\makeatletter
\renewcommand{\@seccntformat}[1]{\csname the#1\endcsname.\quad}
\makeatother

\begin{document}

\begin{center}
\vspace*{0.20em}
{\fontsize{24}{28}\selectfont\bfseries Certified Alpha Capacity:\par}
\vspace{0.35em}
{\fontsize{15}{18}\selectfont\scshape
Statistical Arbitrage\\[-0.08em]
When Learning Takes Time\par}

\vspace{1.15em}
{\large\scshape Nicolò Bonacorsi\par}
\vspace{0.20em}
{\normalsize Department of Applied Physics and Applied Mathematics\par}
{\normalsize Columbia University\par}
{\small New York, NY 10027, USA\par}
\vspace{0.30em}
{\small \texttt{nb3328@columbia.edu}\par}
{\small ORCID: \href{https://orcid.org/0009-0005-0479-3102}{0009-0005-0479-3102}\par}
\end{center}

\vspace{0.65em}
\begin{abstract}
Statistical validation takes time, and alpha can decay before the evidence justifies deployment. We quantify the surviving opportunity through Certified Alpha Capacity: the maximum expected value remaining under prescribed false-deployment and power constraints. In a canonical Gaussian experiment, we derive an exact certification frontier and the sharp law $\mathcal K_{\alpha,\beta}/\log(1/\varepsilon)$ for residual information capacity just above it. Near the frontier, a single interim decision can preserve orders of magnitude more capacity than fixed-time testing. Mapping information into cash reveals why equally certifiable signals can retain different economic values. In the specified competitive equilibrium, lifetime information approaches the certification frontier exponentially as research costs vanish.
\end{abstract}

\vspace{0.30em}
\noindent\textit{AMS 2020 Subject Classification:} Primary 62L10, 62L15. Secondary 60G40, 62M10.

\vspace{0.40em}
\noindent\textit{Keywords:} Certified Alpha Capacity, alpha decay, sequential testing, information theory, quantitative finance, optimal stopping, multiple testing.

\vspace{0.45em}

\section{Introduction}\label{introduction}

A promising trading signal creates a timing problem before it creates a portfolio position. Evidence takes time to accumulate, and the opportunity can weaken during that time. A large current signal may disappear before it supports a reliable deployment decision; a smaller, persistent signal may permit enough observation to justify action. Statistical arbitrage therefore requires assessing both the evidence needed to act and the opportunity remaining when that evidence arrives.

Research on post-publication attenuation, slow exploitation, and alpha dynamics documents the importance of persistence for investment opportunities \cite{mcleanpontiff2016,jacobsmuller2020,kaplanski2023,penasse2022}. Selecting among many candidate signals also increases the burden of establishing a reliable discovery \cite{harveyliu2016,baileyloprado2014,baileyborwein2017}. Together, these mechanisms make the value available after validation depend on the relative rates of learning and economic decay.

We study this dependence through \emph{Certified Alpha Capacity} (CAC). A certification contract fixes a maximum false-deployment probability under an economic boundary model and a required deployment probability under the alternative. Kullback--Leibler information measures separation from that boundary and provides a clock for accumulated evidence and remaining information lifetime. A stopping rule chooses when to certify, subject to the contract, and receives the value left at that decision. The framework joins a reliability requirement to the reward from timely action within a specified observation experiment.

Sequential testing and group-sequential design supply the statistical foundation. Likelihood processes control errors as observations arrive \cite{wald1947,siegmund1985}; error spending and optimal designs allocate the reliability budget across monitoring dates \cite{obrienfleming1979,landemets1983,ealesjennison1992,barberjennison2002}. Opportunistic detection and time-sensitive testing make terminal constraints central to the decision \cite{zhangmoustakidespoor2016,clericoetal2026,hamjasinzhao2026}. Our central result quantifies the value reliable stopping can preserve when the lifetime is only just sufficient for certification: a sharp inverse-logarithmic law with matching bounds over admissible stopping rules (Theorem~\ref{thm:near-critical-capacity}). A small surplus of lifetime information has a quantitative value given by this explicit stopping-value law.

The implications connect statistical feasibility to arbitrage economics. At false-deployment level $0.05$ and power $0.90$, the sharp constant is $\mathcal K_{0.05,0.10}\simeq11.15$. In the zero-hurdle exponential benchmark, a Bonferroni search over 100 candidates with annualized instantaneous Sharpe 2 requires a half-life of about 7.24 years (Section~\ref{alpha-survival-frontiers}). An ordered family of unknown decay profiles retains an exact least-profile frontier (Theorem~\ref{thm:ordered-decay-frontier}), while competitive entry drives excess lifetime information exponentially toward that frontier as research costs vanish (Corollary~\ref{cor:near-critical-equilibrium}). A descriptive funding benchmark illustrates the link between estimated lifetime information and persistence (Section~\ref{retrospective-funding-study}).

Economic interpretation requires a payoff map. Information in a Gaussian drift experiment is quadratic in the drift gap, whereas expected cashflow from a bounded position is linear in that gap. Equal lifetime information can thus support different cash values after learning. This connects CAC to learning with deadlines and decision-valued experimental design \cite{frazier2007deadline,dayanik2013deadline,campbell2026deadline,anand2010,zhong2026}. Information acquisition and arbitrage can also change other participants' opportunities \cite{grossmanstiglitz1980,basakcroitoru2006,zigrand2004,zigrand2006}, and statistical limits of arbitrage connect econometric discrimination to market opportunities \cite{danagelxiu2024}. CAC isolates the timing constraint within a declared observation law, payoff map, and lifetime mechanism.

\section{Model}\label{model-and-information-time}

\subsection{Probabilistic setup and information time}\label{probabilistic-setup}

On a common observation space, let $\Fobs=(\mathcal F_t)$ be the usual right-continuous augmentation of $\sigma\{(X_u,Y_u):u\le t\}$, and let $\Pzero,\Pone$ be locally equivalent laws. The signal $X$ and loading $\theta$ are predictable, and
\begin{equation}\label{eq:observed-channel-alt}
dY_t=\theta_tX_t\,dt+\sigma\,dB_t^{(1)}\quad(\Pone),\qquad \sigma>0,
\end{equation}
\begin{equation}\label{eq:observed-channel-null}
dY_t=\sigma\,dB_t^{(0)}\quad(\Pzero),
\end{equation}
where each $B^{(i)}$ is Brownian in the stated law and filtration. We impose local square integrability and the exponential-martingale conditions for Girsanov (Novikov suffices on bounded horizons). The joint law of $(X,Y)$ is assumed to have the likelihood below: the initial observation has the same law and the predictor supplies no additional likelihood factor. Predictability alone does not imply this joint-law assumption. A common completion is used on each horizon.

Define the continuous information clock and its stopped inverse by
\begin{equation}\label{eq:information-clock}
A_t=\sigma^{-2}\int_0^t\theta_u^2X_u^2\,du,\qquad H=A_T,
\end{equation}
\begin{equation}\label{eq:inverse-information-clock}
t_{\rm inv}(s)=\inf\{t:A_t>s\}\wedge T,\qquad \mathcal G_s=\mathcal F_{t_{\rm inv}(s)}.
\end{equation}
Here $T$ is economic death, deterministic or an $\Fobs$-stopping time; $A$ denotes the value of $H$ when deterministic. The information filtration $\Ginfo=(\mathcal G_s)$ is constant after $H$, and flat calendar-clock intervals are collapsed. Girsanov gives the log likelihood
\begin{align}
\Lambda_t&=\int_0^t\frac{\theta_uX_u}{\sigma}\,dB_u^{(0)}-A_t/2 &&(\Pzero),\label{eq:llr-null}\\
\Lambda_t&=\int_0^t\frac{\theta_uX_u}{\sigma}\,dB_u^{(1)}+A_t/2 &&(\Pone).\label{eq:llr-alt}
\end{align}
Thus $Z_s=\Lambda_{t_{\rm inv}(s)}$ satisfies, up to $H$,
\begin{equation}\label{eq:information-time-likelihood}
Z_s=\widetilde B_s^{(1)}+s/2\quad(\Pone),\qquad Z_s=\widetilde B_s^{(0)}-s/2\quad(\Pzero),
\end{equation}
\begin{equation}\label{eq:information-time-rn}
\left.\frac{d\Pone}{d\Pzero}\right|_{\mathcal G_s}=e^{Z_s}.
\end{equation}
At a random horizon the precise identity is stopped: $Z_s=\widetilde B^{(i)}_{s\wedge H}\pm(s\wedge H)/2$. Any Brownian extension after $H$ is auxiliary on an enlarged space. For a bounded $\Ginfo$-stopping time $S$, write $\mathbb P_i^S$ for the restriction to its stopped sigma-field $\mathcal G_S$, and $\mathcal T_A$ for the stopping times $0\le S\le A$.

We use $\KL(P\Vert Q)=\mathbb E_P\log(dP/dQ)$ and $\operatorname{kl}(p,q)=p\log(p/q)+(1-p)\log((1-p)/(1-q))$, extended to boundary arguments by lower semicontinuity. All logarithms are natural; $\Phi$ is the standard normal cdf, $\bar\Phi=1-\Phi$, $z_u=\Phi^{-1}(u)$, and $x_+=\max(x,0)$. Matrix inequalities use the positive-definite order. Principal model-specific notation is collected in Appendix~\ref{app:notation}.

\subsection{Economic value on the information clock}\label{economic-value-information-clock}

At a fixed calendar time, suppress the time subscript and write $X$ and $\theta$
for the current signal value and drift loading. For the quadratic instantaneous trading objective
\[
q\theta X-\frac{\gamma}{2}q^2,
\qquad \gamma>0,
\]
where \(q\in\mathbb R\) is the position and \(\gamma\) is the quadratic trading-cost coefficient, the full-information optimal position is \(q^*=\theta X/\gamma\). Its value rate is
\[
\frac{\theta^2X^2}{2\gamma}\,dt
=\frac{\sigma^2}{2\gamma}\,dA_t.
\]
Thus statistical information and perfect-information economic value are carried by the same clock. When $H=A$ is deterministic, define the lifetime path information and integrated perfect-information value by
\[
\mathcal I_{\rm life}:=\KL(\Pone^A\Vert\Pzero^A),
\qquad
V^{PI}:=\int_0^T\frac{\theta_t^2X_t^2}{2\gamma}\,dt,
\]
where the superscript $PI$ stands for perfect information.

\begin{theorem}[Statistical information and full-information value]
In the deterministic information-horizon benchmark $H=A$,
\[
\mathcal I_{\rm life}=\frac A2,
\qquad
V^{PI}=\frac{\sigma^2}{2\gamma}A
=\frac{\sigma^2}{\gamma}\mathcal I_{\rm life}.
\]
More generally, for any bounded $\Ginfo$-stopping time $S$ satisfying $S\le H$ almost surely,
\[
\KL(\Pone^S\Vert\Pzero^S)=\Eone[Z_S]
=\frac12\Eone[S].
\]
\end{theorem}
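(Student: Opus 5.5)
The plan is to prove the general stopping-time identity first and then read off the deterministic benchmark as the special case $S\equiv A$, together with a direct calculation for $V^{PI}$.

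\emph{Step 1 (likelihood ratio on the stopped sigma-field).} Fix a bounded $\Ginfo$-stopping time $S\le H$. From \eqref{eq:information-time-rn}, $e^{Z_s}$ is the density process of $\Pone$ relative to $\Pzero$ on $\Ginfo$. I will argue that $e^{Z_{s}}$ is a uniformly integrable $\Pzero$-martingale up to the bound $s_0$ on $S$. This holds because on $[0,s_0]$ it is the exponential of a stopped Brownian motion with drift $-s/2$ under $\Pzero$, and because of the Novikov assumption imposed in the setup. Optional stopping then gives $d\Pone^S/d\Pzero^S=e^{Z_S}$ on $\mathcal G_S$. By the definition of $\KL$, this yields $\KL(\Pone^S\Vert\Pzero^S)=\Eone[Z_S]$, and finiteness follows from the next step.

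\emph{Step 2 (compute $\Eone[Z_S]$).} Under $\Pone$, the stopped identity gives $Z_S=\widetilde B^{(1)}_{S\wedge H}+(S\wedge H)/2=\widetilde B^{(1)}_S+S/2$, since $S\le H$. Here $\widetilde B^{(1)}$ is a $\Ginfo$-Brownian motion under $\Pone$, obtained as the time change of the continuous local martingale $\int\theta X\sigma^{-1}dB^{(1)}$, whose quadratic variation is $A_t$, by the Dambis--Dubins--Schwarz theorem. Because $S$ is bounded, Wald's identity (optional stopping for a Brownian motion at a bounded stopping time) gives $\Eone[\widetilde B^{(1)}_S]=0$. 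Hence $\Eone[Z_S]=\tfrac12\Eone[S]$.

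\emph{Step 3 (benchmark).} When $H=A$ is deterministic, take $S\equiv A$. This is admissible because it is bounded and satisfies $S\le H$. Step 2 then gives $\mathcal I_{\rm life}=A/2$. For the value, integrate the pathwise identity $\theta_t^2X_t^2/(2\gamma)\,dt=(\sigma^2/2\gamma)\,dA_t$ over $[0,T]$, which gives $V^{PI}=\sigma^2A_T/(2\gamma)=\sigma^2A/(2\gamma)=(\sigma^2/\gamma)\mathcal I_{\rm life}$.

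\emph{Main obstacle.} The delicate point is the time change. I need $\widetilde B^{(1)}$ to be a genuine $\Ginfo$-Brownian motion with respect to which $S$ is a stopping time, and this must hold even though $\mathcal G_s$ is constant after $H$ and flat intervals are collapsed. I would handle it by working only on $[0,H]$, where the paper states the identity \eqref{eq:information-time-likelihood}. There I would apply optional stopping to the stopped martingale $\widetilde B^{(1)}_{\cdot\wedge H}$, which is bounded in $L^2$ by $s_0$ on the relevant range. This avoids any need for the auxiliary extension beyond $H$. A secondary check is that calendar-time continuity of $A$ makes $t_{\rm inv}(S)$ an $\Fobs$-stopping time, so that $\mathcal G_S=\mathcal F_{t_{\rm inv}(S)}$ and the Girsanov density applies there.
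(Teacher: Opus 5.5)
Your proposal is correct and follows essentially the same route as the paper. The stopped likelihood-ratio identity $d\Pone/d\Pzero|_{\mathcal G_S}=e^{Z_S}$ gives $\KL(\Pone^S\Vert\Pzero^S)=\Eone[Z_S]$, optional sampling of $\widetilde B^{(1)}$ at the bounded time $S\le H$ gives $\Eone[S]/2$, and taking $S=A$ together with the quadratic value-rate identity yields the benchmark. Your extra justifications make explicit steps the paper takes for granted: uniform integrability of $e^{Z_s}$ on $[0,s_0]$, using only the stopped martingale $\widetilde B^{(1)}_{\cdot\wedge H}$, and checking that $t_{\rm inv}(S)$ is an $\Fobs$-stopping time.
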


\begin{proof}
For every bounded stopping time \(S\), the stopped likelihood-ratio identity gives
\[
\left.\frac{d\Pone}{d\Pzero}\right|_{\mathcal G_S}=e^{Z_S}.
\]
Therefore
\[
\KL(\Pone^S\Vert\Pzero^S)=\Eone[Z_S].
\]
Under \(\Pone\), \(Z_s=\widetilde B_s^{(1)}+s/2\). Boundedness of \(S\) makes optional sampling applicable to \(\widetilde B^{(1)}\), so \(\Eone[\widetilde B_S^{(1)}]=0\) and hence \(\Eone[Z_S]=\Eone[S]/2\). For deterministic \(S=A\), this gives \(\mathcal I_{\rm life}=A/2\). The economic identity follows from the full-information quadratic value calculation immediately above.
\end{proof}

Under this reduction, signal amplitude, feature units, calendar half-life and lookback enter the certification problem through the total information generated before the economic endpoint.

\section{Reliable certification before decay}\label{reliable-deployment-and-certified-alpha-capacity}

Fix error parameters $\alpha,\beta\in(0,1)$, where $\alpha$ is the maximum
false-deployment probability under $\Pzero$ and $1-\beta$ is the required
deployment probability under $\Pone$.  Unless stated otherwise we consider the
nondegenerate regime $1-\beta\ge\alpha$.

With only a type-I constraint, a procedure can deploy immediately with
probability \(\alpha\) without learning anything about the signal. A power target strictly above $\alpha$ rules out this lottery and makes certification depend on evidence
accumulated under the alternative.

Throughout the canonical experiment, the observation filtration is augmented at time zero by a variable $U_{\rm rnd}\sim\mathrm{Unif}(0,1)$, independent of the observed path under both hypotheses and with the same law under both. This permits binary randomized decisions without adding information about the hypothesis.

For deterministic information lifetime \(A\), a certification/deployment rule is a pair \((S,\delta)\), where \(S\in\mathcal T_A\) and \(\delta\in\{0,1\}\) is \(\mathcal G_S\)-measurable. The event \(\{\delta=1\}\) means that the signal is certified and deployed. On \(\{\delta=0\}\) we may, without loss, set \(S=A\). Define
\[
c_{\alpha,\beta}(A)
:=\sup \Eone[(A-S)\delta]
\]
over all such rules satisfying
\[
\Pzero(\delta=1)\le\alpha,
\qquad
\Pone(\delta=1)\ge1-\beta.
\] Below the feasibility threshold, we extend the capacity by zero as a reporting convention; the no-deployment rule does not satisfy the power constraint. The corresponding expected post-certification value, measured in economic units, is \[
\mathsf{CAC}_{\alpha,\beta}(A)
=\frac{\sigma^2}{2\gamma}c_{\alpha,\beta}(A).
\]

\subsection{Exact Gaussian certification threshold}\label{exact-reliable-information-phase-transition}

\begin{theorem}[Exact Gaussian certification threshold]\label{thm:exact-gaussian-threshold}
In the canonical Brownian likelihood experiment, the largest possible probability of deployment by information time $A$ among all level-$\alpha$ rules is
\[
\pi_\alpha(A)=\Phi\!\left(\sqrt A-z_{1-\alpha}\right).
\]
Consequently a false-deployment target $\alpha$ and power target $1-\beta$ are statistically feasible by death if and only if
\[
A\ge A_{\rm crit}(\alpha,\beta)
=\bigl(z_{1-\alpha}+z_{1-\beta}\bigr)^2.
\]
Equivalently, in KL units,
\[
\mathcal I_{\rm life}
\ge
\mathcal I_{\rm crit}(\alpha,\beta)
:=\frac12\bigl(z_{1-\alpha}+z_{1-\beta}\bigr)^2.
\]
Positive post-certification value requires $A>A_{\rm crit}$: at the critical point the full lifetime is needed merely to attain the requested statistical discrimination.
\end{theorem}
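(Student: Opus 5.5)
The plan is to reduce the sequential problem to a single Neyman--Pearson test on the terminal sigma-field $\mathcal G_A$ and then read off the Gaussian power curve.

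\emph{Reduction.} Any rule $(S,\delta)$ with $S\in\mathcal T_A$ deploys by information time $A$. Since $\mathcal G_S\subseteq\mathcal G_A$, the indicator $\delta$ is $\mathcal G_A$-measurable. Here the sigma-fields are augmented by $U_{\rm rnd}$, whose law is the same under both hypotheses, so $d\Pone/d\Pzero$ on the augmented $\mathcal G_A$ is still $e^{Z_A}$. Thus $\Pzero(\delta=1)\le\alpha$ and $\Pone(\delta=1)$ are the size and power of a (possibly randomized) test of $\Pzero^A$ against $\Pone^A$. Conversely, any such terminal test is realized by the rule with $S=A$. Hence $\pi_\alpha(A)$ equals the most powerful level-$\alpha$ power for the simple pair $(\Pzero^A,\Pone^A)$.

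\emph{Neyman--Pearson and the Gaussian computation.} By the Neyman--Pearson lemma, the optimal test rejects for large likelihood ratio $e^{Z_A}$, i.e.\ for large $Z_A$. By \eqref{eq:information-time-likelihood} with deterministic $H=A$:
\begin{itemize}
\item under $\Pzero$, $Z_A\sim N(-A/2,A)$;
\item under $\Pone$, $Z_A\sim N(A/2,A)$.
\end{itemize}
The law of $Z_A$ is continuous, so no randomization is needed and the test $\{Z_A>-A/2+\sqrt A\,z_{1-\alpha}\}$ has size exactly $\alpha$. Its power is
\[
\bar\Phi\bigl((-A/2+\sqrt A z_{1-\alpha}-A/2)/\sqrt A\bigr)=\Phi(\sqrt A-z_{1-\alpha}).
\]
When $A=0$ the formula still gives $\alpha$, which matches the trivial randomized rule.

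\emph{Feasibility threshold.} Feasibility holds iff $\pi_\alpha(A)\ge1-\beta$, i.e.\ $\sqrt A-z_{1-\alpha}\ge z_{1-\beta}$. In the regime $1-\beta\ge\alpha$ we have $z_{1-\alpha}+z_{1-\beta}\ge0$, so squaring is legitimate and gives $A\ge A_{\rm crit}$. The KL form follows from $\mathcal I_{\rm life}=A/2$, established in the preceding theorem.

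\emph{Zero value at the critical point.} When $A=A_{\rm crit}$, I will show that feasible rules have $S=A$ a.s.\ on $\{\delta=1\}$ under $\Pone$. This is the main obstacle.

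The argument I would try first is as follows. Suppose $\Pone(S<A,\delta=1)>0$, and let $D=\{\delta=1\}\in\mathcal G_A$, with size $\le\alpha$ and power $\ge1-\beta$. At $A_{\rm crit}$ the Neyman--Pearson bound holds with equality, so $D$ must coincide a.s.\ with the NP region $\{Z_A>c\}$, by uniqueness of the most powerful test given the continuous likelihood ratio. On the other hand, $D\cap\{S<A\}$ is decided using $\mathcal G_S$ only, so $\{S<A,\delta=1\}\subseteq\{Z_A>c\}$ a.s. But conditionally on $\mathcal G_S$ with $S<A$, the increment $Z_A-Z_S$ is a nondegenerate Gaussian independent of $\mathcal G_S$. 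Hence $\Pone(Z_A\le c\mid\mathcal G_S)>0$ on $\{S<A\}$, which contradicts the inclusion. Therefore $\Eone[(A-S)\delta]=0$ at $A_{\rm crit}$, and any positive value forces $A>A_{\rm crit}$.

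The delicate point is the a.s.\ uniqueness of the NP test. It follows because any test of size $\le\alpha$ with power equal to the NP power must agree with the NP test off the null set $\{Z_A=c\}$.
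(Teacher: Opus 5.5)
Your proposal is correct and follows essentially the same route as the paper: you reduce every rule with $S\le A$ to a test on $\mathcal G_A$, apply Neyman--Pearson to the Gaussian law of $Z_A$, and at the critical point combine a.s.\ uniqueness of the most-powerful test with the nondegenerate conditional law of $Z_A-Z_S$ given $\mathcal G_S$ on $\{S<A\}$. The one case your uniqueness step does not cover is $A_{\rm crit}=0$ (when $1-\beta=\alpha$), where the likelihood ratio is constant; it is trivial because $S\le A=0$ forces zero residual value, and the paper likewise handles $A=0$ separately.
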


\begin{proof}
If $A=0$, the two observation laws coincide. No level-$\alpha$ decision has power exceeding $\alpha$, and the decision $\mathbf1_{\{U_{\rm rnd}\le\alpha\}}$ attains it. Necessarily $S=0$, so residual value is zero. Thus the displayed formula holds at $A=0$. Assume $A>0$ for the Gaussian threshold argument below.

Every decision made by a rule with \(S\le A\) is \(\mathcal G_A\)-measurable, so the Neyman--Pearson lemma applies to the full experiment observed up to \(A\). The full path likelihood ratio at time \(A\) is \(e^{Z_A}\), and \(Z_A\) is sufficient for testing the two simple path laws. Under \(\Pzero\) and \(\Pone\), \[
Z_A\sim N(-A/2,A),
\qquad
Z_A\sim N(A/2,A),
\] respectively. Neyman--Pearson gives the level-\(\alpha\)
threshold
\[
c_A:=-\frac A2+\sqrt A\,z_{1-\alpha},
\qquad Z_A\ge c_A,
\]
whose \(\Pone\)-probability equals \(\Phi(\sqrt A-z_{1-\alpha})\).
Solving \(\pi_\alpha(A)\ge1-\beta\) gives the displayed threshold. If
\(A>A_{\rm crit}\), choose a deterministic test time
\(t\in(A_{\rm crit},A)\); the level-\(\alpha\) Neyman--Pearson test at \(t\) has
power greater than \(1-\beta\) and leaves positive time \(A-t\) whenever
it deploys. At the critical point with $A_{\rm crit}>0$, feasibility forces the rule to attain the extremal
Neyman--Pearson receiver-operating-characteristic (ROC) point. Because the terminal likelihood ratio has a continuous
distribution, the level-\(\alpha\) most-powerful test is unique up to null sets;
thus \(\delta=\mathbf 1_{E_A}\) almost surely, where
\(E_A:=\{Z_A\ge c_A\}\). If \(\Pone(S<A)>0\), then on \(\{S<A\}\) the
conditional increment \(Z_A-Z_S\) is Gaussian with strictly positive variance
\(A-S\), so
\(0<\Pone(E_A\mid\mathcal G_S)<1\). Since \(\delta\) is
\(\mathcal G_S\)-measurable and \(\delta=\mathbf 1_{E_A}\), this would imply
\(\Pone(E_A\mid\mathcal G_S)=\delta\in\{0,1\}\), a contradiction. Thus
\(S=A\) almost surely at the critical point and the post-certification value is
zero.
\end{proof}

For example, with \(\alpha=0.05\) and power \(0.90\), \[
A_{\rm crit}=8.5638473507\ldots,
\qquad
\mathcal I_{\rm crit}=4.2819236753\ldots\ \text{nats}.
\] If $\mathcal I$ denotes the KL information available to an arbitrary binary decision experiment, the generic data-processing bound is
$\mathcal I\ge\operatorname{kl}(0.9,0.05)=2.3762054\ldots$ nats, which is below the canonical Gaussian requirement.

A constructive fixed-time lower bound for
\(A>A_{\rm crit}\) is \[
c_{\alpha,\beta}(A)
\ge
\sup_{A_{\rm crit}\le t<A}
(A-t)\,\Phi\!\left(\sqrt t-z_{1-\alpha}\right).
\] It tests once at time \(t\). The optimal stopping rule can only
improve on this benchmark.

The zero residual value at the critical horizon is the endpoint implication of the most-powerful Gaussian test. In the discrete Gaussian experiment, Clerico et al.~\cite[Sections 6 and 8]{clericoetal2026} likewise show that the hard-deadline power-optimal test cannot reject before its deadline. The next result quantifies the residual value obtainable when the information horizon exceeds that endpoint by a vanishing amount.

\subsection{Near-critical capacity}\label{near-critical-capacity}

Assume $0<\alpha<1-\beta<1$. For the remainder of the paper write $A_*:=A_{\rm crit}(\alpha,\beta)$. Let
\[
\varphi_N(x):=(2\pi)^{-1/2}e^{-x^2/2}
\]
denote the standard normal density.  The fixed-time rule at $A_*$ earns $(1-\beta)(A-A_*)$ when it is embedded in a longer horizon.  Sequential certification has a different local scale.

\begin{theorem}[Sharp near-critical capacity]\label{thm:near-critical-capacity}
For the canonical Brownian likelihood experiment,
\[
c_{\alpha,\beta}(A_*)=0,
\qquad
c_{\alpha,\beta}(A_*+\varepsilon)
\sim
\frac{\mathcal K_{\alpha,\beta}}{\log(1/\varepsilon)},
\qquad \varepsilon\downarrow0,
\]
where
\begin{equation}\label{eq:sharp-capacity-constant}
\mathcal K_{\alpha,\beta}
:=
\frac{A_*}{2}
\left(
[1+z_{1-\beta}^2](1-\beta)
+z_{1-\beta}\varphi_N(z_{1-\beta})
\right).
\end{equation}
The upper bound holds over randomized stopping rules.  A binary conditional-error crossing rule attains the limit.  Consequently
\[
\lim_{\varepsilon\downarrow0}
\frac{c_{\alpha,\beta}(A_*+\varepsilon)}{\varepsilon}=+\infty.
\]
\end{theorem}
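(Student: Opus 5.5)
The plan is to work in the information clock, where $Z_s$ is Brownian motion with drift $\pm1/2$, and to measure every rule against the terminal Neyman--Pearson event $E_A=\{Z_A\ge c_A\}$ with $A=A_*+\varepsilon$. The identity $c_{\alpha,\beta}(A_*)=0$ is already contained in Theorem~\ref{thm:exact-gaussian-threshold}, and the final claim $c/\varepsilon\to\infty$ follows immediately from the asymptotic law. The work is therefore to prove matching upper and lower bounds of the form $(1+o(1))\,\mathcal K_{\alpha,\beta}/\log(1/\varepsilon)$.

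\emph{Heuristic and the identity for the constant.} By Theorem~\ref{thm:exact-gaussian-threshold}, the slack in power at horizon $A$ is
\[
\pi_\alpha(A)-(1-\beta)\approx\varphi_N(z_{1-\beta})\,\varepsilon/(2\sqrt{A_*}),
\]
which is of order $\varepsilon$. Consider a rule that certifies at time $s=A-u$ while $Z_s$ sits a distance $d$ above $c_A$. Compared with waiting, it gives up the NP-optimal terminal decision on an event whose conditional probability is about $\bar\Phi(d/\sqrt u)\approx e^{-d^2/(2u)}$. Keeping this loss within an $O(\varepsilon)$ budget forces
\[
u\lesssim d^2/(2\log(1/\varepsilon)).
\]
Since the admissible $u$ is only $O(1/\log(1/\varepsilon))$, the distance $d$ is essentially the terminal overshoot. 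The total value should then be
\[
\Eone\bigl[(Z_A-c_A)_+^2\bigr]/(2\log(1/\varepsilon)).
\]
Under $\Pone$ we have $Z_A-c_A\approx\sqrt{A_*}\,(N+z_{1-\beta})$, and the Gaussian identity
\[
\mathbb E\bigl[(N+a)^2\mathbf 1_{\{N>-a\}}\bigr]=(1+a^2)\Phi(a)+a\varphi_N(a),\qquad a=z_{1-\beta},
\]
reproduces $\mathcal K_{\alpha,\beta}$ exactly. I will check this identity first, since it fixes the target.

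\emph{Lower bound (construction).} I will define a crossing rule with boundary
\[
b(s)=c_A+\sqrt{2(1+\eta)(A-s)\log(1/\varepsilon)}
\]
and threshold level $c_A'$ slightly above $c_A$. The rule certifies at the first $s$ with $Z_s\ge b(s)$, and otherwise decides at $A$ by $\{Z_A\ge c_A'\}$. Under both hypotheses, the probability of crossing $b$ and then ending below $c_A$ is $O(\varepsilon^{1+\eta'})$, by a reflection or Gaussian-tail bound after conditioning on the crossing time. Hence a choice $c_A'-c_A=O(\varepsilon)$ restores level $\alpha$ while power stays $\ge1-\beta$, using the $\varepsilon$-order slack. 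The reward is $\Eone[(A-S)\mathbf 1_{\{S<A\}}]$. On the event that $Z$ ends at overshoot $d$, the crossing time satisfies $A-S\approx d^2/(2(1+\eta)\log(1/\varepsilon))$. Here I use that the Brownian fluctuation over a window of length $O(1/\log(1/\varepsilon))$ is $o(1)$, together with dominated convergence. Letting $\eta\downarrow0$ gives the lower bound. To make the rule binary and conditional-error based, I will phrase it as stopping when $\Pzero(Z_A<c_A\mid\mathcal G_s)$ falls below $\varepsilon^{1+\eta}$.

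\emph{Upper bound (main obstacle).} This is the hard step, because it must hold over all randomized rules. For an arbitrary feasible $(S,\delta)$, write the power deficit relative to the NP test in Lagrangian form. By the Neyman--Pearson lemma with multiplier $k=e^{c_A}$,
\[
\Eone[\mathbf 1_{E_A}-\delta]-k\,\Ezero[\mathbf 1_{E_A}-\delta]\ge \Eone\bigl[|e^{-Z_A}\cdot k-1|\,\mathbf 1\{\delta\ne\mathbf 1_{E_A}\}\bigr]\ge0,
\]
and feasibility bounds the left side by $O(\varepsilon)$. Conditioning on $\mathcal G_S$ on $\{\delta=1,S<A\}$, the contribution is at least
\[
\Eone\bigl[(1-k e^{-Z_A})_-\mid\mathcal G_S\bigr],
\]
a function $g(Z_S-c_A,\,A-S)$ that behaves like $\exp(-(Z_S-c_A)^2/(2(A-S))+o(\log(1/\varepsilon)))$. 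Fix $\eta>0$. On the set where $A-S>(1+\eta)(Z_S-c_A)_+^2/(2\log(1/\varepsilon))$, the penalty exceeds $\varepsilon^{1-\eta'}$, so this set has $\Pone$-probability $o(1)$. Its contribution to $\Eone[(A-S)\delta]$ is then $o(1/\log(1/\varepsilon))$, using uniform integrability of $(Z_S-c_A)^2$ and a crude bound when $Z_S$ is near $c_A$. On the complementary set, $A-S\le(1+\eta)(Z_S-c_A)_+^2/(2\log(1/\varepsilon))$.

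It remains to show
\[
\Eone\bigl[(Z_S-c_A)_+^2\,\delta\bigr]\le\Eone\bigl[(Z_A-c_A)_+^2\bigr]+o(1).
\]
This holds because $(Z_s-c_A)_+^2$ is a $\Pone$-submartingale and $\delta\le1$. The delicate points are uniformity of the Gaussian-tail estimate when $A-S$ is not small, and the randomization; I expect to handle both by first conditioning on $U_{\rm rnd}$ and by splitting on $\{A-S\ge\log(1/\varepsilon)^{-1/2}\}$.
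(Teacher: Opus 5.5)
Your upper bound follows the paper's own argument: the Neyman--Pearson deficit is at most $\pi_\alpha(A)-(1-\beta)=O(\varepsilon)$; it dominates the conditional penalty $\mathfrak q(u,x)=\Eone[(e^{c_A-Z_A}-1)_+\mid\mathcal G_S]$ with $u=A-S$, $x=Z_S-c_A$; and you split according to whether $u\log(1/\varepsilon)$ exceeds about $x_+^2/2$. Your closing step is cleaner than the paper's. Since $(Z_s-c_A)_+^2$ is a $\Pone$-submartingale, optional sampling gives $\Eone[\delta(Z_S-c_A)_+^2]\le\Eone[(Z_A-c_A)_+^2]$ with no error term. The paper instead shows $\Eone[\delta|Z_A-Z_S|^k]\to0$, because it treats every exponent $\rho_{\rm R}>0$, including $\rho_{\rm R}<1/2$, where $x_+^{2\rho_{\rm R}}$ is not convex. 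One repair is needed. The penalty does not exceed $\varepsilon^{1-\eta'}$ on the whole bad set, because it degenerates as $u\downarrow0$; for example $\mathfrak q(u,0)\sim\sqrt{u/(2\pi)}$, and $x=0$ lies in the bad set for every $u>0$. Keep the factor $u$ in the lower bound, $\mathfrak q(u,x)\ge c_\eta u\,e^{-(1+\eta)x_+^2/(2u)}$, as the paper does. Then $\Eone[\delta u\mathbf 1_{\mathrm{bad}}]=O(\varepsilon^{\eta'})$ follows directly. Splitting off $\{u\le(\log(1/\varepsilon))^{-2}\}$ also works, but your split at $(\log(1/\varepsilon))^{-1/2}$ does not give $o(1/\log(1/\varepsilon))$.

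The genuine gap is the lower-bound construction: as stated, your rule is not level $\alpha$. Since $b(s)\downarrow c_A$ as $s\uparrow A$ and $Z$ is continuous, every path with $Z_A>c_A$ crosses $b$ strictly before $A$. The rule therefore deploys on all of $\{Z_A>c_A\}$, which already has $\Pzero$-probability exactly $\alpha$, and also on paths that cross and then finish below $c_A$. The terminal threshold $c'_A$ is never consulted, because non-crossing paths end at or below $c_A$, so raising it cannot restore the level. The $\varepsilon$-order slack sits in the power, and adjusting a threshold the rule never uses cannot move it into the size. Your conditional-error phrasing has the same defect: optional sampling for the $\Pzero$-martingale $D_t=\Pzero(Z_A<c_A\mid\mathcal G_t)$ gives size exactly $\alpha/(1-\varepsilon^{1+\eta})>\alpha$.

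The fix is to anchor the boundary strictly above $c_A$, for instance at the power-calibrated threshold $\widetilde c_A=A/2-\sqrt A\,z_{1-\beta}$. This satisfies $\widetilde c_A-c_A\sim\varepsilon/2$ and $\widetilde c_{A_*}=c_{A_*}$. With $b(s)=\widetilde c_A+\sqrt{2(1+\eta)(A-s)\log(1/\varepsilon)}$, every path ending above $\widetilde c_A$ crosses, so the power is at least $\Pone(Z_A\ge\widetilde c_A)=1-\beta$. The size is at most $\Pzero(Z_A\ge\widetilde c_A)+\varepsilon^{1+\eta}e^{O(\sqrt{\log(1/\varepsilon)})}=\alpha-\Theta(\varepsilon)+o(\varepsilon)\le\alpha$. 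Your overshoot asymptotics then go through, provided you exhibit a dominating variable like the paper's $M_{\rm dom}$. The paper's version is sharper. It stops when $\Pzero(Z_A\ge\widetilde c_A\mid\mathcal G_t)$ first reaches $\bar\alpha_A=\Pzero(Z_A\ge\widetilde c_A)/\alpha$, so optional sampling gives size exactly $\alpha$. Because $1-\bar\alpha_A\asymp\varepsilon$, the crossing identity carries $[\Phi^{-1}(\bar\alpha_A)]^2\sim2\log(1/\varepsilon)$ directly, with no $\eta$ to remove.
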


The constant has a direct Gaussian interpretation. For a standard normal
dummy variable $N_{\rm G}$,
\[
\mathcal K_{\alpha,\beta}
=\frac{A_*}{2}\mathbb E[(z_{1-\beta}-N_{\rm G})_+^2].
\]
The positive part measures the squared terminal likelihood margin on paths
that can support an earlier certificate; the expectation averages that
margin across the alternative distribution. The constant measures information time in the canonical experiment; cash value follows from the payoff maps of Section~\ref{sharp-economic-capacity}.

For $\alpha=0.05$ and power $0.90$,
\[
\mathcal K_{0.05,0.10}=11.1460491583\ldots.
\]
Appendix~\ref{app:sharp-capacity-proof} proves the universal upper bound and the attaining construction.  The fixed-time benchmark is $(1-\beta)\varepsilon$, so the adaptive-to-fixed-time value ratio diverges at rate
\[
\frac{1}{\varepsilon\log(1/\varepsilon)}.
\]

\begin{table}[ht]
\centering\small
\begin{tabular}{@{}rrrr@{}}\toprule
$\varepsilon$ & Interim fraction $t/A$ & Two-look value & Fixed-time value\\\midrule
$10^{-1}$ & $0.50$ & $1.585621$ & $9\times10^{-2}$\\
$10^{-2}$ & $0.80$ & $1.001639$ & $9\times10^{-3}$\\
$10^{-3}$ & $0.80$ & $0.788752$ & $9\times10^{-4}$\\
$10^{-4}$ & $0.80$ & $0.610641$ & $9\times10^{-5}$\\
$10^{-5}$ & $0.80$ & $0.466293$ & $9\times10^{-6}$\\
$10^{-6}$ & $0.90$ & $0.386198$ & $9\times10^{-7}$\\\bottomrule
\end{tabular}
\caption{Residual information-time value at $\alpha=0.05$, $\beta=0.10$ and $A=A_*+\varepsilon$.  The two-look column reports feasible policies selected from five interim fractions; it is a lower bound on unrestricted capacity.  The fixed-time column tests at $A_*$.}
\label{tab:near-critical-two-look}
\end{table}

The sharp law concerns a deterministic information horizon and allows the preterminal monitoring date to move with $\varepsilon$.  Fixed monitoring grids and positive implementation leads define different local problems; Appendix~\ref{app:monitoring-scope} records the corresponding boundary shift.

\begin{figure}[ht]
\centering
\includegraphics[width=0.94\linewidth]{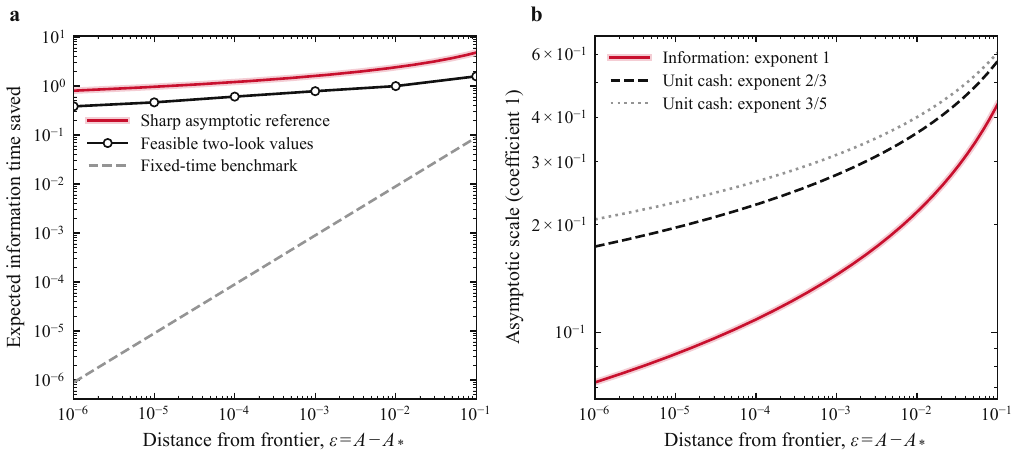}
\caption{Near-frontier information and cash-value scales. Left: the sharp
asymptotic reference $\mathcal K_{\alpha,\beta}/\log(1/\varepsilon)$,
feasible two-look values from Table~\ref{tab:near-critical-two-look}, and
the fixed-time benchmark, at $\alpha=0.05$ and $\beta=0.10$.
The reference is not a computed finite-horizon optimum. Right:
logarithmic scales with coefficients normalized to one; the cash
exponents require the uniform reward assumptions in
Section~\ref{sharp-economic-capacity}. The exponent $(p+1)/(2p+1)$ is $2/3$ for a transverse crossing ($p=1$) and $3/5$ for a quadratic tangency ($p=2$).}
\label{fig:nearcritical}
\end{figure}

\subsection{A general information lower bound}\label{a-generic-fixed-power-converse}

\begin{proposition}[Information lower bound for reliable certification]
Suppose $A>0$ and $1-\beta\ge\alpha$. Whenever the reliable constraint set is nonempty, its capacity satisfies
\[
\frac{A-c_{\alpha,\beta}(A)}{2}
\ge \operatorname{kl}(1-\beta,\alpha).
\]
Hence
\[
\frac{\mathsf{CAC}_{\alpha,\beta}(A)}{V^{PI}}
\le
1-\frac{\operatorname{kl}(1-\beta,\alpha)}{\mathcal I_{\rm life}}.
\]
Feasibility requires the right-hand side to be nonnegative; otherwise no rule satisfies the declared power and type-I targets.
\end{proposition}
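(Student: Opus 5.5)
Fix any feasible rule $(S,\delta)$ with $S\in\mathcal T_A$, $\delta$ $\mathcal G_S$-measurable, $\Pzero(\delta=1)\le\alpha$ and $\Pone(\delta=1)\ge1-\beta$. The plan is to bound the information the rule has seen by time $S$ from below by $\operatorname{kl}(1-\beta,\alpha)$ using data processing. Then we convert that into an upper bound on $\Eone[(A-S)\delta]$ using the identity $\KL(\Pone^S\Vert\Pzero^S)=\tfrac12\Eone[S]$ from the first theorem of Section~\ref{economic-value-information-clock}. Finally we take the supremum over rules. The randomization variable $U_{\rm rnd}$ has the same law under both hypotheses and is independent of the path, so it contributes no likelihood factor. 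Hence the stopped likelihood ratio on $\mathcal G_S$ is still $e^{Z_S}$, and that identity applies verbatim to the augmented filtration.

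\textbf{Step 1 (data processing).} Since $\delta$ is $\mathcal G_S$-measurable, the map $\omega\mapsto\delta(\omega)$ pushes $\Pone^S,\Pzero^S$ forward to Bernoulli laws with parameters $p_1=\Pone(\delta=1)\ge1-\beta$ and $p_0=\Pzero(\delta=1)\le\alpha$. Data processing gives
\[
\tfrac12\Eone[S]=\KL(\Pone^S\Vert\Pzero^S)\ge\operatorname{kl}(p_1,p_0).
\]
\textbf{Step 2 (monotonicity of $\operatorname{kl}$).} We use the standard fact that $\operatorname{kl}(p,q)$ is nondecreasing in $p$ and nonincreasing in $q$ on the region $p\ge q$. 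Since $p_1\ge1-\beta\ge\alpha\ge p_0$, this gives $\operatorname{kl}(p_1,p_0)\ge\operatorname{kl}(1-\beta,\alpha)$, which holds also at boundary values through the lower-semicontinuous extension. The one point needing care is the ordering $p_1\ge p_0$, and it holds because of the nondegenerate-regime hypothesis $1-\beta\ge\alpha$. I expect this step to be the main (mild) obstacle: the monotonicity must be justified by the sign of $\partial_p\operatorname{kl}=\log\frac{p(1-q)}{q(1-p)}\ge0$ for $p\ge q$, and symmetrically for $q$.

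\textbf{Step 3 (from information to capacity).} Since $0\le S\le A$ and $\delta\in\{0,1\}$, we have $(A-S)\delta\le A-S$. Hence
\[
\Eone[(A-S)\delta]\le A-\Eone[S]\le A-2\operatorname{kl}(1-\beta,\alpha).
\]
Taking the supremum over the nonempty feasible set gives $c_{\alpha,\beta}(A)\le A-2\operatorname{kl}(1-\beta,\alpha)$, which is the first display after rearranging. For the ratio form we multiply by $\sigma^2/(2\gamma)$ and divide by $V^{PI}=\frac{\sigma^2}{2\gamma}A$, using $\mathcal I_{\rm life}=A/2$ from the same theorem. This yields $\mathsf{CAC}_{\alpha,\beta}(A)/V^{PI}\le1-\operatorname{kl}(1-\beta,\alpha)/\mathcal I_{\rm life}$.

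\textbf{Step 4 (feasibility remark).} Step 1 shows that any feasible rule satisfies $A\ge\Eone[S]\ge2\operatorname{kl}(1-\beta,\alpha)$. So if the right-hand side of the ratio bound is negative, the constraint set must be empty, which is the final sentence of the proposition.
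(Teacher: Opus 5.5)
Your proof is correct and follows the paper's argument: data processing from the stopped $\sigma$-field $\mathcal G_S$ to the binary decision, the identity $\KL(\Pone^S\Vert\Pzero^S)=\tfrac12\Eone[S]$, monotonicity of $\operatorname{kl}$ on the region $p\ge1-\beta\ge\alpha\ge q$, and a supremum over feasible rules. The only cosmetic difference is that you use $(A-S)\delta\le A-S$, whereas the paper sets $S=A$ on nondeployment to get the equality $R=A-\Eone S$; both give the same bound.
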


\begin{proof}
Take any feasible rule and write \(R=\Eone[(A-S)\delta]\),
\(p=\Pone(\delta=1)\), and \(q=\Pzero(\delta=1)\). Setting \(S=A\) on
nondeployment gives
\[
R=A-\Eone S.
\]
The stopped likelihood ratio is \(e^{Z_S}\), so data processing from
the stopped path to the binary decision \(\delta\) yields
\[
\frac12\Eone S
=\KL(\Pone^S\Vert\Pzero^S)
\ge \operatorname{kl}(p,q)
\ge \operatorname{kl}(1-\beta,\alpha),
\]
where the last inequality uses \(p\ge1-\beta\ge\alpha\ge q\) and
the monotonicity of binary relative entropy on this region. Hence every
feasible rule satisfies
\(R\le A-2\operatorname{kl}(1-\beta,\alpha)\). Taking the supremum
over feasible rules proves the result.
\end{proof}

The exact Gaussian threshold gives the feasibility boundary in the canonical
experiment. The KL inequality extends the information-budget argument beyond
that model and also bounds the economic value consumed by certification.

\section{Economic boundary and the value of information}\label{economic-nulls-and-information-geometry}

Value-of-information methods evaluate evidence through the decision it changes \cite{jaimungalshi2024,lehrerwang2024,anand2010,zhong2026}. Let $b_t<\theta_t$ be a predictable economic-boundary loading and impose the joint-likelihood and Girsanov assumptions of Section~\ref{probabilistic-setup} for the gap $\theta-b$. The initial observation and predictor add no separate likelihood factor. For the corresponding laws $\mathbb P_\theta,\mathbb P_b$,
\[
A_t^{\theta,b}=\sigma^{-2}\int_0^t(\theta_u-b_u)^2X_u^2\,du,
\qquad \KL(\mathbb P_\theta^t\Vert\mathbb P_b^t)=\tfrac12\mathbb E_\theta A_t^{\theta,b}.
\]
Remaining flat during learning forgoes $\theta_t^2X_t^2dt/(2\gamma)$, so its price per KL nat is $\Pi_t^{\rm flat}=(\sigma^2/\gamma)[\theta_t/(\theta_t-b_t)]^2$, which diverges near the boundary. The statistical experiment itself remains regular.

\begin{theorem}[Value relative to the economic boundary]\label{thm:boundary-relative-value}
At a fixed time, for $u(q;\theta)=q\theta X-\gamma q^2/2$ with $\gamma>0$, the optima are $q_\theta=\theta X/\gamma$ and $q_b=bX/\gamma$. Their regret under the true loading is
\[
[u(q_\theta;\theta)-u(q_b;\theta)]dt
=\frac{(\theta-b)^2X^2}{2\gamma}\,dt
=\frac{\sigma^2}{2\gamma}\,dA^{\theta,b}.
\]
The boundary-relative price per KL nat is therefore $\Pi^{\rm relative}=\sigma^2/\gamma$, independent of the gap.
\end{theorem}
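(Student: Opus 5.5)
The argument is a direct pointwise calculation at a fixed calendar time, followed by a comparison with the boundary-relative information clock $A^{\theta,b}$ defined just above the statement. First I will find the optima. The objective $u(q;\theta)=q\theta X-\gamma q^2/2$ is strictly concave in $q$ because $\gamma>0$. The first-order condition gives $q_\theta=\theta X/\gamma$. Applying the same computation with $b$ in place of $\theta$ gives $q_b=bX/\gamma$.

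Next I will compute the regret of playing the boundary-optimal position under the true loading. Since $u(\cdot;\theta)$ is an exact quadratic, completing the square gives
\[
u(q;\theta)=u(q_\theta;\theta)-\frac{\gamma}{2}(q-q_\theta)^2 .
\]
Setting $q=q_b$ yields
\[
u(q_\theta;\theta)-u(q_b;\theta)=\frac{\gamma}{2}\cdot\frac{(\theta-b)^2X^2}{\gamma^2}=\frac{(\theta-b)^2X^2}{2\gamma}.
\]
Expanding $u(q_\theta;\theta)=\theta^2X^2/(2\gamma)$ and $u(q_b;\theta)=(2b\theta-b^2)X^2/(2\gamma)$ directly gives the same result, and serves as a check.

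Then I will multiply by $dt$ and use the definition $dA^{\theta,b}_t=\sigma^{-2}(\theta_t-b_t)^2X_t^2\,dt$. This gives $(\theta-b)^2X^2\,dt/(2\gamma)=\frac{\sigma^2}{2\gamma}\,dA^{\theta,b}$. It is the same bookkeeping that produced $V^{PI}=\frac{\sigma^2}{2\gamma}A$ in the first theorem, with $\theta$ replaced by the gap $\theta-b$.

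For the price per KL nat, I will invoke $\KL(\mathbb P_\theta^t\Vert\mathbb P_b^t)=\tfrac12\mathbb E_\theta A^{\theta,b}_t$, stated just before the theorem; it is the analogue of the first theorem applied to the gap. Each unit of $dA^{\theta,b}$ therefore corresponds to $1/2$ nat of expected information. The regret per nat is then
\[
\frac{\sigma^2}{2\gamma}\Big/\frac12=\frac{\sigma^2}{\gamma}.
\]
All dependence on $\theta$, $b$ and $X$ has cancelled, which gives the claimed gap independence.

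There is no real obstacle. The only point needing care is interpretive: the price is a ratio of rates. Both the regret rate and the information rate are proportional to $(\theta-b)^2X^2$. The ratio is therefore well defined wherever $X_t\neq0$, and the identity holds trivially, as $0=0$, where $X_t=0$. I will state it as an identity between the two differentials, as the theorem does.
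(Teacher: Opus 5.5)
Your proposal is correct and follows the paper's own proof: complete the square in $u$ to obtain the regret $(\theta-b)^2X^2/(2\gamma)$, then divide by the Gaussian drift-change KL rate $(\theta-b)^2X^2/(2\sigma^2)$, which is $\tfrac12\,dA^{\theta,b}/dt$, to get $\sigma^2/\gamma$. Your added check of the regret by direct expansion and your remark that the ratio is well defined because $\theta>b$ and $X\neq0$ are fine and consistent with the paper.
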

\begin{proof} Complete the square in $u$ and divide the regret rate by the Gaussian drift-change KL rate $(\theta-b)^2X^2/(2\sigma^2)$.\end{proof}
The conversion depends on the admissible pre-certification policy: $\Pi^{\rm flat}$ applies to zero exposure, while the analysis below uses the boundary-relative benchmark.

\subsection{Several assets: value per unit of information}\label{multivariate-economicinformation-spectrum}
For $dY_t=\mu\,dt+\Sigma^{1/2}dB_t$ and $u(q;\mu)=q^\top\mu-q^\top\Gamma q/2$, assume $\Sigma,\Gamma\in\mathbb S_{++}^d$. The true and boundary path laws share $\Sigma$; put $\Delta=\mu-b$ and $\Psi=\Sigma^{1/2}\Gamma^{-1}\Sigma^{1/2}$.
\begin{theorem}[Multivariate value--information bound]
For the boundary position $q_b=\Gamma^{-1}b$ relative to $q_\mu=\Gamma^{-1}\mu$, the regret and KL rates are $dR=\Delta^\top\Gamma^{-1}\Delta\,dt/2$ and $d\mathcal I=\Delta^\top\Sigma^{-1}\Delta\,dt/2$. For $\Delta\ne0$,
\[
\frac{dR}{d\mathcal I}
=\frac{\Delta^\top\Gamma^{-1}\Delta}{\Delta^\top\Sigma^{-1}\Delta}
\in[\lambda_{\min}(\Psi),\lambda_{\max}(\Psi)].
\]
The ratio is direction-independent exactly when $\Psi=cI_d$ for $c>0$, equivalently $\Gamma=c^{-1}\Sigma$; then it equals $c$.
\end{theorem}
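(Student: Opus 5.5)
The plan is to compute the two rates directly by completing the square, exactly as in Theorem~\ref{thm:boundary-relative-value}. Then the eigenvalue bounds will follow from a change of variables that turns the ratio into a Rayleigh quotient of $\Psi$.

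\emph{Regret rate.} Since $\nabla_q u=\mu-\Gamma q$, the optimiser is $q_\mu=\Gamma^{-1}\mu$. Because $u(\cdot;\mu)$ is exactly quadratic with Hessian $-\Gamma$, for any $q$
\[
u(q_\mu;\mu)-u(q;\mu)=\tfrac12(q-q_\mu)^\top\Gamma(q-q_\mu).
\]
Substituting $q=q_b$ gives $q_b-q_\mu=-\Gamma^{-1}\Delta$, so the regret rate is $\tfrac12\Delta^\top\Gamma^{-1}\Delta$ per unit time.

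\emph{Information rate.} The two path laws share the diffusion matrix $\Sigma$ and differ only in drift. Multivariate Girsanov then gives the log-likelihood drift contribution $\tfrac12\Delta^\top\Sigma^{-1}\Delta\,dt$ under the true law. This is the same computation as in Section~\ref{probabilistic-setup}, applied to the whitened process $\Sigma^{-1/2}Y$. Hence $d\mathcal I=\tfrac12\Delta^\top\Sigma^{-1}\Delta\,dt$, and the factors of $\tfrac12$ cancel in the ratio.

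\emph{Eigenvalue bounds.} For $\Delta\ne0$, substitute $v=\Sigma^{-1/2}\Delta\ne0$. Then $\Delta^\top\Sigma^{-1}\Delta=v^\top v$ and $\Delta^\top\Gamma^{-1}\Delta=v^\top\Sigma^{1/2}\Gamma^{-1}\Sigma^{1/2}v=v^\top\Psi v$. The ratio is therefore the Rayleigh quotient $v^\top\Psi v/v^\top v$. The matrix $\Psi$ is symmetric positive definite, being congruent to $\Gamma^{-1}$. By Rayleigh--Ritz the quotient lies in $[\lambda_{\min}(\Psi),\lambda_{\max}(\Psi)]$, with each endpoint attained at the corresponding eigenvector. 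Since $v\mapsto\Sigma^{1/2}v$ is a bijection of $\mathbb R^d\setminus\{0\}$, every direction $v$ arises from some $\Delta$.

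\emph{Characterisation of direction independence.} The quotient is constant on $\mathbb R^d\setminus\{0\}$ if and only if $\lambda_{\min}(\Psi)=\lambda_{\max}(\Psi)=c$. For a symmetric matrix this holds if and only if $\Psi=cI_d$, and $c>0$ by positive definiteness. Finally, $\Sigma^{1/2}\Gamma^{-1}\Sigma^{1/2}=cI_d$ is equivalent to $\Gamma^{-1}=c\,\Sigma^{-1}$, that is $\Gamma=c^{-1}\Sigma$, and then the ratio equals $c$.

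There is no real obstacle. The only care point is to note that the surjectivity of the reparametrisation $v=\Sigma^{-1/2}\Delta$ is what makes the interval sharp and gives the ``exactly when'' direction.
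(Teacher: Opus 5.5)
Your proposal is correct and follows the paper's own argument: complete the square for the regret, use Gaussian Girsanov for the KL rate, and whiten with $x=\Sigma^{-1/2}\Delta$ to turn the ratio into the Rayleigh quotient of $\Psi$. Your remarks on attainment of the eigenvalue endpoints and on the bijectivity of the substitution spell out what the paper's one-line proof leaves implicit for the ``exactly when'' direction.
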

\begin{proof} Complete the square for regret and use Gaussian Girsanov for KL. With $x=\Sigma^{-1/2}\Delta$, the ratio is the Rayleigh quotient $x^\top\Psi x/(x^\top x)$; it is constant in every direction exactly when $\Psi=cI_d$.\end{proof}

\section{Alpha Survival Frontiers}\label{alpha-survival-frontiers}

Alpha decay has been studied both as a post-discovery empirical phenomenon and as an input to trading and strategy-durability models \cite{penasse2022,masmith2025,varma2026,zulfiqar2026,alexanderfabozzi2026}. The object here is the amount of statistical information that remains before the signal reaches its economic boundary.  For the local terminal asymptotics in this section, take the drift and covariance paths to be deterministic; the same formulas hold pathwise on any realization for which the stated expansions are valid.  Let the observed $d$-dimensional return channel, under the true market model and an economic-boundary model, differ by drift gap
\[
\Delta_t=\mu_t-b_t\in\mathbb R^d,
\]
with instantaneous noise covariance $\Sigma_t\in\mathbb S_{++}^d$. Throughout this section, a superscript $(b)$ on an information quantity means that KL divergence is measured relative to the economic-boundary model $b=(b_t)$. Let \(T\) be
the end of the opportunity's economic lifetime. Assume that as \(t\uparrow T\), for some
\(p>0\) and nonzero \(c\in\mathbb R^d\), \[
\Delta_t=c(T-t)^p+o((T-t)^p),
\qquad
\Sigma_t\to\Sigma_T\succ0.
\] Define remaining KL information from time \(t\) until death by \[
\mathcal I_{\rm rem}^{(b)}(t)
:=\frac12\int_t^T
\Delta_s^\top\Sigma_s^{-1}\Delta_s\,ds.
\]

\begin{theorem}[Remaining information near the economic endpoint]\label{thm:terminal-collapse}
Under the assumptions above,
\[
\mathcal I_{\rm rem}^{(b)}(t)
=\frac{c^\top\Sigma_T^{-1}c}{2(2p+1)}(T-t)^{2p+1}
+o((T-t)^{2p+1}).
\]
In particular, for a transverse crossing $p=1$,
\[
\mathcal I_{\rm rem}^{(b)}(t)
\sim \frac{c^\top\Sigma_T^{-1}c}{6}(T-t)^3.
\]
\end{theorem}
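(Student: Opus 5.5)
The plan is a direct local expansion of the integrand near the endpoint followed by term-by-term integration. Write $h_s:=T-s$. By hypothesis, $\Delta_s=c\,h_s^p+r_s$ with $|r_s|=o(h_s^p)$ as $s\uparrow T$. Also $\Sigma_s\to\Sigma_T\succ0$, so by continuity of matrix inversion on $\mathbb S_{++}^d$ we have $\Sigma_s^{-1}=\Sigma_T^{-1}+E_s$ with $\|E_s\|\to0$. In particular $\|\Sigma_s^{-1}\|$ is bounded on some interval $(T-\eta,T)$.

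First, I will expand the quadratic form:
\[
\Delta_s^\top\Sigma_s^{-1}\Delta_s
= h_s^{2p}\,c^\top\Sigma_T^{-1}c
+ h_s^{2p}\,c^\top E_s c
+ 2h_s^p\,c^\top\Sigma_s^{-1}r_s
+ r_s^\top\Sigma_s^{-1}r_s .
\]
Each of the last three terms is $o(h_s^{2p})$:
\begin{itemize}
\item the second because $\|E_s\|\to0$;
\item the third because it is bounded by $2|c|\,\|\Sigma_s^{-1}\|\,h_s^p|r_s|$, with $\|\Sigma_s^{-1}\|$ bounded and $|r_s|=o(h_s^p)$;
\item the fourth because it is bounded by $\|\Sigma_s^{-1}\||r_s|^2$.
\end{itemize}
Hence, for every $\epsilon>0$ there is $\eta>0$ such that
\[
\bigl|\Delta_s^\top\Sigma_s^{-1}\Delta_s-c^\top\Sigma_T^{-1}c\,h_s^{2p}\bigr|\le\epsilon h_s^{2p}
\quad\text{for } s\in(T-\eta,T).
\]

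Second, I will integrate from $t$ to $T$ with $T-t<\eta$. The identity $\int_t^T(T-s)^{2p}\,ds=(T-t)^{2p+1}/(2p+1)$ gives
\[
\Bigl|\mathcal I_{\rm rem}^{(b)}(t)-\frac{c^\top\Sigma_T^{-1}c}{2(2p+1)}(T-t)^{2p+1}\Bigr|
\le\frac{\epsilon}{2(2p+1)}(T-t)^{2p+1}.
\]
Since $\epsilon$ is arbitrary, this is exactly the claimed $o((T-t)^{2p+1})$ remainder. The integrability of $\Delta^\top\Sigma^{-1}\Delta$ near $T$ needed for $\mathcal I_{\rm rem}^{(b)}$ to be finite follows from the same domination bound. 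Measurability is automatic because the paths are deterministic; the pathwise version follows by applying the same argument realization by realization.

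Third, the case $p=1$ is a specialization: the leading coefficient is $c^\top\Sigma_T^{-1}c/6$. Because $c\ne0$ and $\Sigma_T^{-1}\succ0$, this constant is strictly positive, so the "$\sim$" statement is legitimate: the ratio of $\mathcal I_{\rm rem}^{(b)}(t)$ to the leading term tends to one.

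The only delicate step is controlling the cross term uniformly. This requires a uniform bound on $\|\Sigma_s^{-1}\|$ near $T$, which I obtain from $\Sigma_s\to\Sigma_T\succ0$: the eigenvalues of $\Sigma_s$ eventually stay above $\lambda_{\min}(\Sigma_T)/2$. Everything else is routine.
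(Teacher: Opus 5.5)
Your argument is correct and follows the paper's proof: expand the integrand as $(T-s)^{2p}c^\top\Sigma_T^{-1}c+o((T-s)^{2p})$ using the drift expansion and continuity of matrix inversion at $\Sigma_T\succ0$, then integrate from $t$ to $T$. You additionally spell out what the paper's two-line proof leaves implicit: the uniform control of the cross and remainder terms through a bound on $\|\Sigma_s^{-1}\|$ near $T$, and the observation that $c\neq0$ is what makes the $p=1$ asymptotic equivalence meaningful.
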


\begin{proof}
Put \(u=T-s\). By the assumed expansion and continuity
of matrix inversion on the positive-definite cone, \[
\Delta_{T-u}^\top\Sigma_{T-u}^{-1}\Delta_{T-u}
=u^{2p}\,c^\top\Sigma_T^{-1}c+o(u^{2p}).
\] Integrating from \(u=0\) to \(u=T-t\) gives \[
\mathcal I_{\rm rem}^{(b)}(t)
=\frac12 c^\top\Sigma_T^{-1}c
\frac{(T-t)^{2p+1}}{2p+1}
+o((T-t)^{2p+1}).
\] \end{proof}

\paragraph{Economic value and the latest start.}\label{economic-value-has-the-same-exponent}\label{latest-feasible-start}
Put $V_{\rm rem}^{\mu:b}(t):=\tfrac12\int_t^T\Delta_s^\top\Gamma_s^{-1}\Delta_s\,ds$. For quadratic economic curvature $\Gamma_t\to\Gamma_T\succ0$, completing the square and integrating as above gives $V_{\rm rem}^{\mu:b}(t)\sim[c^\top\Gamma_T^{-1}c/(2(2p+1))](T-t)^{2p+1}$. Thus $V_{\rm rem}^{\mu:b}/\mathcal I_{\rm rem}^{(b)}\to(c^\top\Gamma_T^{-1}c)/(c^\top\Sigma_T^{-1}c)$. If $\Delta_{T-u}=cu^p$ and $\Sigma$ is constant on the terminal interval, the information law is exact. With $\mathcal I_*=[z_{1-\alpha}+z_{1-\beta}]^2/2$, a newly started experiment needs at least
\[
\ell_*=[2(2p+1)\mathcal I_* /(c^\top\Sigma^{-1}c)]^{1/(2p+1)}
\]
units of remaining time, provided this boundary lies in the stated interval. For a smooth crossing this gives the local asymptotic scale of the latest-start boundary; the global boundary depends on the full path geometry. An experiment already running also carries its likelihood state and remaining error budget.

\paragraph{Positive economic hurdle.}\label{exponential-decay-with-a-positive-economic-hurdle}\label{cubic-near-hurdle-collapse}
For $\mu_t=\mu_0e^{-\lambda_{\rm dec}t}$, $0<b<\mu_0$, $\lambda_{\rm dec}>0$, death occurs at $T=\lambda_{\rm dec}^{-1}\log(\mu_0/b)$. Writing $r_b=b/\mu_0$, direct integration against boundary drift $b$ gives
\[
\mathcal I_{\rm life}^{(b)}=\frac{\mu_0^2}{2\sigma^2\lambda_{\rm dec}}
\left[\tfrac12-2r_b+\tfrac32r_b^2+r_b^2\log(1/r_b)\right]
\sim\frac{\mu_0^2(1-r_b)^3}{6\sigma^2\lambda_{\rm dec}}
\quad(r_b\uparrow1).
\]
The bracket expands as $(1-r_b)^3/3+O((1-r_b)^4)$. Locally $\mu_{T-u}-b\sim b\lambda_{\rm dec}u$, so Theorem~\ref{thm:terminal-collapse} also gives $\mathcal I_{\rm rem}^{(b)}(T-\ell)\sim b^2\lambda_{\rm dec}^2\ell^3/(6\sigma^2)$.

\subsection{Zero-hurdle exponential model: the Alpha Survival
Frontier}\label{zero-hurdle-exponential-model-the-alpha-survival-frontier}

If the economic boundary is zero and \(\mu_t=\mu_0e^{-{\lambda_{{\rm dec}}} t}\) for all
\(t\ge0\), the calendar horizon is unbounded while total lifetime information
remains finite: \[
\mathcal I_{\rm life}
=\frac1{2\sigma^2}\int_0^\infty\mu_0^2e^{-2{\lambda_{{\rm dec}}} t}dt
=\frac{\mu_0^2}{4\sigma^2{\lambda_{{\rm dec}}}}.
\] Let \[
\mathsf S:=\frac{\mu_0}{\sigma},
\qquad
h_{1/2}:=\frac{\log2}{{\lambda_{{\rm dec}}}}
\] be the instantaneous Sharpe (with time units fixed consistently) and
alpha half-life. Then \[
\mathcal I_{\rm life}
=\frac{\mathsf S^2h_{1/2}}{4\log2}.
\] Reliable certification is feasible only if \[
\frac{\mathsf S^2h_{1/2}}{4\log2}
\ge
\frac12[z_{1-\alpha}+z_{1-\beta}]^2,
\] or equivalently \[
h_{1/2}\ge h_{\min}(\mathsf S;\alpha,\beta)
:=\frac{2\log2}{\mathsf S^2}
[z_{1-\alpha}+z_{1-\beta}]^2.
\] We refer to this boundary as the canonical \emph{Alpha Survival Frontier}.

For \(\alpha=0.05\), power \(1-\beta=0.90\):

\begin{center}
\begin{tabular}{@{}rr@{}}
\toprule
instantaneous Sharpe $\mathsf S$ & minimum half-life \\
\midrule
1.0 & 11.872 \\
1.5 & 5.276 \\
2.0 & 2.968 \\
3.0 & 1.319 \\
\bottomrule
\end{tabular}
\end{center}

These thresholds use the calendar-time unit implicit in the scaling of $\mathsf S$ and are implied by the stated exponential-decay model.

\subsection{Multiple testing and search breadth}\label{multiple-search-tax}

Large-scale strategy search creates the familiar multiplicity and selection problem in empirical finance \cite{harveyliu2016,baileyloprado2014,baileyborwein2017}. Suppose $M\in\mathbb N$ candidate signals are screened under familywise
false-deployment budget $\alpha_F\in(0,1)$. Under Bonferroni, assign \[
\alpha_M=\frac{\alpha_F}{M}.
\] The canonical information threshold for one selected signal is then
\[
\mathcal I_{\rm crit}^{(M)}
=\frac12\left[
 z_{1-\alpha_F/M}+z_{1-\beta}
\right]^2.
\] Under a Bonferroni allocation, this is the exact per-test Gaussian threshold. Bonferroni remains valid under arbitrary dependence; additional dependence information can support less conservative allocations.

The zero-hurdle exponential survival frontier becomes \[
 h_{\min}^{(M)}(\mathsf S)
 =\frac{2\log2}{\mathsf S^2}
 \left[z_{1-\alpha_F/M}+z_{1-\beta}\right]^2.
\] At \(\alpha_F=0.05\), power \(0.90\):

\begin{center}
\begin{tabular}{@{}rrrrr@{}}
\toprule
\(M\) & $h_{\min}$, $\mathsf S=1$ & $\mathsf S=1.5$ & $\mathsf S=2$ & $\mathsf S=3$ \\
\midrule
1 & 11.872 & 5.276 & 2.968 & 1.319 \\
10 & 20.627 & 9.168 & 5.157 & 2.292 \\
100 & 28.979 & 12.880 & 7.245 & 3.220 \\
1000 & 37.085 & 16.482 & 9.271 & 4.121 \\
\bottomrule
\end{tabular}
\end{center}

\begin{figure}[t]
\centering
\includegraphics[width=0.92\linewidth]{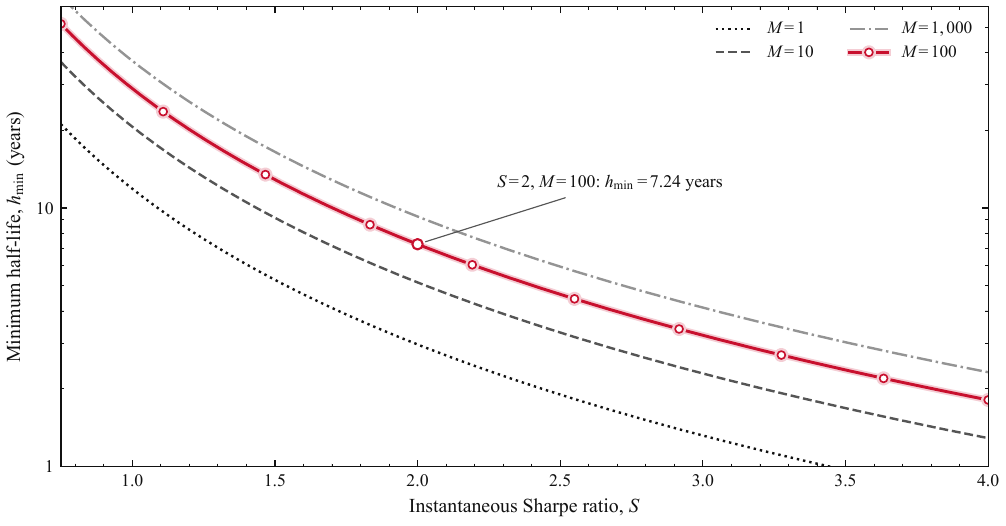}
\caption{Canonical Alpha Survival Frontier for familywise false-deployment level $0.05$ and power $0.90$. Each curve gives the minimum half-life required at a given instantaneous Sharpe for a Bonferroni search over $M$ candidates. The highlighted point is $\mathsf S=2$, $M=100$, for which the minimum half-life is $7.245$ years. The pale-red outline is a visual emphasis of the $M=100$ curve.}
\label{fig:survival-frontier}
\end{figure}

\paragraph{Search breadth and unequal budgets.}\label{maximum-search-breadth}\label{unequal-error-budget}
Inverting the one-candidate power envelope gives the minimum level
$\alpha_{\min}(\mathcal I,\beta)=\bar\Phi(\sqrt{2\mathcal I}-z_{1-\beta})$.
Hence $M_{\max}^{\rm Bonf}=\lfloor\alpha_F/\alpha_{\min}\rfloor$ (zero means no feasible candidate). Independent tests permit the Sidak inversion $M_{\max}^{\rm Sidak}=\lfloor\log(1-\alpha_F)/\log(1-\alpha_{\min})\rfloor$; Bonferroni needs no independence.
\begin{remark}[Heterogeneous Bonferroni feasibility]\label{thm:heterogeneous-bonferroni}
Within the separable union-bound architecture, candidate $j$ with information $\mathcal I_j\ge0$ and power $1-\beta_j$ needs $\alpha_j^*=\bar\Phi(\sqrt{2\mathcal I_j}-z_{1-\beta_j})$. All targets are feasible exactly when $\sum_j\alpha_j^*\le\alpha_F$: necessity follows from the marginal envelopes and sufficiency from the union bound. With equal candidate values, sorting these costs and taking the longest feasible prefix maximizes cardinality; heterogeneous values give a $0$--$1$ knapsack constraint.
\end{remark}
For fixed $\alpha_F,\beta$, Gaussian tail inversion gives $\mathcal I_{\rm crit}^{(M)}\sim\log M$ as $M\to\infty$.\label{large-search-asymptotics}

\subsection{A dimensionless feasibility ratio}\label{certifiability-ratio}
For a declared model, boundary and error allocation, define $\mathrm{CR}=\mathcal I_{\rm life}/\mathcal I_{\rm crit}(\alpha,\beta)$, using the declared per-test level under multiplicity. In the canonical experiment, $\mathrm{CR}<1$ is infeasible, $\mathrm{CR}=1$ is terminal feasibility, and $\mathrm{CR}>1$ permits reliable certification. This ratio is a design-feasibility diagnostic; statistical evidence and deployment decisions depend on the observed likelihood and the selected stopping rule. An ongoing experiment must retain its likelihood state $Z_s$ and original error allocation, as in Appendix~\ref{constrained-stopping-representation}. The terminal envelope $\pi_\alpha(\mathrm{CR}A_*)$ bounds the power of every rule satisfying the declared false-deployment bound; a stopping policy that preserves economic value can have lower power.

\section{Certification timing and post-certification cash value}\label{sharp-economic-capacity}

The capacity $c_{\alpha,\beta}(A)$ measures information time left at certification. A trading payoff is generally a function of that residual clock. The sharp near-critical law from Section~\ref{near-critical-capacity} transfers to payoff maps through their local dependence on residual information.

For $\rho_{\rm R}>0$ define the canonical residual-moment function
\begin{equation}\label{eq:residual-moment-function}
\mathcal M_{\alpha,\beta}(\rho_{\rm R})
:=A_*^{\rho_{\rm R}}
\mathbb E\!\left[(N_{\rm G}+z_{1-\beta})_+^{2\rho_{\rm R}}\right],
\qquad N_{\rm G}\sim N(0,1),
\end{equation}
where $A_*=A_{\rm crit}(\alpha,\beta)$ as in Section~\ref{near-critical-capacity}.
The reward extension is taken along a family of canonical experiments with
$A=A_*+\varepsilon\downarrow A_*$. Let $\mathcal R_A:[0,A]\to[0,\infty)$
be deterministic measurable residual rewards satisfying $\mathcal R_A(0)=0$,
uniformly bounded for $A$ in a right neighborhood of $A_*$. Assume that,
for some $a_0>0$ and $\rho_{\rm R}>0$,
\[
\mathcal R_A(u)/u^{\rho_{\rm R}}\longrightarrow a_0
\quad\text{uniformly as }A\downarrow A_*\text{ and }u\downarrow0.
\]
Appendix~\ref{app:general-residual-rewards} then gives
\begin{equation}\label{eq:general-residual-reward}
\sup_{(S,\delta)}\Eone\!\left[\mathcal R_A(A-S)\delta\right]
\sim
\frac{a_0\,\mathcal M_{\alpha,\beta}(\rho_{\rm R})}{[2\log(1/\varepsilon)]^{\rho_{\rm R}}},
\qquad \varepsilon\downarrow0,
\end{equation}
where the supremum is over the same reliable certification rules as in Section~\ref{reliable-deployment-and-certified-alpha-capacity}.

\subsection{Bounded exposure near economic death}\label{bounded-exposure-economic-death}

Use the scalar boundary-relative drift gap from Section~\ref{alpha-survival-frontiers}.  Suppose that, as $u\downarrow0$,
\[
\Delta_{T-u}=c\,u^p+o(u^p),
\qquad
\sigma_{T-u}\to\sigma_T>0,
\qquad c>0,
\]
with the same crossing order $p>0$ as in Theorem~\ref{thm:terminal-collapse}.  A unit position is opened only after certification and then held through the economic endpoint $T$.  Its expected boundary-relative cash reward from time $t$ is
\[
V_{\rm unit}(t):=\int_t^T\Delta_s\,ds.
\]
Theorem~\ref{thm:terminal-collapse} gives the residual information-time scale
\[
2\mathcal I_{\rm rem}^{(b)}(T-u)
\sim
\frac{c^2}{\sigma_T^2(2p+1)}u^{2p+1},
\]
while
\[
V_{\rm unit}(T-u)
\sim\frac{c}{p+1}u^{p+1}.
\]
Hence
\begin{equation}\label{eq:unit-reward-vs-information}
V_{\rm unit}
\sim K_p^{\rm unit}\,u_{\rm info}^{\rho_p},
\qquad
\rho_p:=\frac{p+1}{2p+1},
\qquad
K_p^{\rm unit}:=\frac{c}{p+1}
\left(\frac{\sigma_T^2(2p+1)}{c^2}\right)^{\rho_p},
\end{equation}
when $u_{\rm info}:=2\mathcal I_{\rm rem}^{(b)}$ denotes residual boundary-relative information time.
For the near-critical statement, consider a family with total information
$A\downarrow A_*$ whose terminal expansions are uniform and whose residual
reward maps satisfy the hypotheses above. The constants $c$ and $\sigma_T$
denote positive limiting terminal coefficients. Combining
\eqref{eq:general-residual-reward} and \eqref{eq:unit-reward-vs-information}
gives the sharp cash-value capacity
\begin{equation}\label{eq:cash-capacity-sharp}
\mathsf{CAC}^{\rm cash}_{\alpha,\beta}(A_*+\varepsilon)
\sim
\frac{K_p^{\rm unit}\,\mathcal M_{\alpha,\beta}(\rho_p)}
{[2\log(1/\varepsilon)]^{\rho_p}}.
\end{equation}
For a transverse crossing, $p=1$ and $\rho_p=2/3$, so the attainable post-certification cash value is of order
\[
[\log(1/\varepsilon)]^{-2/3}.
\]
The exponent differs from the information-time exponent because a bounded position accumulates $\int\Delta_t\,dt$, whereas the likelihood clock accumulates $\int\Delta_t^2/\sigma_t^2\,dt$.

For the exponential gap
\[
\Delta_t=\mu_0e^{-{\lambda_{{\rm dec}}} t}-b,
\qquad
T={\lambda_{{\rm dec}}}^{-1}\log(\mu_0/b),
\]
Section~\ref{exponential-decay-with-a-positive-economic-hurdle} has $p=1$ and $c=b{\lambda_{{\rm dec}}}$.  Formula \eqref{eq:cash-capacity-sharp} applies directly.

\subsection{A fixed deployment fee}\label{fixed-transaction-fees}

A fixed round-trip fee shifts the last economically useful certification time away from the zero-edge endpoint.  Consider the local linear gap
\[
\Delta_t=c(T-t),\qquad 0\le t\le T,
\]
and a fixed fee $F_{\rm fee}\in(0,cT^2/2)$.  The last entry time with nonnegative expected net cash reward is
\[
T_{\rm fee}=T-\ell_{\rm fee},
\qquad
\ell_{\rm fee}=\sqrt{\frac{2F_{\rm fee}}{c}}.
\]
Take constant volatility $\sigma>0$ and use the information horizon
ending at the last admissible entry date:
\[
A_{\rm fee}:=\frac1{\sigma^2}\int_0^{T_{\rm fee}}c^2(T-t)^2\,dt
=\frac{c^2}{3\sigma^2}(T^3-\ell_{\rm fee}^3).
\]
The near-critical fee law below uses $A_{\rm fee}=A_*+\varepsilon$.
If $u_{\rm fee}$ is the information time remaining to $T_{\rm fee}$, the exact net reward from certifying at that state is
\begin{equation}\label{eq:fixed-fee-reward}
\mathcal R_{\rm fee}(u_{\rm fee})
=
\frac{c}{2}
\left[
\left(\ell_{\rm fee}^3+\frac{3\sigma^2u_{\rm fee}}{c^2}\right)^{2/3}
-\ell_{\rm fee}^2
\right].
\end{equation}
Thus
\[
\mathcal R_{\rm fee}(u_{\rm fee})
\sim\frac{\sigma^2}{c\ell_{\rm fee}}u_{\rm fee},
\qquad u_{\rm fee}\downarrow0,
\]
and the sharp net-return capacity returns to the inverse-logarithmic scale:
\begin{equation}\label{eq:fixed-fee-capacity}
\mathsf{CAC}^{\rm fee}_{\alpha,\beta}(A_*+\varepsilon)
\sim
\frac{\sigma^2\mathcal K_{\alpha,\beta}}
{c\ell_{\rm fee}\log(1/\varepsilon)}.
\end{equation}
The fee changes the local asymptotic law because the relevant deployment boundary is reached while the underlying drift gap is still strictly positive.

\subsection{Cash value beyond lifetime information}\label{cr-not-economic-ranking}

The ratio $CR$ identifies statistical feasibility in the canonical simple-hypothesis experiment.  Cash value also depends on how the reward is distributed along the information clock.  In particular, two opportunities can share the same current edge, volatility, calendar horizon, total future cash value, lifetime information and the same $CR$, while having different local crossing orders at economic death.  Equation~\eqref{eq:cash-capacity-sharp} then gives different certification-constrained cash values.  Appendix~\ref{app:cash-capacity-counterexample} gives an explicit pair with identical $CR$ and different asymptotic exponents.

\section{Unknown decay and robust certification}\label{unknown-decay-robust-certification}

The preceding sections condition on a specified boundary-relative drift profile.  We now require a single observable policy to satisfy the same false-deployment and power targets over a declared family of possible decay paths.

\subsection{Ordered families}\label{ordered-unknown-decay}

Fix a deterministic horizon $T$ and let $\{\Delta_\vartheta:\vartheta\in\Theta\}$ be continuous scalar drift gaps satisfying
\[
\Delta_\vartheta(t)\ge \underline\Delta(t)>0,
\qquad 0\le t<T,
\qquad \underline\Delta(T)\ge0,
\]
where the least profile $\underline\Delta$ belongs to the family.  Under alternative $\vartheta$,
\[
dY_t=\Delta_\vartheta(t)\,dt+\sigma\,dB_t,
\]
while the boundary law has zero drift.  Define the least-profile information clock and matched log-likelihood by
\[
\underline A_t
:=\frac1{\sigma^2}\int_0^t\underline\Delta(u)^2\,du,
\qquad
\underline A:=\underline A_T,
\]
\[
\underline Z_t
:=\frac1{\sigma^2}\int_0^t\underline\Delta(u)\,dY_u
-\frac12\underline A_t.
\]

For $0<\alpha<1-\beta<1$, define the robust residual-information capacity by
\[
c^{\rm rob}_{\alpha,\beta}(\underline A)
:=\sup_{(S,\delta)}\inf_{\vartheta\in\Theta}
\mathbb E_\vartheta[(\underline A-S)\delta],
\]
where $S\le\underline A$ is a stopping time on the deterministic least-profile information clock, $0\le\delta\le1$ is measurable at $S$, and the common policy satisfies
$\Ezero\delta\le\alpha$ and $\inf_\vartheta\mathbb E_\vartheta\delta\ge1-\beta$.
The declared family is suppressed in the capacity notation.

\begin{theorem}[Ordered-family robust frontier]\label{thm:ordered-decay-frontier}
Among all terminal decisions of size at most $\alpha$,
\[
\sup_{\delta:\,\Ezero\delta\le\alpha}
\inf_{\vartheta\in\Theta}\mathbb E_\vartheta\delta
=
\Phi\!\left(\sqrt{\underline A}-z_{1-\alpha}\right).
\]
Consequently a common certificate with power at least $1-\beta$ for every $\vartheta\in\Theta$ exists if and only if
\[
\underline A\ge A_*.
\]
If $\underline A=A_*+\varepsilon$, the robust residual-information capacity satisfies
\[
c^{\rm rob}_{\alpha,\beta}(\underline A)
\sim
\frac{\mathcal K_{\alpha,\beta}}{\log(1/\varepsilon)}.
\]
For a common bounded, nonnegative, nondecreasing reward $\mathcal R(u)$ with $\mathcal R(0)=0$, assume $\mathcal R(u)\sim a_0u^{\rho_{\rm R}}$ with the uniformity required by the reward-transfer result. Then
$\sup_{(S,\delta)}\inf_\vartheta\mathbb E_\vartheta[\mathcal R(\underline A-S)\delta]$
obeys \eqref{eq:general-residual-reward} with the least-profile constant. The same conclusion holds for member-specific payoff maps when every member's reward at each calendar time dominates this least-profile reward.
\end{theorem}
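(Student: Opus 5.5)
The plan is a two-sided reduction to the least profile $\underline\Delta$. The family can only be harder than its least member and no harder than that member in the worst case. For the upper direction (converse), I use that $\underline\Delta\in\Theta$. For any policy $(S,\delta)$ with $\Ezero\delta\le\alpha$,
\[
\inf_\vartheta\mathbb E_\vartheta\delta\le\mathbb E_{\underline\Delta}\delta,
\qquad
\inf_\vartheta\mathbb E_\vartheta[(\underline A-S)\delta]\le\mathbb E_{\underline\Delta}[(\underline A-S)\delta].
\]
The pair $(\Pzero,\mathbb P_{\underline\Delta})$ is exactly the canonical Brownian likelihood experiment, run on the clock $\underline A_t$ with log-likelihood $\underline Z_t$. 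Any decision measurable at $T$ is therefore a level-$\alpha$ test in that experiment. The Neyman--Pearson bound of Theorem~\ref{thm:exact-gaussian-threshold} gives power at most $\Phi(\sqrt{\underline A}-z_{1-\alpha})$. Likewise, the residual value is at most $c_{\alpha,\beta}(\underline A)$. The same holds for a general reward $\mathcal R$, bounded by the single-profile supremum in \eqref{eq:general-residual-reward}. One caveat: $\delta$ may use $\mathcal F_T$ information beyond $\underline Z$. This is harmless because Neyman--Pearson for the simple pair holds on the full path $\sigma$-field. Similarly, the upper bound in Theorem~\ref{thm:near-critical-capacity} was proved over randomized rules on the full filtration.

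For the lower direction (achievability), I take the optimal or near-optimal policies for the least profile and check that they remain feasible and valuable for every $\vartheta$. The terminal test is $\delta=\mathbf 1\{\underline Z_T\ge c\}$. The sequential policies are the binary conditional-error crossing rules of Theorem~\ref{thm:near-critical-capacity}, and the policy attaining \eqref{eq:general-residual-reward}, all run on $\underline Z$ in $\underline A$-time. Under $\Pzero$ their laws are unchanged, so size $\alpha$ is preserved. Under $\mathbb P_\vartheta$, write $\underline Z_t=\frac1{\sigma}\int_0^t\underline\Delta\,dB+\frac1{\sigma^2}\int_0^t\underline\Delta\,\Delta_\vartheta\,du-\frac12\underline A_t$. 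The drift term dominates the $\underline\Delta$ drift because $\Delta_\vartheta\ge\underline\Delta\ge0$. Changing to $\underline A$-time gives $\underline Z_s=W_s+D_\vartheta(s)$ with $D_\vartheta(s)-s/2$ nondecreasing in $s$. The path of $\underline Z$ under $\vartheta$ is thus pathwise at least the path under $\underline\Delta$, coupled on the same Brownian motion $W$.

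The key step is the monotonicity of the policies in the path of $\underline Z$. Certification must occur no later, and with no smaller indicator, when the path is raised. For the terminal threshold test this is immediate, which gives the power formula and the feasibility equivalence $\underline A\ge A_*$. For the crossing rules I would verify that they take the form $S=\inf\{s:\underline Z_s\ge g(s)\}\wedge\underline A$ with $\delta=\mathbf 1\{\underline Z_S\ge g(S)\}$ for a deterministic boundary $g$. Conditional-error crossing rules are increasing in the current likelihood state, so this form should hold. Then a pathwise larger process hits $g$ weakly earlier, giving $\mathbb E_\vartheta[\mathcal R(\underline A-S)\delta]\ge\mathbb E_{\underline\Delta}[\mathcal R(\underline A-S)\delta]$ because $\mathcal R$ is nondecreasing. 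Member-specific rewards that dominate the common one only increase this. Taking the infimum over $\vartheta$ matches the upper bound, which yields $c^{\rm rob}\sim\mathcal K_{\alpha,\beta}/\log(1/\varepsilon)$ and the reward-transfer asymptotic.

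I expect the main obstacle to be the monotone-boundary form of the attaining rule. If the construction in Appendix~\ref{app:sharp-capacity-proof} is a two-look rule, I would check monotonicity look by look. At the interim date it certifies when $\underline Z_t$ exceeds a threshold. At the terminal date it certifies when $\underline Z_A$ exceeds a threshold, possibly randomized only through $U_{\rm rnd}$. Both are upward-closed events, so the coupling argument goes through. Any randomization must be implemented as a threshold on $U_{\rm rnd}$ independent of the path, which preserves monotonicity.
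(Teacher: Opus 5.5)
Your proposal is correct and follows essentially the same route as the paper. The converse uses the fact that the least profile belongs to the family, so Neyman--Pearson and the near-critical upper bound apply to that single member. Achievability uses the same-Brownian-path coupling $\underline Z^{\vartheta}_t-\underline Z^{\rm lf}_t=\sigma^{-2}\int_0^t\underline\Delta(\Delta_\vartheta-\underline\Delta)\,du\ge0$ together with the conditional-error crossing rule. Your hedge about a two-look construction is unnecessary: the attaining rule is exactly a first passage of $\underline Z$ above the deterministic boundary $\widetilde c_A+(\underline A-s)/2+\sqrt{\underline A-s}\,\Phi^{-1}(\bar\alpha_A)$, so the monotonicity you need holds directly.
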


The matched statistic is ordered pathwise under a common Brownian coupling, and the least profile belongs to the declared family. It determines terminal worst-case power. The conditional-error crossing rule stops no later for larger profiles; under the stated reward-ordering assumption, this also establishes the sharp robust-reward lower bound. The least member supplies the matching upper bound.

For the exponential family
\[
\Delta_{\theta,{\lambda_{{\rm dec}}}}(t)=\theta e^{-{\lambda_{{\rm dec}}} t}-b,
\]
with $\theta\in[\theta_-,\theta_+]$, ${\lambda_{{\rm dec}}}\in[{\lambda_{{\rm dec},-}},{\lambda_{{\rm dec},+}}]$ and $\theta_->b>0$, the common least profile up to the earliest economic death is
\[
\underline\Delta(t)=\theta_-e^{-{\lambda_{{\rm dec},+}}t}-b,
\qquad
T={\lambda_{{\rm dec},+}}^{-1}\log(\theta_-/b).
\]
The policy is constructed from $\underline\Delta$ and the observed path, uniformly over the declared parameter rectangle.

\subsection{Nonordered families and least-favorable mixtures}\label{nonordered-unknown-decay}

For nonordered profiles, robust feasibility depends on the geometry of the alternative family beyond memberwise lifetime information. Consider a finite Gaussian experiment with whitened observation $\mathbf Y\in\mathbb R^n$:
\[
\mathbf Y\sim N(0,I_n)\quad(\Pzero),
\qquad
\mathbf Y\sim N(m_j,I_n)\quad(\mathbb P_j),\quad j=1,\ldots,J.
\]
For a terminal decision $0\le\delta\le1$, define
\[
\pi_{\rm rob}
:=\sup_{\Ezero\delta\le\alpha}
\min_{1\le j\le J}\mathbb E_j\delta.
\]
Let
\[
L_j(y)=\exp\!\left(m_j^\top y-\frac12\|m_j\|^2\right),
\]
and let
\[
\mathcal W_J:=\left\{w\in[0,1]^J:\sum_{j=1}^J w_j=1\right\}
\]
be the simplex of mixture weights. For $w=(w_1,\ldots,w_J)\in\mathcal W_J$, set $L_w=\sum_jw_jL_j$.

\begin{proposition}[Least-favorable mixture]\label{prop:least-favorable-mixture}
The exact robust terminal power satisfies
\[
\pi_{\rm rob}
=
\min_{w\in\mathcal W_J}
\sup_{0\le\delta\le1,\,\Ezero\delta\le\alpha}
\Ezero[L_w\delta].
\]
For fixed $w$, the inner problem is the Neyman--Pearson test of $P_0$ against the mixture likelihood $L_w$.  A saddle pair exists.
\end{proposition}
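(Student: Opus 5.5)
The plan is to treat the robust power problem as a zero-sum game between the tester, who chooses $\delta$, and nature, who chooses a mixture weight $w$, and then apply a minimax theorem. Let $\mathcal D_\alpha:=\{\delta:\mathbb R^n\to[0,1]\ \text{measurable},\ \Ezero\delta\le\alpha\}$ and define the bilinear payoff
\[
F(\delta,w):=\sum_{j=1}^J w_j\,\mathbb E_j\delta=\Ezero[L_w\delta].
\]
Since $F(\delta,\cdot)$ is linear on the simplex, its minimum over $\mathcal W_J$ is attained at a vertex. Hence $\min_j\mathbb E_j\delta=\min_{w}F(\delta,w)$, and therefore
\[
\pi_{\rm rob}=\sup_{\delta\in\mathcal D_\alpha}\min_{w\in\mathcal W_J}F(\delta,w).
\]
The claim then becomes exactly the interchange $\sup\min=\min\sup$, together with attainment on both sides.

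\textbf{Interchange.} I would invoke Sion's minimax theorem, which needs convexity and compactness in the right places.
\begin{itemize}
\item $\mathcal W_J$ is a compact convex subset of $\mathbb R^J$.
\item $\mathcal D_\alpha$ is convex.
\item $F$ is affine in each argument.
\item $F$ is continuous in $w$ for fixed $\delta$.
\end{itemize}
Sion's theorem requires compactness on only one side. Taking $\mathcal W_J$ as the compact side, with $F$ lower semicontinuous in $w$ and upper semicontinuous and quasi-concave in $\delta$ on an arbitrary convex set, suffices for $\sup_\delta\min_w F=\min_w\sup_\delta F$. This gives the displayed formula. The outer minimum is attained because $w\mapsto\sup_\delta F(\delta,w)$ is a supremum of linear functions, hence convex and lower semicontinuous, on a compact set.

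\textbf{Inner problem.} For fixed $w$, $\sup_{\delta\in\mathcal D_\alpha}\Ezero[L_w\delta]$ is the Neyman--Pearson problem of testing $\Pzero$ against the probability measure with $\Pzero$-density $L_w$. Here $L_w$ is a genuine density, since $\Ezero L_j=1$. The Neyman--Pearson lemma therefore gives the optimizer $\delta_w=\mathbf 1\{L_w>k\}+\gamma\mathbf 1\{L_w=k\}$, with $k$ and $\gamma$ chosen to make the size exactly $\alpha$.

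\textbf{Saddle pair (main obstacle).} This requires the supremum over $\delta$ to be attained as well. I would equip $\mathcal D_\alpha$ with the weak-$*$ topology of $L^\infty(\Pzero)$ as the dual of $L^1(\Pzero)$. In that topology $\mathcal D_\alpha$ is compact by Banach--Alaoglu: it is closed because the constraints $0\le\delta\le1$ and $\Ezero\delta\le\alpha$ are preserved under weak-$*$ limits. Because each $L_j\in L^1(\Pzero)$, the payoff $F(\cdot,w)$ is weak-$*$ continuous, uniformly in $w$ thanks to the finite sum. Consequently $\delta\mapsto\min_wF(\delta,w)$ is weak-$*$ upper semicontinuous and attains its supremum at some $\delta^\star$. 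Taking $w^\star$ as the minimizer obtained above, the equality of the two values then yields
\[
F(\delta,w^\star)\le F(\delta^\star,w^\star)\le F(\delta^\star,w)
\quad\text{for all }\delta\in\mathcal D_\alpha,\ w\in\mathcal W_J,
\]
which is the saddle pair.

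A further consequence is that $\delta^\star$ is a Neyman--Pearson test for the least-favorable mixture $L_{w^\star}$. Moreover, $w^\star$ can place mass only on indices $j$ where $\mathbb E_j\delta^\star=\pi_{\rm rob}$, since $F(\delta^\star,w^\star)=\min_w F(\delta^\star,w)$ forces complementary slackness. This justifies the statement's description of the inner problem as the Neyman--Pearson test against the mixture.
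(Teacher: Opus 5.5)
Your proof is correct and follows the paper's route: the bilinear payoff $\Ezero[L_w\delta]=\sum_j w_j\mathbb E_j\delta$, weak-star compactness of the decision set $\mathcal D_\alpha$ in $L^\infty(\Pzero)$ together with compactness of the simplex, a minimax theorem, and the Neyman--Pearson lemma for the inner problem against $L_w$. You make explicit several points the paper leaves implicit: Sion with compactness on the simplex side, Banach--Alaoglu closedness of $\mathcal D_\alpha$, attainment of the outer supremum, and the derivation of the saddle inequalities and complementary slackness.
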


This formulation exposes a strict uncertainty penalty at the individual frontier.  If two distinct nonzero alternatives satisfy
\[
\|m_1\|^2=\|m_2\|^2=A_*,
\]
then each member separately attains power $1-\beta$ at size $\alpha$, but no common level-$\alpha$ decision attains that power under both members.

For the two exponential profiles used in the numerical benchmark, calibrated so that each profile individually lies at $CR=1$ for $\alpha=0.05$ and power $0.90$, the optimal common test has worst-profile power
\[
0.8655488924\ldots.
\]
Scaling both profile norms by the same factor, uniform power $0.90$ is recovered at
\[
CR=1.1268349125\ldots.
\]
Thus the geometry of the alternative family enters robust certification even when every profile has the same oracle lifetime information.

\subsection{Transfer from an estimated model}\label{estimated-model-transfer}

Let $\Pi$ be a common class of admissible policies. For $\pi\in\Pi$, let $\delta_\pi\in\{0,1\}$ denote its deployment decision and let $R_j(\pi)$ denote its integrable reward under alternative $j$. Write
\[
\operatorname{size}(\pi)=P_0(\delta_\pi=1),
\qquad
\operatorname{pow}_j(\pi)=P_j(\delta_\pi=1),
\qquad
\mathcal V_j(\pi)=\mathbb E_j[R_j(\pi)].
\]
Approximate-model quantities carry hats.  Suppose uniformly over $\Pi$ and the declared alternatives that
\[
|\operatorname{size}(\pi)-\widehat{\operatorname{size}}(\pi)|\le\eta_P,
\]
\[
|\operatorname{pow}_j(\pi)-\widehat{\operatorname{pow}}_j(\pi)|\le\eta_P,
\qquad
|\mathcal V_j(\pi)-\widehat{\mathcal V}_j(\pi)|\le\eta_V.
\]
Fix the scalar objective before optimization. Here we use
\[
\mathcal U(\pi):=\inf_j\mathcal V_j(\pi),
\qquad
\widehat{\mathcal U}(\pi):=\inf_j\widehat{\mathcal V}_j(\pi).
\]
A policy optimized for $\widehat{\mathcal U}$ under the tightened approximate constraints
\[
\widehat{\operatorname{size}}\le\alpha-\eta_P,
\qquad
\widehat{\operatorname{pow}}_j\ge1-\beta+\eta_P
\quad\text{for all }j,
\]
satisfies the original reliability contract under the true model. Its worst-case value gap is bounded by the optimization error, $2\eta_V$, and the value of the statistical safety margin, provided the twice-tightened true feasible set is nonempty. This is a bound for the declared scalar objective, not simultaneous optimality for every memberwise reward. Appendix~\ref{app:unknown-decay-proofs} gives the precise statement. For an estimated model, the stated uniform approximation bounds must hold conditionally on the information used to freeze that model, or on an explicitly specified event; they do not follow merely from fitting a parametric family.

\section{Market equilibrium with certification costs}\label{certification-constrained-market-equilibrium}

Information acquisition and arbitrage are endogenous in several market-equilibrium models \cite{grossmanstiglitz1980,basakcroitoru2006,zigrand2004,zigrand2006}. More recent work studies competition among factor investors and the empirical role of post-publication attenuation, arbitrage capital, and implementation frictions \cite{demiguel2025,mcleanpontiff2016,jacobsmuller2020,kaplanski2023,muravyev2025}. The preceding sections take opportunity lifetime as given. We now allow arbitrage activity to shorten it. An explicit crowding map links aggregate research intensity to the remaining information budget, so the CAC threshold becomes an equilibrium constraint.

Retain the reliable-information threshold
\[
\mathcal I_*:=\mathcal I_{\rm crit}(\alpha,\beta)
=\frac12\bigl[z_{1-\alpha}+z_{1-\beta}\bigr]^2.
\]
Let ${v_{\rm KL}}>0$ denote economic value per KL nat. In the canonical scalar model, Theorem~\ref{thm:boundary-relative-value} gives ${v_{\rm KL}}=\sigma^2/\gamma$. Define the optimal reliable CAC economic surplus at lifetime information $\mathcal I\ge0$ by
\[
C^\star(\mathcal I):=\frac {v_{\rm KL}}2\,c_{\alpha,\beta}(2\mathcal I).
\]
The equilibrium arguments below use only the resulting value function and its stated regularity; Appendix~\ref{constrained-stopping-representation} gives the corresponding randomized constrained-stopping representation. For the
equilibrium result we require the natural regularity that \(C^\star\) is
continuous, with
\[
C^\star(\mathcal I)=0\quad(\mathcal I\le \mathcal I_*),
\qquad
C^\star(\mathcal I)>0\quad(\mathcal I>\mathcal I_*).
\]
In the canonical model, the CAC value is strictly increasing above the frontier. Indeed, if
one KL nat of perfect-information value is worth \({v_{\rm KL}}\), then embedding any $\varepsilon$-optimal
rule at \(\mathcal I_1\) in the longer horizon \(\mathcal I_2>\mathcal I_1\ge \mathcal I_*\) adds
\({v_{\rm KL}}(\mathcal I_2-\mathcal I_1)\) on every deployment, whose probability under the alternative law
is at least \(1-\beta\). Letting $\varepsilon\downarrow0$ gives
\[
C^\star(\mathcal I_2)-C^\star(\mathcal I_1)
\ge {v_{\rm KL}}(1-\beta)(\mathcal I_2-\mathcal I_1).
\]
Thus the constructive linear surplus
\[
C_0(\mathcal I)={v_{\rm KL}}(1-\beta)(\mathcal I-\mathcal I_*)_+
\]
is a lower envelope of the optimal CAC value.

Let $\mathcal I_0>0$ be the no-crowding lifetime-information endowment and let
$x\ge0$ denote aggregate arbitrage/research intensity.  The remaining information budget is
\[
\mathcal I=G(x),
\]
where $G:[0,\infty)\to[0,\mathcal I_0]$ is continuous and strictly decreasing,
with $G(0)=\mathcal I_0$. The canonical exponential-crowding benchmark uses a crowding sensitivity $\chi>0$ and is
\[
G(x)=\frac{\mathcal I_0}{1+\chi x},
\]
which arises when ${\lambda_{{\rm dec},0}}>0$ denotes the baseline exponential decay rate and crowding accelerates it from
\({\lambda_{{\rm dec},0}}\) to \({\lambda_{{\rm dec},0}}(1+\chi x)\).

Suppose each unit of active arbitrage capital pays marginal research or
implementation cost \(\psi>0\). In a competitive market with many individually small arbitrageurs,
aggregate post-certification value \(C^\star(G(x))\) is allocated proportionally
across active capacity. Positive-entry equilibrium satisfies
\[
\frac{C^\star(G(x))}{x}=\psi,
\]
or equivalently
\[
C^\star(G(x))=\psi x.
\]

\begin{theorem}[Competitive equilibrium under certification costs]\label{thm:competitive-equilibrium}
Assume \(C^\star\) and \(G\) have the properties above. If \(\mathcal I_0>\mathcal I_*\)
and there is a finite \(x_*\) with \(G(x_*)=\mathcal I_*\), then for every
\(\psi>0\) there is a unique positive equilibrium \(x_\psi\in(0,x_*)\)
satisfying
\[
C^\star(G(x_\psi))=\psi x_\psi.
\]
Under the entry dynamic
\[
\dot x=\eta\{C^\star(G(x))-\psi x\},\qquad \eta>0,
\]
this equilibrium is globally asymptotically stable. Its residual
information \(\mathcal I_\psi=G(x_\psi)\) satisfies
\[
\mathcal I_*<\mathcal I_\psi<\mathcal I_0,
\qquad
\mathcal I_\psi\downarrow \mathcal I_*\quad\text{as }\psi\downarrow0.
\]
If \(\mathcal I_0\le \mathcal I_*\), the no-entry equilibrium \(x=0\) is globally stable.
\end{theorem}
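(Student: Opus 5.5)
The plan is to reduce everything to the scalar function
\[
f_\psi(x):=C^\star(G(x))-\psi x,\qquad x\ge0,
\]
and exploit its sign structure. Positive equilibria are exactly the zeros of $f_\psi$ on $(0,\infty)$, and the entry dynamic is $\dot x=\eta f_\psi(x)$. Continuity of $f_\psi$ follows from continuity of $C^\star$ and $G$.

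\emph{Step 1 (monotonicity of $C^\star$).} First I will record that $C^\star$ is nondecreasing on $[0,\infty)$. It vanishes on $[0,\mathcal I_*]$. On $[\mathcal I_*,\infty)$ the embedding inequality displayed before the theorem gives
\[
C^\star(\mathcal I_2)-C^\star(\mathcal I_1)\ge v_{\rm KL}(1-\beta)(\mathcal I_2-\mathcal I_1)\ge0 .
\]
Across $\mathcal I_*$ it passes from $0$ to positive values. Since $G$ is strictly decreasing, $x\mapsto C^\star(G(x))$ is nonincreasing. Hence $f_\psi$ is strictly decreasing, because $\psi x$ is strictly increasing. This monotonicity is the step I regard as the crux: the abstract hypotheses list only continuity and the sign pattern of $C^\star$, so I would explicitly invoke the embedding argument of the canonical model to justify it.

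\emph{Step 2 (existence and uniqueness).} We have $f_\psi(0)=C^\star(\mathcal I_0)>0$ because $\mathcal I_0>\mathcal I_*$. For $x\ge x_*$, $G(x)\le G(x_*)=\mathcal I_*$, so $f_\psi(x)=-\psi x<0$. The intermediate value theorem gives a zero $x_\psi\in(0,x_*)$, and strict monotonicity from Step 1 makes it the unique zero on $[0,\infty)$. Strict decrease of $G$ then gives
\[
\mathcal I_*=G(x_*)<G(x_\psi)=\mathcal I_\psi<G(0)=\mathcal I_0 .
\]

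\emph{Step 3 (global stability).} By Step 1, $f_\psi>0$ on $[0,x_\psi)$ and $f_\psi<0$ on $(x_\psi,\infty)$. Since $f_\psi$ may fail to be Lipschitz, I will avoid any argument requiring uniqueness of solutions and use the sign structure directly. Existence of solutions comes from Peano's theorem. Any solution starting at $x(0)\ge0$ is monotone and stays on its side of $x_\psi$, since it cannot cross the rest point while moving strictly toward it; in particular $x\ge0$ is invariant. It is bounded, so it converges to a limit. That limit must be a zero of $f_\psi$; otherwise $|\dot x|$ would stay bounded below and the solution would move an infinite distance. Hence every solution converges to $x_\psi$. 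Lyapunov stability follows from the same monotone-approach property.

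\emph{Step 4 (vanishing costs).} As $\psi$ decreases, $f_\psi$ increases pointwise, so its unique zero $x_\psi$ increases and $\mathcal I_\psi$ decreases. Suppose $\mathcal I_\psi\downarrow\mathcal I'>\mathcal I_*$. Then $x_\psi\le x':=G^{-1}(\mathcal I')<x_*$, and the equilibrium identity forces
\[
\psi x_\psi=C^\star(\mathcal I_\psi)\ge\min_{[\mathcal I',\mathcal I_0]}C^\star>0,
\]
where the minimum is positive by continuity and the positivity of $C^\star$ above $\mathcal I_*$. But $\psi x_\psi\le\psi x_*\to0$, a contradiction. Hence $\mathcal I_\psi\downarrow\mathcal I_*$.

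\emph{Step 5 (subcritical endowment).} If $\mathcal I_0\le\mathcal I_*$, then $G(x)\le\mathcal I_0\le\mathcal I_*$ for every $x$, so $C^\star(G(x))\equiv0$. The dynamic reduces to $\dot x=-\eta\psi x$, whose solutions decay exponentially to $0$. This gives global stability of the no-entry equilibrium.
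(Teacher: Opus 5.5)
Your proposal is correct and follows the paper's proof essentially step for step: strict decrease of $C^\star(G(x))-\psi x$ (which the paper also draws, only implicitly, from the embedding inequality for $C^\star$), the intermediate value theorem on $(0,x_*)$, monotone comparative statics in $\psi$ with the same contradiction argument for the limit, and the linear flow $\dot x=-\eta\psi x$ in the subcritical case. The only variation is in the stability step. The paper uses the one-sided Lipschitz inequality $(x-y)(F_\psi(x)-F_\psi(y))\le0$ to obtain forward uniqueness of the flow, whereas you avoid uniqueness altogether via Peano existence and the monotonicity of scalar solutions; both routes are valid.
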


\begin{proof}
Put $F_\psi(x)=C^\star(G(x))-\psi x$. It is continuous and strictly decreasing, with $F_\psi(0)>0$ and $F_\psi(x_*)=-\psi x_*<0$, so it has one root in $(0,x_*)$. Its sign points toward that root on $[0,\infty)$; monotonicity implies the one-sided Lipschitz inequality $(x-y)(F_\psi(x)-F_\psi(y))\le0$, hence uniqueness of the scalar flow and global asymptotic stability. For $0<\psi_1<\psi_2$, evaluating $F_{\psi_1}$ at $x_{\psi_2}$ shows $x_{\psi_1}>x_{\psi_2}$. Its limit as $\psi\downarrow0$ cannot be below $x_*$, since then $C^\star(G(x_\psi))$ would have a positive limit while $\psi x_\psi\to0$. Thus $x_\psi\uparrow x_*$ and $\mathcal I_\psi\downarrow\mathcal I_*$. If $\mathcal I_0\le\mathcal I_*$, the flow is $\dot x=-\eta\psi x$.
\end{proof}

The equilibrium divides opportunities at the certification frontier. Below the frontier, reliable post-certification value is zero. Above the frontier, rents finance entry and crowding reduces the remaining information budget. As research cost falls, the equilibrium residual information approaches the reliable-information threshold.

The lower envelope $C_0$ also yields a useful nonasymptotic bound.  Since $G$ is strictly decreasing, write $G^{-1}$ for its inverse on its range. The optimal equilibrium satisfies
\[
\mathcal I_\psi-\mathcal I_*
\le
\frac{\psi\,G^{-1}(\mathcal I_*)}{{v_{\rm KL}}(1-\beta)}.
\]
For \(G(x)=\mathcal I_0/(1+\chi x)\),
\[
\mathcal I_\psi-\mathcal I_*
\le
\frac{\psi}{\chi {v_{\rm KL}}(1-\beta)}
\left(\frac{\mathcal I_0}{\mathcal I_*}-1\right).
\]
Hence the residual excess information above the CAC frontier vanishes at least
linearly with research cost, without requiring differentiability of the
stopping value at the threshold.

\begin{corollary}[Sharp concentration of the equilibrium gap]\label{cor:near-critical-equilibrium}
Under the hypotheses of Theorem~\ref{thm:competitive-equilibrium}, assume $0<\alpha<1-\beta<1$ and retain
\[
C^\star(\mathcal I)=\frac {v_{\rm KL}}2c_{\alpha,\beta}(2\mathcal I).
\]
Let $x_*=G^{-1}(\mathcal I_*)$.  As $\psi\downarrow0$,
\begin{equation}\label{eq:sharp-equilibrium-gap}
\log\frac1{\mathcal I_\psi-\mathcal I_*}
\sim
\frac{{v_{\rm KL}}\mathcal K_{\alpha,\beta}}{2x_*}\,\frac1\psi.
\end{equation}
Equivalently,
\[
\mathcal I_\psi-\mathcal I_*
=
\exp\!\left[
-\frac{{v_{\rm KL}}\mathcal K_{\alpha,\beta}}{2x_*\psi}
+o(1/\psi)
\right].
\]
\end{corollary}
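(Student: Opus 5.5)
The plan is to combine the equilibrium equation with the sharp near-critical law of Theorem~\ref{thm:near-critical-capacity}. Put $\varepsilon_\psi:=2(\mathcal I_\psi-\mathcal I_*)$, so that $2\mathcal I_\psi=A_*+\varepsilon_\psi$ and
\[
C^\star(\mathcal I_\psi)=\frac{v_{\rm KL}}2\,c_{\alpha,\beta}(A_*+\varepsilon_\psi).
\]
By Theorem~\ref{thm:competitive-equilibrium}, $\mathcal I_\psi\downarrow\mathcal I_*$, so $\varepsilon_\psi\downarrow0$, and $x_\psi\uparrow x_*$. The equilibrium identity $C^\star(G(x_\psi))=\psi x_\psi$ then reads
\[
\frac{v_{\rm KL}}2\,c_{\alpha,\beta}(A_*+\varepsilon_\psi)=\psi x_\psi .
\]

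The first step applies Theorem~\ref{thm:near-critical-capacity} along the sequence $\varepsilon_\psi\to0$. This gives
\[
c_{\alpha,\beta}(A_*+\varepsilon_\psi)=\frac{\mathcal K_{\alpha,\beta}}{\log(1/\varepsilon_\psi)}\,(1+o(1)).
\]
Substituting into the equilibrium identity and using $x_\psi=x_*(1+o(1))$, where $x_*>0$ is finite by hypothesis, yields
\[
\log\frac1{\varepsilon_\psi}=\frac{v_{\rm KL}\mathcal K_{\alpha,\beta}}{2x_*\psi}\,(1+o(1)).
\]
Here $\mathcal K_{\alpha,\beta}>0$ and $\psi x_\psi>0$, so there are no sign or division issues.

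The second step converts from $\varepsilon_\psi$ back to $\mathcal I_\psi-\mathcal I_*=\varepsilon_\psi/2$. This only shifts the logarithm by $\log 2$:
\[
\log\frac{1}{\mathcal I_\psi-\mathcal I_*}=\log\frac1{\varepsilon_\psi}+\log 2 .
\]
Since the right-hand side diverges like $1/\psi$, the additive $\log 2$ is $o(1/\psi)$. This gives \eqref{eq:sharp-equilibrium-gap}, and exponentiating gives the equivalent form with the $o(1/\psi)$ error in the exponent.

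There is no real obstacle, because every ingredient is already established. The only care needed is to confirm that $x_\psi\to x_*$ and not merely that $x_\psi$ stays bounded; the monotonicity argument in the proof of Theorem~\ref{thm:competitive-equilibrium} supplies this, together with $G(x_*)=\mathcal I_*$ and the continuity and strict monotonicity of $G$. One should also note that the asymptotic equivalence holds along arbitrary sequences $\varepsilon\downarrow0$, so it applies to the implicitly defined $\varepsilon_\psi$.
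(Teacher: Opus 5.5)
Your proposal is correct and follows essentially the same route as the paper's proof. Both substitute $2(\mathcal I_\psi-\mathcal I_*)$ for $\varepsilon$ in Theorem~\ref{thm:near-critical-capacity}, combine this with the equilibrium identity and $x_\psi\to x_*>0$ from Theorem~\ref{thm:competitive-equilibrium}, and absorb the additive $\log 2$ into the $o(1/\psi)$ term.
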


\begin{proof}
Theorem~\ref{thm:competitive-equilibrium} gives $x_\psi\to x_*>0$ and $\mathcal I_\psi\downarrow\mathcal I_*$. Since $A_*=2\mathcal I_*$, Theorem~\ref{thm:near-critical-capacity} gives
\[
c_{\alpha,\beta}(2\mathcal I_\psi)
\sim
\frac{\mathcal K_{\alpha,\beta}}
{\log\!\left(1/[2(\mathcal I_\psi-\mathcal I_*)]\right)}.
\]
The equilibrium identity $C^\star(\mathcal I_\psi)=\psi x_\psi$ implies
\[
\log\!\left(\frac1{2(\mathcal I_\psi-\mathcal I_*)}\right)
\sim
\frac{{v_{\rm KL}}\mathcal K_{\alpha,\beta}}{2x_*\psi}.
\]
The additive constant $\log2$ is negligible on the $1/\psi$ scale, which proves \eqref{eq:sharp-equilibrium-gap}.
\end{proof}

\subsection{A finite number of competing arbitrageurs}

The proportional allocation in the large-number competitive model can itself be
obtained as the limit of a strategic game with finitely many arbitrageurs. Let
$N\ge2$ arbitrage desks choose efforts $e_i\ge0$, with aggregate
$E:=\sum_{i=1}^N e_i$; write $e_{-i}$ for the vector of efforts of desks other than $i$. Let
\[
R(E):=C^\star(G(E)).
\]
For \(E>0\), desk \(i\) receives proportional share \(e_i/E\) of the
endogenous post-certification value pool and pays \(\psi e_i\):
\[
u_i(e_i,e_{-i})=\frac{e_i}{E}R(E)-\psi e_i,\qquad u_i(0,\mathbf0):=0.
\]
At a symmetric interior equilibrium, write $E_N>0$ for aggregate equilibrium effort. Differentiability of $R$ at $E_N$ gives
\[
\psi=
\left(1-\frac1N\right)\frac{R(E_N)}{E_N}
+\frac1N R'(E_N).
\]
The second term is a strategic prize-destruction effect. If $R'(E_N)<0$, a finite arbitrageur internalizes the reduction in the total opportunity caused by additional effort. For the following comparison, assume symmetric interior equilibria exist for every $N\ge2$, that $R$ is continuously differentiable and strictly decreasing in its positive-rent region $(0,E_*)$, with $R(0)>0$ and $R(E_*)=0$, and that
\[
F_N(E):=\left(1-\frac1N\right)\frac{R(E)}E+\frac1N R'(E)
\]
is strictly decreasing there. Then the corresponding roots $E_N$ increase with $N$, since $F_{N+1}(E)>F_N(E)$, and converge to the unique competitive root $R(E)/E=\psi$ for $\psi>0$. These are conditional comparative statics; the first-order equation alone does not establish existence of a Nash equilibrium.

A closed-form benchmark is obtained by using the globally defined clipped linear information impact
\[
G(E)=(\mathcal I_0-{\kappa_{{\rm crowd}}} E)_+,
\qquad {\kappa_{{\rm crowd}}}>0,
\]
with the constructive post-certification value
\[
C_0(\mathcal I)=a(\mathcal I-\mathcal I_*)_+,
\qquad a={v_{\rm KL}}(1-\beta).
\]

\begin{proposition}[Exact finite-$N$ linear benchmark]\label{prop:finite-linear}
Let $N\ge2$, $\mathcal I_0>\mathcal I_*$, $a,{\kappa_{{\rm crowd}}}>0$, and $\psi\ge0$. Writing
$\Delta\mathcal I=\mathcal I_0-\mathcal I_*$, define the rent pool for all $E\ge0$ by
$R(E)=a(\Delta\mathcal I-{\kappa_{{\rm crowd}}} E)_+$.
The proportional-share game has a unique symmetric equilibrium with positive gross rent. If $\psi>0$, it is also the unique symmetric equilibrium with positive aggregate effort. Its aggregate effort is
\[
E_N=\frac{a\Delta\mathcal I\,(N-1)}{N(a{\kappa_{{\rm crowd}}}+\psi)}.
\]
Define the equilibrium residual information by $\mathcal I_N^{\rm eq}:=\mathcal I_0-{\kappa_{{\rm crowd}}} E_N$.
The equilibrium remains in the positive-rent region, and its residual information satisfies
\[
\mathcal I_N^{\rm eq}-\mathcal I_*
=\Delta\mathcal I\frac{a{\kappa_{{\rm crowd}}}+N\psi}{N(a{\kappa_{{\rm crowd}}}+\psi)}.
\]
Equivalently,
\[
\frac{\mathcal I_N^{\rm eq}-\mathcal I_*}{\mathcal I_0-\mathcal I_*}
=
\frac{\psi}{a{\kappa_{{\rm crowd}}}+\psi}
+
\frac{a{\kappa_{{\rm crowd}}}}{N(a{\kappa_{{\rm crowd}}}+\psi)}.
\]
In particular, if $\psi=0$, then
\[
\mathcal I_N^{\rm eq}=\mathcal I_*+\frac{\mathcal I_0-\mathcal I_*}{N}.
\]
\end{proposition}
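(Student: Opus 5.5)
The plan is to solve the proportional-share game explicitly inside the positive-rent region, check that the candidate is a global best response, and then rule out every other symmetric profile. The work splits into (i) a first-order computation, (ii) concavity of the deviation payoff, and (iii) exclusion arguments for symmetric profiles with zero rent or zero effort.

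\emph{Candidate equilibrium.} Fix desk $i$ and write $s:=\sum_{j\ne i}e_j$ for the others' effort. On the region $\kappa_{\rm crowd}(e_i+s)<\Delta\mathcal I$ the payoff is
\[
u_i=a\Delta\mathcal I\frac{e_i}{e_i+s}-(a\kappa_{\rm crowd}+\psi)e_i .
\]
Differentiating in $e_i$ gives $a\Delta\mathcal I\,s/E^2=a\kappa_{\rm crowd}+\psi$. Imposing symmetry, $s=(1-1/N)E$, this yields
\[
E_N=\frac{a\Delta\mathcal I(N-1)}{N(a\kappa_{\rm crowd}+\psi)}.
\]
The positive-rent condition $\kappa_{\rm crowd}E_N<\Delta\mathcal I$ reduces to $a\kappa_{\rm crowd}(N-1)<N(a\kappa_{\rm crowd}+\psi)$, which always holds. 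Substituting into $\mathcal I_0-\kappa_{\rm crowd}E_N-\mathcal I_*$ gives the stated residual formula and its normalized form. Setting $\psi=0$ then gives $\mathcal I_*+\Delta\mathcal I/N$.

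\emph{Global best response.} Given $s>0$, the map $e\mapsto u_i(e,s)$ is continuous on $[0,\infty)$. It is strictly concave on the rent region $[0,\Delta\mathcal I/\kappa_{\rm crowd}-s)$, because $e/(e+s)$ is concave and the remaining terms are linear. Hence the stationary point $e=E_N/N$ maximizes $u_i$ over that interval.

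Beyond the rent region the payoff equals $-\psi e\le0$. At the candidate, the equilibrium payoff is
\[
e_i\Bigl(\tfrac{a\Delta\mathcal I}{E_N}-a\kappa_{\rm crowd}-\psi\Bigr)=e_i\,\frac{a\kappa_{\rm crowd}+\psi}{N-1}>0,
\]
so deviations outside the region are unprofitable. Therefore the candidate is a Nash equilibrium.

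\emph{Uniqueness.} Consider any symmetric equilibrium with positive gross rent.
\begin{itemize}
\item The all-zero profile is not an equilibrium. Deviating to a small $e>0$ yields $a\Delta\mathcal I-(a\kappa_{\rm crowd}+\psi)e>0=u_i(0,\mathbf0)$.
\item Hence $E>0$ and each $e_i=E/N>0$ lies in the open region where $u_i$ is differentiable.
\item The first-order condition is therefore necessary, and it forces $E=E_N$.
\end{itemize}

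If $\psi>0$, consider a symmetric profile with $E>0$ but zero gross rent. Each desk pays $\psi e_i>0$ and receives nothing, so deviating to $e_i=0$ strictly improves its payoff. Such a profile is therefore not an equilibrium, and positive aggregate effort forces positive rent, which reduces to the previous case.

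The only step needing care is the global (non-local) optimality check across the kink at the zero-rent boundary; the concavity-plus-sign argument above handles it.
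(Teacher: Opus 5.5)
Your proposal is correct and follows essentially the same route as the paper's proof. Both use the symmetric first-order condition in the positive-rent region, then strict concavity of own payoff there together with a strictly positive equilibrium payoff to exclude deviations past the zero-rent kink, and, for $\psi>0$, exclusion of zero-rent symmetric profiles by deviating to zero effort. You are slightly more explicit than the paper: you compute the equilibrium payoff $e_i(a\kappa_{\rm crowd}+\psi)/(N-1)>0$ and rule out the all-zero profile directly.
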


\begin{proof} See Appendix~\ref{app:finite-player-proof}.\end{proof}

The first term in the residual-information fraction comes from research cost; the second comes from having only finitely many competitors. Along the symmetric positive-rent equilibrium branch, a finite set of arbitrageurs leaves positive excess information even when research itself is costless; only the large-number competitive limit of this branch reaches the CAC frontier. In the symmetric linear benchmark, $1/N$ is also the Herfindahl index of effort shares, so the zero-cost residual-information fraction equals market concentration. With nonlinear crowding or asymmetric payoffs the relationship changes with the market technology.

\begin{figure}[t]
\centering
\includegraphics[width=0.94\linewidth]{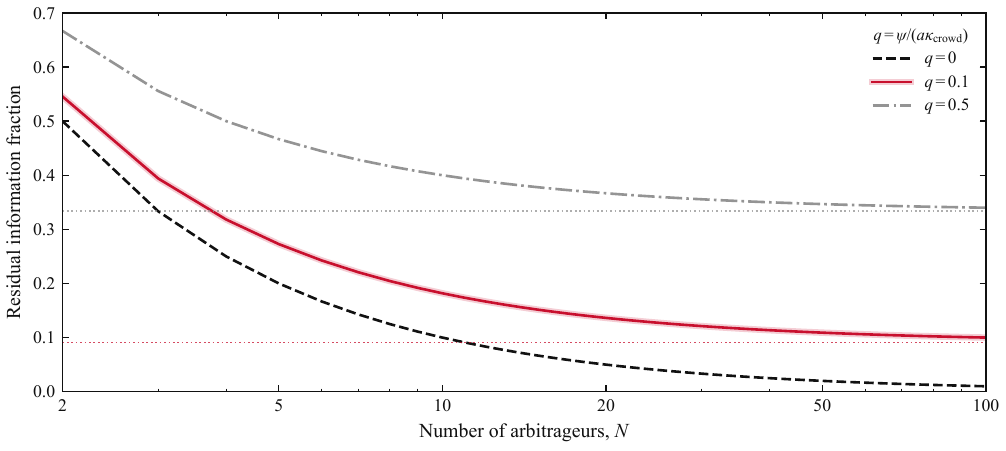}
\caption{Residual excess lifetime information along the symmetric positive-rent equilibrium branch of the exact linear-information strategic benchmark. With zero research cost the residual fraction is exactly $1/N$; positive research cost creates an additional limiting gap even as the number of competitors becomes large. Dotted lines mark $q/(1+q)$, where $q=\psi/(a\kappa_{\rm crowd})$; the pale-red outline highlights $q=0.1$.}
\label{fig:equilibrium-competition}
\end{figure}

In the zero-hurdle exponential Gaussian model,
$\mathcal I=\mathsf S^2h_{1/2}/(4\log2)$. Consequently, for opportunities that are actively
compressed by many low-cost arbitrageurs,
\[
\mathsf S^2h_{1/2}\longrightarrow 4\log2\,\mathcal I_*.
\]
This limit describes opportunities whose lifetime is actively compressed by many low-cost arbitrageurs.

\begin{remark}\label{heterogeneous-arbitrage-technology}
The equilibrium comparisons condition on implementation technology. Heterogeneity in research cost, crowding and competitive breadth can change their pooled sign; the opportunity's information budget alone is not a cross-sectional prediction of its rate of decay \cite{muravyev2025}.
\end{remark}

\section{Funding persistence as a lifetime signal}\label{retrospective-funding-study}

Funding persistence provides a descriptive illustration \cite{inan2025,he2026funding,lau2026funding,georgiev2026}. The benchmark uses BTC-USDT and ETH-USDT on Binance, Bybit, Gate, HTX, KuCoin and MEXC at eight-hour settlements. Events require 180 pre-event days, absolute funding above its trailing 95th percentile, a seven-day cooldown and complete seven-day follow-up; 543 of 545 events have finite scores. The $W=540$ pre-event settlements fit the no-intercept AR(1) $f_{j+1}=\phi f_j+\varepsilon_{j+1}$ by least squares, with residual sample standard deviation $\widehat\sigma_\varepsilon$. For $0<\widehat\phi<1$, set $\widehat\lambda_{\rm dec}=-\log\widehat\phi$ and $L_{\rm score}=f_t^2/[4\widehat\sigma_\varepsilon^2\widehat\lambda_{\rm dec}\mathcal I_{\rm crit}(0.05,0.10)]$. The threshold is $4.2819236753\ldots$ nats; normalization leaves ranks unchanged. Comparators are $z_{{\rm edge},t}=|f_t|/s_{f,t}$, where $s_{f,t}$ is the pre-event funding standard deviation, and the historical $t$ statistic of $\operatorname{sgn}(f_t)f_j$. The outcome is $Y_{7d,t}=\sum_{k=1}^{21}\operatorname{sgn}(f_t)f_{t+k}$.

Figure~\ref{fig:lifetime-benchmark} displays the estimates computed from the supplied event table: lifetime, edge and historical-$t$ associations are $0.4741$, $0.3833$ and $0.3606$. Residualizing lifetime midranks on edge midranks with an intercept leaves association $0.2777$ (95\% interval $[0.1228,0.4061]$). Intervals use 5,000 calendar-month cluster bootstrap resamples across 38 months, with seed 1729. On the same 543-event panel, the rank correlation between $L_{\rm score}$ and the sign-oriented AR(1) cumulative forecast $\widehat m_t=|f_t|\sum_{k=1}^{21}\widehat\phi^k$ is $0.9222$, rising to $0.9952$ when this forecast is divided by its conditional AR(1) standard deviation. The reproduction guide supplies the forecast-variance formula, event table and source identities. The score thus summarizes the AR(1) persistence channel. These retrospective quantities concern signed funding; net trading returns additionally require execution, hedge, financing and collateral costs.

\begin{figure}[t]
\centering
\includegraphics[width=0.96\linewidth]{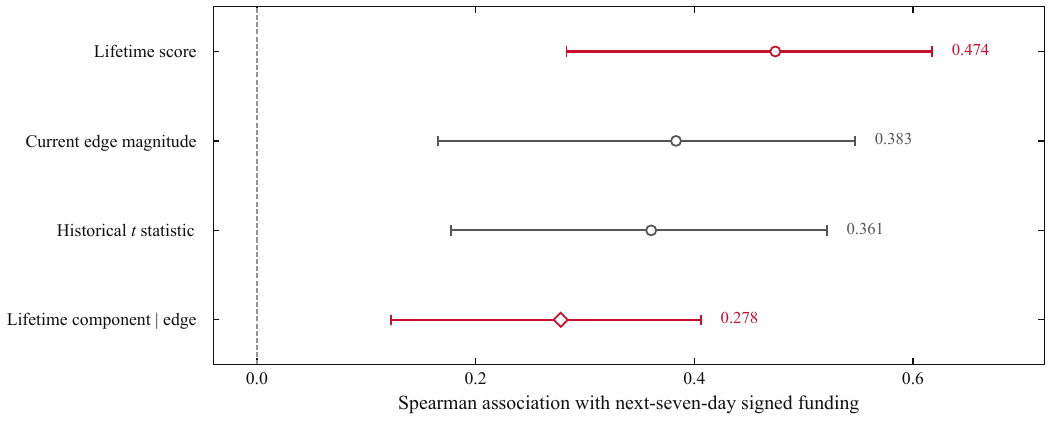}
\caption{Descriptive funding benchmark, 543 events. Points report Spearman associations with next-seven-day sign-oriented funding; horizontal whiskers show calendar-month cluster bootstrap 95\% intervals from 5,000 resamples. The last row residualizes lifetime-score rank on current-edge rank.}
\label{fig:lifetime-benchmark}
\end{figure}

\section{Discussion}\label{discussion}

CAC distinguishes evidence already accumulated, evidence that can still
arrive before economic death, and value remaining after certification.
A research team can use the frontier to screen a proposed validation
budget: a high current edge does not compensate for an information
lifetime too short to satisfy the declared reliability contract.
The screen concerns a new experiment; an experiment already in progress
also requires its current likelihood state.

For a portfolio manager, residual information and implementable cash
value are separate objects. Near the frontier, their logarithmic rates
can differ, and equal certifiability ratios need not imply equal economic
rankings. Trading limits, fees, implementation delays and the shape of
the remaining drift must enter the reward map. The theorems concern hypotheses within a specified drift experiment; realized portfolio profits lie outside their scope.

The theoretical conclusions also have explicit model limits. The
canonical likelihood must describe the joint observed experiment;
predictability of a signal alone does not make its likelihood contribution
disappear. Reward transfer requires uniform local behavior and control
away from the boundary. Ordered uncertainty preserves a least-profile
frontier only with a compatible reward order; nonordered alternatives
require a common minimax test. Estimated-model guarantees require uniform approximation bounds that include the null.

The funding exercise connects the lifetime perspective to an observable market quantity. Its retrospective associations characterize signed-funding persistence. A prospective trading evaluation would additionally specify execution, financing, hedging and portfolio constraints, yielding a distinct estimand from the persistence quantity studied here.
\section{Conclusion}\label{conclusion}

This perspective suggests treating validation speed as part of strategy design. A useful research rule must account for both the evidence needed to act and the opportunity remaining when that evidence arrives. Extending the analysis to richer observation models, implementable portfolio constraints and endogenous information acquisition would connect the certification frontier more closely to the allocation of arbitrage capital.

\section{Code and data availability}\label{code-and-data-availability}

The AlphaValue research package contains the numerical routines, synthetic examples, tests, editable figures and reproduction code for the theoretical and descriptive funding results reported here. The software archive is versioned at Zenodo under the concept DOI
\href{https://doi.org/10.5281/zenodo.23070742}{10.5281/zenodo.23070742}. Third-party funding histories remain subject to their source licenses; the reproduction guide identifies the required source revision and input paths. The derived event table, source-file identities and numerical references are included. The reproduction command recomputes the funding estimates and intervals offline; reconstructing the event table from raw histories requires the pinned inputs. The public bundle is scoped to this article. A clean editable installation uses \texttt{python -m pip install -e '.[test,plot]'} followed by \texttt{python reproduce/run.py}.

\fussy

\appendix

\section{Capacity with a type-I error constraint}\label{capacity-under-a-type-i-constraint}

With only a size constraint, let $c_\alpha(A)=\sup\Eone[(A-S)\delta]$ over $S\in\mathcal T_A$ and binary $\mathcal G_S$-measurable $\delta$ with $\Pzero(\delta=1)\le\alpha$. Independent randomization is available at time zero, so $c_\alpha(A)\ge\alpha A$. The economic value is $\mathsf{CAC}_\alpha(A)=\sigma^2c_\alpha(A)/(2\gamma)$.
\begin{theorem}[Nonasymptotic converse without a fixed power target]
For $A>0$, $r_\alpha=c_\alpha(A)/A$ obeys $A(1-r_\alpha)/2\ge\operatorname{kl}(r_\alpha,\alpha)$, and is bounded above by the unique root $\bar r\in[\alpha,1)$ of $A(1-\bar r)=2\operatorname{kl}(\bar r,\alpha)$.
\end{theorem}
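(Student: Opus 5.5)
We follow the proof of the information lower bound for reliable certification, this time without a power constraint. Fix any admissible rule $(S,\delta)$ with $\Pzero(\delta=1)\le\alpha$, and set $S=A$ on $\{\delta=0\}$; this can only increase $S$, so the reward is unchanged. Write $p=\Pone(\delta=1)$, $q=\Pzero(\delta=1)\le\alpha$ and $R=\Eone[(A-S)\delta]$.

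The proof rests on two inequalities. The first is $R\le Ap$, because $A-S\le A$ and $R\le A\,\Eone\delta$. The second comes from the identity $R=A-\Eone S$, which holds because $A-S=0$ on nondeployment. The theorem on statistical information and full-information value gives $\KL(\Pone^S\Vert\Pzero^S)=\tfrac12\Eone S$, and data processing onto the binary $\delta$ gives
\[
\tfrac12(A-R)\ge \operatorname{kl}(p,q).
\]

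Next we pass from $(p,q)$ to the ratio $r:=R/A$. If $p\le\alpha$, then $r\le p\le\alpha$. The claimed inequality at such $r$ is part of the root characterization addressed below; it is not needed for the bound, since the bound concerns only $r\ge\alpha$.

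If $p>\alpha\ge q$, monotonicity of $\operatorname{kl}(\cdot,\cdot)$ on the region $p\ge q$ (increasing in $p$, decreasing in $q$) gives $\operatorname{kl}(p,q)\ge\operatorname{kl}(p,\alpha)$. Since $r\le p$, we also have $\operatorname{kl}(p,\alpha)\ge\operatorname{kl}(r,\alpha)$ when $r\ge\alpha$. Combining these yields $A(1-r)/2\ge\operatorname{kl}(r,\alpha)$ for every feasible rule with $r\ge\alpha$.

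Every $r\ge\alpha$ that satisfies this inequality is at most $\bar r$. Define
\[
g(x)=A(1-x)-2\operatorname{kl}(x,\alpha)\quad\text{on }[\alpha,1].
\]
Then $g(\alpha)=A(1-\alpha)>0$ and $g(1)=-2\log(1/\alpha)<0$. The function is strictly decreasing, because $x\mapsto\operatorname{kl}(x,\alpha)$ is nondecreasing on $[\alpha,1]$. It is also continuous, since $\operatorname{kl}(x,\alpha)$ is finite and continuous up to $x=1$ when $\alpha>0$. Hence $g$ has a unique root $\bar r\in(\alpha,1)$, and $g(r)\ge0$ forces $r\le\bar r$.

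Since $c_\alpha(A)\ge\alpha A$, taking the supremum over rules with ratios in $[\alpha,\bar r]$ gives $r_\alpha\le\bar r$.

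It remains to check that $r_\alpha$ itself satisfies the inequality. This holds because $r_\alpha\in[\alpha,\bar r]$ and $g\ge0$ on that interval, and this monotonicity argument is how the statement should be read. A direct passage to the supremum would instead be delicate: the rules approaching $c_\alpha(A)$ might have $r<\alpha$. The key point is that $c_\alpha(A)\ge\alpha A$ places $r_\alpha$ in the region where $g$ is decreasing.

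The main point of care is exactly this step: establishing monotonicity of $\operatorname{kl}$ in both arguments on $\{p\ge r\ge\alpha\ge q\}$, and handling the case $p=1$ through lower semicontinuity. At $p=1$ the divergence $\operatorname{kl}(1,q)=\log(1/q)$ is finite whenever $q>0$. When $q=0$ it is infinite, but then the inequality $\tfrac12(A-R)\ge\operatorname{kl}(p,q)$ cannot hold with $R\le A$, so such rules are impossible.
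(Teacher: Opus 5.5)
Your proof is correct and follows the paper's argument. The stopped KL identity and data processing give $(A-R)/2\ge\operatorname{kl}(p,q)\ge\operatorname{kl}(r,\alpha)$ for $r\ge\alpha$, and strict decrease of $A(1-x)-2\operatorname{kl}(x,\alpha)$ on $[\alpha,1)$ gives the unique root $\bar r$. The only difference is the order: the paper passes the per-rule inequality to $r_\alpha$ through a maximizing sequence and continuity (treating $r_\alpha=\alpha$ separately), whereas you first bound every rule's ratio by $\bar r$ and then recover the inequality at $r_\alpha$ from $r_\alpha\in[\alpha,\bar r]$ and monotonicity, which avoids the limiting step.
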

\begin{proof}
For a feasible rule of reward $R$, set $r=R/A$, $p=\Pone(\delta=1)$, $q=\Pzero(\delta=1)$ and $S=A$ on nondeployment. Then $p\ge r$, $q\le\alpha$, and $(A-R)/2=\KL(\Pone^S\Vert\Pzero^S)\ge\operatorname{kl}(p,q)\ge\operatorname{kl}(r,\alpha)$ when $r\ge\alpha$. Apply this to a maximizing sequence and use continuity; the case $r_\alpha=\alpha$ is immediate. The root is unique because $A(1-r)-2\operatorname{kl}(r,\alpha)$ strictly decreases from positive to negative on $[\alpha,1)$.
\end{proof}
As $A\downarrow0$, $r_\alpha\to\alpha$: the limiting value comes from the permitted lottery alone. An evidence-added report can subtract $\alpha A$.

\paragraph{Constructive threshold.}\label{constructive-likelihood-threshold-lower-bound}
For $h>0$, put $\tau_h=\inf\{s:Z_s\ge h\}$. Brownian drift $\mu$ has first-passage cdf
\[
F_\mu(t;h)=\Phi((\mu t-h)/\sqrt t)+e^{2\mu h}\Phi((-\mu t-h)/\sqrt t).
\]
The unique $h_A$ with $F_{-1/2}(A;h_A)=\alpha$ gives the exact-level value $L_\alpha(A)=\int_0^A F_{1/2}(t;h_A)dt\le c_\alpha(A)$. The anytime boundary $h_\infty=\log(1/\alpha)$ has $\Pzero(\tau_{h_\infty}<\infty)=\alpha$.
\begin{theorem}[Asymptotic information cost of certification]
For fixed $\alpha\in(0,1)$, $A-c_\alpha(A)\to2\log(1/\alpha)$ as $A\to\infty$. Hence
\[
\frac{\mathsf{CAC}_\alpha(A)}{V^{PI}}
=1-\frac{\log(1/\alpha)}{\mathcal I_{\rm life}}+o(\mathcal I_{\rm life}^{-1}),
\qquad V^{PI}-\mathsf{CAC}_\alpha(A)\to\frac{\sigma^2}{\gamma}\log(1/\alpha).
\]
\end{theorem}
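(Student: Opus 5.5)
We need to prove $A-c_\alpha(A)\to 2\log(1/\alpha)$ as $A\to\infty$. The two economic statements then follow by multiplying by $\sigma^2/(2\gamma)$, using $V^{PI}=\sigma^2A/(2\gamma)$ and $\mathcal I_{\rm life}=A/2$. The plan is to prove a matching lower and upper bound on the information cost $A-c_\alpha(A)$.

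\emph{Upper bound on the cost (lower bound on $c_\alpha$).} Use the anytime boundary $h_\infty=\log(1/\alpha)$ and the rule that deploys at $\tau=\tau_{h_\infty}$ if $\tau\le A$. Its size is at most $\Pzero(\tau_{h_\infty}<\infty)=\alpha$, so it is feasible. Under $\Pone$, $Z$ is Brownian motion with drift $1/2$, and its first passage to level $h$ has mean $2h$ and finite variance. The reward is
\[
\Eone[(A-\tau)_+]=A-\Eone[\tau\wedge A]\ge A-\Eone\tau=A-2\log(1/\alpha).
\]
Hence $A-c_\alpha(A)\le 2\log(1/\alpha)$ for every $A$.

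\emph{Lower bound on the cost.} Take any feasible rule with reward $R=A-\Eone S$ (setting $S=A$ on nondeployment). Let $p=\Pone(\delta=1)$ and $q\le\alpha$, and for this rule write $r=R/A\le p$. The appendix converse gives $(A-R)/2\ge\operatorname{kl}(p,q)$, with $\operatorname{kl}(p,q)\ge\operatorname{kl}(r,\alpha)$ when $r\ge\alpha$. From the upper-bound step, $c_\alpha(A)/A\to1$, so along a maximizing sequence $r\to1$ as $A\to\infty$. Moreover $\operatorname{kl}(r,\alpha)\to\log(1/\alpha)$ as $r\to1$, since $r\log(r/\alpha)\to\log(1/\alpha)$ and $(1-r)\log((1-r)/(1-\alpha))\to0$. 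Therefore
\[
\liminf_{A\to\infty}\bigl(A-c_\alpha(A)\bigr)\ge 2\log(1/\alpha).
\]
More precisely, this uses the theorem's root characterization: $1-\bar r\ge 2\operatorname{kl}(\bar r,\alpha)/A$ forces $\bar r\to1$, and then $A(1-\bar r)=2\operatorname{kl}(\bar r,\alpha)\to2\log(1/\alpha)$. This gives $A(1-r_\alpha)\ge 2\operatorname{kl}(r_\alpha,\alpha)$ with $r_\alpha\to1$.

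\emph{Main obstacle.} The delicate point is that the converse only controls $\operatorname{kl}(r_\alpha,\alpha)$ and not $\operatorname{kl}(p,q)$ directly. This is handled by the monotonicity of $\operatorname{kl}$ in the region $p\ge r\ge\alpha\ge q$ together with $r_\alpha\to1$. For that we need $r_\alpha\ge\alpha$, which holds because of the lottery bound $c_\alpha(A)\ge\alpha A$, and $r_\alpha\to1$, which comes from the constructive bound. Combining the two bounds gives the limit, and the $o(\mathcal I_{\rm life}^{-1})$ remainder is just this convergence divided by $A$.
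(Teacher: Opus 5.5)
Your proof is correct and follows the paper's argument. The paper uses the anytime boundary $h=\log(1/\alpha)$ to get $A-c_\alpha(A)\le\Eone(\tau_h\wedge A)\le 2\log(1/\alpha)$, and then the appendix converse $\tfrac12(A-c_\alpha(A))\ge\operatorname{kl}(r_\alpha,\alpha)$ with $r_\alpha\to1$ to get the matching lower limit, exactly as you do. One small slip in your optional aside on $\bar r$: what forces $\bar r\to1$ is $\operatorname{kl}(\bar r,\alpha)\le\log(1/\alpha)$, i.e.\ $1-\bar r\le 2\log(1/\alpha)/A$, not the inequality you wrote; the aside is not needed for your main argument anyway.
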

\begin{proof}
Under $\Pone$, the hitting time of $h=\log(1/\alpha)$ has mean $2h$, so $g_A:=A-c_\alpha(A)\le\Eone(\tau_h\wedge A)\to2h$. The converse gives $g_A/2\ge\operatorname{kl}(1-g_A/A,\alpha)$; boundedness of $g_A$ makes its right side converge to $h$, proving the matching lower bound.
\end{proof}

\section{Constrained stopping representation}\label{constrained-stopping-representation}

Setting $S=A$ on nondeployment makes the objective $A-\Eone S$, linking it to optimal group-sequential design \cite{ealesjennison1992,barberjennison2002,hamjasinzhao2026}. In the convexified class, a rule uses an $\mathcal G_S$-measurable deployment probability $\rho_{\rm dep}\in[0,1]$. Write $\mathcal T_{s,A}$ for stopping times in $[s,A]$ and $\mathbb E^{(1)}_{s,z}$ for the likelihood diffusion started at $Z_s=z$.
\begin{theorem}[Lagrange representation for the type-I constrained problem]
In the randomized stopping class,
\[
c_\alpha(A)=\inf_{\lambda_{\rm Lag}\ge0}\{\lambda_{\rm Lag}\alpha+V_{\lambda_{\rm Lag}}(0,0)\},\qquad
V_{\lambda_{\rm Lag}}(s,z)=\sup_{S\in\mathcal T_{s,A}}\mathbb E^{(1)}_{s,z}(A-S-\lambda_{\rm Lag}e^{-Z_S})_+.
\]
\end{theorem}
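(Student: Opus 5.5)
The plan is a Lagrangian duality argument for a linear program over randomized stopping rules, followed by a pointwise simplification of the inner problem using the likelihood-ratio change of measure.

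\textbf{Reformulation.} First rewrite the constrained problem in the convexified class. A rule is a pair $(S,\rho_{\rm dep})$ with $S\in\mathcal T_{0,A}$ and $\rho_{\rm dep}\in[0,1]$ measurable with respect to $\mathcal G_S$. Its reward is $\Eone[(A-S)\rho_{\rm dep}]$ and its constraint is $\Ezero\rho_{\rm dep}\le\alpha$. The change of measure \eqref{eq:information-time-rn}, applied at the bounded time $S$, gives
\[
\Ezero\rho_{\rm dep}=\Eone[e^{-Z_S}\rho_{\rm dep}].
\]
Here I use $\Pone\sim\Pzero$ on $\mathcal G_A$, and $U_{\rm rnd}$ has the same law under both hypotheses. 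So both objective and constraint are $\Pone$-expectations.

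\textbf{Weak duality.} For each $\lambda_{\rm Lag}\ge0$ and each feasible rule,
\[
\Eone[(A-S)\rho_{\rm dep}]\le\lambda_{\rm Lag}\alpha+\Eone\bigl[\rho_{\rm dep}\,(A-S-\lambda_{\rm Lag}e^{-Z_S})\bigr].
\]
The expectation on the right is at most $\Eone(A-S-\lambda_{\rm Lag}e^{-Z_S})_+$, because $\rho_{\rm dep}\in[0,1]$. That quantity is in turn at most $V_{\lambda_{\rm Lag}}(0,0)$, since $Z_0=0$. This proves $c_\alpha(A)\le\inf_\lambda\{\cdots\}$.

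\textbf{Inner problem.} Conversely, for fixed $\lambda$ the unconstrained supremum over $(S,\rho_{\rm dep})$ of the Lagrangian is attained with $\rho_{\rm dep}=\mathbf 1\{A-S-\lambda e^{-Z_S}>0\}$. Hence the inner value is exactly $V_\lambda(0,0)$. The inner problem is a standard optimal stopping problem for the Markov pair $(s,Z_s)$, with $Z$ a Brownian motion with drift $+1/2$ under $\Pone$. The payoff is continuous and bounded by $A$, so an optimal stopping time exists: for example, the first entry into the stopping region, by Snell-envelope theory on the finite horizon.

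\textbf{Strong duality (main obstacle).} The delicate part is showing there is no duality gap. I would use a convexity argument in the space of achievable pairs
\[
(\text{reward},\ \text{size})=\bigl(\Eone[(A-S)\rho_{\rm dep}],\ \Eone[e^{-Z_S}\rho_{\rm dep}]\bigr).
\]
This set $\mathcal C\subset[0,A]\times[0,1]$ is convex, since the time-zero variable $U_{\rm rnd}$ lets one mix two rules with any weights without changing information. The set contains $(0,0)$ (never deploy), and Slater's condition holds because $\alpha>0$. The value $c_\alpha(A)$ is the supremum of the first coordinate over $\mathcal C$ subject to the second coordinate being at most $\alpha$. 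This is a concave, nondecreasing function $v(\alpha')$ of the level $\alpha'$.

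Because $\alpha$ is interior to $(0,1)$ and $v$ is finite, a supporting line at $\alpha$ exists with slope $\lambda^*\ge0$:
\[
v(\alpha')\le v(\alpha)+\lambda^*(\alpha'-\alpha)\quad\text{for all }\alpha'.
\]
For any rule, this gives $\text{reward}-\lambda^*\,\text{size}\le v(\alpha)-\lambda^*\alpha$. Taking the supremum over rules yields $V_{\lambda^*}(0,0)+\lambda^*\alpha\le c_\alpha(A)$, which closes the gap.

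The points I would check carefully are:
\begin{itemize}
\item that mixing via $U_{\rm rnd}$ stays inside the admissible class, that is, $S$ remains a $\Ginfo$-stopping time after the filtration is augmented at time zero;
\item that $v$ is finite and concave on all of $[0,1]$, so the superdifferential at the interior point $\alpha$ is nonempty;
\item that the bound $v(\alpha')\le A$ forces $\lambda^*<\infty$.
\end{itemize}
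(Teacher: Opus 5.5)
Your proposal is correct. The reformulation via $\Ezero\rho_{\rm dep}=\Eone[e^{-Z_S}\rho_{\rm dep}]$, the weak-duality bound and the pointwise maximization over $\rho_{\rm dep}$ are the same as in the paper. The difference lies in how you close the duality gap.

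The paper uses Baxter--Chacon compactness of randomized stopping times on the bounded horizon. Combined with continuity of reward and size (bounded rewards, uniformly integrable stopped likelihood ratios), this makes the attainable (reward, size) set compact and convex. The paper then separates, using the strictly feasible no-deploy rule as the Slater point.

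You instead use the level-value function $v(a)=\sup\{\text{reward}:\text{size}\le a\}$. Once the attainable set is convex, $v$ is concave, nondecreasing and bounded by $A$, and you take a supergradient at the interior point $\alpha$. This route needs no topology on stopping times and no continuity or uniform-integrability argument. It also gives an explicit multiplier bound $0\le\lambda^*\le (v(\alpha)-v(0))/\alpha\le A/\alpha$, and it reads $\lambda^*$ as the marginal value of extra type-I budget. What it does not give is closedness of the attainable set. The paper's compactness argument does, and with it the existence of a primal optimal randomized rule and a saddle pair.

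Two small points. In the convexity step, state explicitly that the single $U_{\rm rnd}$ must be split (for instance by binary digits) into an independent mixing coin and a fresh uniform for each component rule, because the component rules may already use $U_{\rm rnd}$. In the Baxter--Chacon formulation this convexity is automatic. Also, existence of an optimal stopping time for the inner problem is not needed for the identity; only its value $V_{\lambda^*}(0,0)$ enters.
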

\begin{proof}
The stopped likelihood identity gives $\Ezero\rho_{\rm dep}=\Eone[e^{-Z_S}\rho_{\rm dep}]$. Maximizing the Lagrangian pointwise over $\rho_{\rm dep}\in[0,1]$ gives the positive-part payoff and weak duality. Randomized stopping times on a bounded horizon are compact in the Baxter--Chacon topology \cite{baxterchacon1977}; bounded rewards and uniformly integrable stopped likelihood ratios give continuity of reward and size in this Brownian model. The attainable set is therefore compact and convex. The no-deploy rule strictly satisfies size when $\alpha>0$, so separation gives a nonnegative multiplier and no duality gap.
\end{proof}
This construction uses the classical constrained-stopping framework \cite{nachman1980,kennedy1982}, with related deadline formulations in \cite{frazier2007deadline,dayanik2013deadline,ekstromwang2024,campbell2026deadline}. With generator $\mathcal L=\tfrac12\partial_{zz}+\tfrac12\partial_z$, the continuation equation is $\partial_sV+\mathcal LV=0$, with obstacle $(A-s-\lambda_{\rm Lag}e^{-z})_+$. In the positive-obstacle region, $(\partial_s+\mathcal L)(A-s-\lambda_{\rm Lag}e^{-z})=-1$, and the stopping boundary is an upper boundary $b_{\lambda_{\rm Lag}}(s)$. For $\lambda_{\rm Lag}>0$, $V_{\lambda_{\rm Lag}}(s,z)=V_1(s,z-\log\lambda_{\rm Lag})$ and $b_{\lambda_{\rm Lag}}=\log\lambda_{\rm Lag}+b_1$. Under smooth fit and the standard boundary regularity, the early-exercise representation and boundary equation are
\[
V_{\lambda_{\rm Lag}}(s,z)=\int_s^A\mathbb P^{(1)}_{s,z}\{Z_u\ge b_{\lambda_{\rm Lag}}(u)\}\,du,
\]
\[
A-s-\lambda_{\rm Lag}e^{-b_{\lambda_{\rm Lag}}(s)}
=\int_s^A\Phi\!\left(\frac{b_{\lambda_{\rm Lag}}(s)+(u-s)/2-b_{\lambda_{\rm Lag}}(u)}{\sqrt{u-s}}\right)du.
\]
For reliable CAC, add $\omega\ge0$: the stopping payoff becomes $(A-S+\omega-\lambda_{\rm Lag}e^{-Z_S})_+$, defining $V_{\lambda_{\rm Lag},\omega}$, and the dual objective is $\lambda_{\rm Lag}\alpha-\omega(1-\beta)+V_{\lambda_{\rm Lag},\omega}(0,0)$ under strong-duality hypotheses. Feasibility must first be checked by Theorem~\ref{thm:exact-gaussian-threshold}. For finite multipliers and $s<A$, waiting until any deterministic $u\in(s,A)$ has positive expected payoff by Gaussian full support. Immediate nondeployment cannot be optimal in this cost-free dual; an abandonment region requires observation cost or another binding constraint. The continuation state $(s,z)$ includes both time and log likelihood.

\subsection{Finite-player equilibrium proof}\label{app:finite-player-proof}

Proof of Proposition~\ref{prop:finite-linear}.
\begin{proof} In the positive-rent region the payoff is
\[
u_i=\frac{e_i}{E}a\Delta\mathcal I-(a{\kappa_{{\rm crowd}}}+\psi)e_i.
\]
Set $\mathcal V_0:=a\Delta\mathcal I$ and $k:=a{\kappa_{{\rm crowd}}}+\psi$. Define the aggregate rival effort by $E_{-i}:=\sum_{j\ne i}e_j$. Against $E_{-i}>0$, the first and second derivatives are
\[
\frac{\partial u_i}{\partial e_i}
=\frac{\mathcal V_0 E_{-i}}{(e_i+E_{-i})^2}-k,
\qquad
\frac{\partial^2 u_i}{\partial e_i^2}
=-\frac{2\mathcal V_0 E_{-i}}{(e_i+E_{-i})^3}<0.
\]
At a symmetric equilibrium \(e_i=e\) and \(E=Ne\), the first-order condition
gives \(e=\mathcal V_0(N-1)/(N^2k)\) and hence the stated \(E_N\). Moreover,
\[
\frac{{\kappa_{{\rm crowd}}} E_N}{\Delta\mathcal I}
=\frac{a{\kappa_{{\rm crowd}}}(N-1)}{N(a{\kappa_{{\rm crowd}}}+\psi)}<1,
\]
so the solution lies strictly above the CAC frontier. Substituting
$\mathcal I_N^{\rm eq}=\mathcal I_0-{\kappa_{{\rm crowd}}} E_N$ yields the residual-information identities. A unilateral deviation that pushes aggregate effort beyond the positive-rent region obtains zero gross rent and nonpositive net payoff, whereas the stated interior equilibrium gives a strictly positive payoff. Strict concavity in each player's own effort within the positive-rent region gives uniqueness among symmetric equilibria with positive gross rent. When $\psi>0$, any zero-rent profile with positive symmetric effort yields negative payoff and is improved by choosing zero. When $\psi=0$, additional zero-rent equilibria can exist, and no global uniqueness claim is made. \end{proof}

\section{Unknown scale and predictable regressors}\label{composite-uncertainty-and-unknown-scale}

\subsection{A pairwise KL necessary condition}\label{composite-pairwise-kl-condition}
Let $\mathfrak P_0,\mathfrak P_1$ be declared null and alternative laws on the observed sample space, and require $\sup_{Q_0\in\mathfrak P_0}Q_0(\delta=1)\le\alpha$ and $\inf_{Q_1\in\mathfrak P_1}Q_1(\delta=1)\ge1-\beta>\alpha$.
\begin{theorem}[Pairwise KL necessary condition]\label{thm:composite-kl-obstruction}
Every compatible pair $(Q_1,Q_0)\in\mathfrak P_1\times\mathfrak P_0$ with $Q_1\ll Q_0$ satisfies $\KL(Q_1\Vert Q_0)\ge\operatorname{kl}(1-\beta,\alpha)$.
\end{theorem}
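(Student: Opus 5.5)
The plan is a data-processing argument applied to each pair separately, repeating the step used in the proof of the information lower bound for reliable certification. Fix a feasible decision $\delta$, meaning a binary (or randomized, $[0,1]$-valued) $\mathcal F$-measurable statistic satisfying both composite constraints. Fix any compatible pair $(Q_1,Q_0)$ with $Q_1\ll Q_0$. If $\KL(Q_1\Vert Q_0)=+\infty$ the claim is trivial, so assume it is finite.

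First I will push both laws forward through $\delta$. For randomized $\delta$, I will treat it via the auxiliary uniform variable, so the Markov kernel is the same under both laws. This gives Bernoulli laws with parameters $p:=Q_1(\delta=1)$ and $q:=Q_0(\delta=1)$. The data-processing inequality for relative entropy then yields
\[
\KL(Q_1\Vert Q_0)\ge\operatorname{kl}(p,q),
\]
with the boundary cases ($q=0$ or $q=1$) handled by the lower-semicontinuous extension of $\operatorname{kl}$ fixed in Section~\ref{probabilistic-setup}. Absolute continuity $Q_1\ll Q_0$ is exactly what makes the pushforward $\operatorname{Bern}(p)\ll\operatorname{Bern}(q)$. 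It therefore rules out $q=0<p$ and keeps the left side meaningful.

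Next, feasibility gives $p\ge\inf_{\mathfrak P_1}Q(\delta=1)\ge1-\beta$ and $q\le\sup_{\mathfrak P_0}Q(\delta=1)\le\alpha$. Since $1-\beta>\alpha$, we have $p>q$.

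The remaining step is the monotonicity of $\operatorname{kl}$ on the region $\{p\ge q\}$. Here $\partial_p\operatorname{kl}(p,q)=\log\frac{p(1-q)}{q(1-p)}\ge0$ and $\partial_q\operatorname{kl}(p,q)=\frac{q-p}{q(1-q)}\le0$. Hence increasing $p$ to any value at least $1-\beta$, or decreasing $q$ to any value at most $\alpha$, only increases $\operatorname{kl}$, giving
\[
\operatorname{kl}(p,q)\ge\operatorname{kl}(1-\beta,\alpha).
\]
I will move along the segment from $(1-\beta,\alpha)$ to $(p,q)$: first raise $p$ with $q=\alpha$ fixed, then lower $q$ with $p$ fixed. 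This segment stays in $\{p\ge q\}$, so the sign conditions hold throughout.

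The main obstacle is the endpoints $p=1$ or $q=0$, where the derivatives blow up. I will dispatch these cases through lower semicontinuity. If $q=0$, then $Q_1\ll Q_0$ forces $p=0$, which contradicts $p\ge1-\beta>0$. If $p=1$, the value $\operatorname{kl}(1,q)=\log(1/q)\ge\log(1/\alpha)$ still dominates, by continuity of the monotone path.

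Chaining the two inequalities proves the theorem for every compatible pair. The condition is necessary but not sufficient: each pair is treated separately, so the bound ignores the composite geometry emphasized in Proposition~\ref{prop:least-favorable-mixture}.
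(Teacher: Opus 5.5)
Your proposal is correct and follows the paper's argument. Data processing from the observed law to the (possibly randomized) binary decision gives $\KL(Q_1\Vert Q_0)\ge\operatorname{kl}(p,q)$, and monotonicity of $\operatorname{kl}$ on the region $p\ge1-\beta>\alpha\ge q$ reduces the bound to the corner value $\operatorname{kl}(1-\beta,\alpha)$. Your partial derivatives, the two-leg path and the endpoint cases fill in details the paper's one-line proof leaves implicit: $q=0$ is excluded by $Q_1\ll Q_0$, and $p=1$ follows by continuity.
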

\begin{proof} Data processing to $\delta$ gives the binary KL; on $p\ge1-\beta>\alpha\ge q$, it is minimized at $(1-\beta,\alpha)$.\end{proof}

\paragraph{AR(1) benchmark.}\label{g2-stationary-information-rate}
For $X_{t+1}=\phi X_t+\xi_{t+1}$, $Y_{t+1}=\theta X_t+\varepsilon_{t+1}$, assume independent centered Gaussian innovation sequences of variances $v_X,v_R>0$, $|\phi|<1$, and stationary $X$. The boundary companion has loading $b$ and the same nuisance parameters and initial law. The expected per-observation KL is
\[
\kappa_{\rm KL}(\theta,\phi,v_R,v_X;b)=\frac{(\theta-b)^2v_X}{2v_R(1-\phi^2)}.
\]
For a declared compact rectangle, let $\underline\kappa_{\rm KL}$ be its infimum. The minimum uses the smallest $|\theta-b|$ and $|\phi|$, smallest $v_X$ and largest $v_R$; if the persistence interval contains zero, its minimum is at zero. Every minimizing boundary companion must belong to the null family. If $\underline\kappa_{\rm KL}>0$, then $L_{\rm hor}\ge\lceil\operatorname{kl}(1-\beta,\alpha)/\underline\kappa_{\rm KL}\rceil$ is necessary. Marginal envelopes without a compatible joint pair do not prove impossibility.

\subsection{Sequential certification with predictable regressors}\label{predictable-design-anytime-achievability}
For $\mathcal F_t$-measurable $X_t$ before observing $Y_{t+1}$, assume $Y_{t+1}=\theta X_t+\varepsilon_{t+1}$, $\varepsilon_{t+1}\mid\mathcal F_t\sim N(0,v)$ with $v\le\bar v$. Test $\theta\le b$ against $\theta\ge b+d$, $d>0$. For $q>0$, define
\[
\ell_\alpha=\log(1/\alpha),\quad \lambda_q=\sqrt{2\ell_\alpha/(\bar v q)},\quad
Q_t=\sum_{s<t}X_s^2,\quad U_t=\sum_{s<t}X_s(Y_{s+1}-bX_s),
\]
and $E_t=\exp(\lambda_qU_t-\lambda_q^2\bar vQ_t/2)$, $\tau_q=\inf\{t:Q_t\ge q\}$, $\tau_E=\inf\{t:E_t\ge1/\alpha\}$.
\begin{theorem}[Predictable-design e-process certificate]\label{thm:predictable-certificate}
Under every null, $E_t$ is a nonnegative supermartingale and $\mathbb P_{\theta,v}(\tau_E<\infty)\le\alpha$. Under every alternative, if $\mathcal I_q=d^2q/(2\bar v)>\ell_\alpha$,
\[
\mathbb P_{\theta,v}(\tau_E>\tau_q,\ \tau_q<\infty)
\le e^{-(\sqrt{\mathcal I_q}-\sqrt{\ell_\alpha})^2}.
\]
Consequently $\mathcal I_q\ge[\sqrt{\log(1/\alpha)}+\sqrt{\log(1/\beta)}]^2$ gives power $1-\beta$ by $\tau_q$ if $\tau_q<\infty$ almost surely throughout the alternative class. Without reachability, only the displayed joint-event bound is asserted.
\end{theorem}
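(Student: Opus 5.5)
The supermartingale claim and the size bound come first; the alternative bound follows from a Chernoff estimate at the stopping time $\tau_q$.

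\textbf{Null: supermartingale and size.} Write the increment of $\log E_t$ as
\[
\lambda_qX_t(Y_{t+1}-bX_t)-\lambda_q^2\bar vX_t^2/2 .
\]
Under a null law $\theta\le b$, we have $Y_{t+1}-bX_t=(\theta-b)X_t+\varepsilon_{t+1}$. Condition on $\mathcal F_t$ and use the Gaussian moment generating function with $v\le\bar v$. The conditional expectation of the multiplicative factor is then
\[
\exp\!\bigl(\lambda_q(\theta-b)X_t^2+\lambda_q^2(v-\bar v)X_t^2/2\bigr)\le1,
\]
because $\lambda_q>0$, $\theta-b\le0$ and $X_t^2\ge0$. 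Here the predictability of $X_t$ is what allows us to treat $X_t$ as a constant under $\mathcal F_t$. So $E_t$ is a nonnegative supermartingale with $E_0=1$, and Ville's maximal inequality gives $\mathbb P(\sup_tE_t\ge1/\alpha)\le\alpha$, which is the size claim.

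\textbf{Alternative: reduction to a martingale tail.} Fix an alternative with $\theta\ge b+d$. On the event $\{\tau_E>\tau_q,\tau_q<\infty\}$ we have $E_{\tau_q}<1/\alpha$, which means
\[
\lambda_qU_{\tau_q}-\lambda_q^2\bar vQ_{\tau_q}/2<\ell_\alpha .
\]
Decompose $U_t=(\theta-b)Q_t+M_t$, where $M_t=\sum_{s<t}X_s\varepsilon_{s+1}$ is a martingale. Since $\theta-b\ge d$, $Q_{\tau_q}\ge q$, and $\lambda_q^2\bar vq/2=\ell_\alpha$, the event forces
\[
M_{\tau_q}\le -\Bigl(d-\tfrac{\lambda_q\bar v}{2}\Bigr)Q_{\tau_q}+\ell_\alpha/\lambda_q .
\]
Check the sign of the drift coefficient: $\lambda_q\bar v/2=\sqrt{\ell_\alpha\bar v/(2q)}$, and the hypothesis $\mathcal I_q>\ell_\alpha$ is exactly $d>2\cdot\sqrt{\ell_\alpha\bar v/(2q)}/\ldots$; more cleanly, $d\sqrt{q/(2\bar v)}=\sqrt{\mathcal I_q}$ and $\lambda_q\sqrt{q\bar v/2}=\sqrt{\ell_\alpha}$. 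After normalizing by $\sqrt{\bar v q/2}$, the event states that a martingale with quadratic variation at most $\bar vQ$ lies below a linear function of $Q$.

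\textbf{Chernoff bound via a second supermartingale.} For $\mu>0$, the process
\[
\exp\!\bigl(-\mu M_t-\mu^2\bar vQ_t/2\bigr)
\]
is a nonnegative supermartingale, by the same conditional Gaussian MGF argument. Optional stopping at $\tau_q$ (Fatou on $\{\tau_q<\infty\}$) bounds the probability of the event by
\[
\exp\!\Bigl(\mu\bigl[-(d-\lambda_q\bar v/2)Q_{\tau_q}+\ell_\alpha/\lambda_q\bigr]+\mu^2\bar vQ_{\tau_q}/2\Bigr),
\]
provided the exponent is nonincreasing in $Q_{\tau_q}$. That holds when $\mu\bar v/2\le d-\lambda_q\bar v/2$. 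Under this restriction we may replace $Q_{\tau_q}$ by its lower bound $q$.

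\textbf{Main obstacle: optimizing over $\mu$.} Set $a=\sqrt{\mathcal I_q}$ and $c=\sqrt{\ell_\alpha}$. After the substitution $Q_{\tau_q}=q$, the exponent is a quadratic in $\mu$ whose minimum is $-(a-c)^2$. The minimizer is $\mu^*=(d-\lambda_q\bar v)/\bar v$ up to scaling, i.e.\ $\mu^*\bar v/2\propto(a-c)$. One must check that $\mu^*$ satisfies the monotonicity constraint $\mu^*\bar v/2\le d-\lambda_q\bar v/2$, and I expect this to hold with room to spare since $a>c$. Getting this bookkeeping right, and making sure the ``$Q\ge q$'' replacement is legitimate at the optimizer, is the step needing care; everything else is mechanical.

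\textbf{Corollary.} Setting $(\sqrt{\mathcal I_q}-\sqrt{\ell_\alpha})^2\ge\log(1/\beta)$ gives failure probability at most $\beta$. If $\tau_q<\infty$ almost surely, then $\mathbb P(\tau_E\le\tau_q)\ge1-\beta$, which is the power claim.
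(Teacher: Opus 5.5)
Your proposal is correct and follows the paper's argument: Ville's inequality for the null supermartingale, then a reduction of non-certification by $\tau_q$ to a lower-tail event for $M_{\tau_q}$, bounded by Markov's inequality for $\exp(-\mu M_t-\mu^2\bar vQ_t/2)$. Your optimizer $\mu^*=(d-\lambda_q\bar v)/\bar v$ is exactly the paper's $\zeta=c_q/\bar v$ with $c_q=d-\sqrt{2\ell_\alpha\bar v/q}$, so no rescaling is needed; the paper absorbs the constant first via $\ell_\alpha/\lambda_q\le(\ell_\alpha/(\lambda_q q))Q_{\tau_q}$ to get the event $M_{\tau_q}\le-c_qQ_{\tau_q}$, and your monotonicity constraint is equivalent bookkeeping that holds because $\mu^*\bar v/2=(d-\lambda_q\bar v)/2\le d-\lambda_q\bar v/2$ when $d>0$.
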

\begin{proof}
The conditional Gaussian mgf is at most one under the null, proving validity. Under an alternative, put $N_t=\sum_{s<t}X_s\varepsilon_{s+1}$; then $\log E_t\ge(\lambda_qd-\lambda_q^2\bar v/2)Q_t+\lambda_qN_t$. Failure at finite $\tau_q$ implies $N_{\tau_q}\le-c_qQ_{\tau_q}$, where $c_q=d-\lambda_q\bar v/2-\ell_\alpha/(\lambda_qq)=d-\sqrt{2\ell_\alpha\bar v/q}>0$. Optional sampling and Markov's inequality for the nonnegative supermartingale $\exp(-\zeta N_t-\zeta^2\bar vQ_t/2)$ bound this event by $\exp[-q(\zeta c_q-\zeta^2\bar v/2)]$. Optimizing at $\zeta=c_q/\bar v$ gives the result.
\end{proof}
Estimated $d$ or $\bar v$ require adding their confidence-set failure to the error budget. Unknown-variance inference is discussed in \cite{wangramdas2025,koningvanmeer2026}; the software includes fixed-design Studentization and Normal--Inverse-Gamma monitoring benchmarks, whose joint-model and clock assumptions must be checked separately.\label{nig-confidence-sequences}

\subsection{Discrete information--value identity}\label{discrete-g2-informationeconomic-conservation}
Use the stationary AR model above over $L_{\rm hor}$ periods. A position $q_t$ with $q_{-1}=0$ earns $\theta X_tq_t-\gamma q_t^2/2-\lambda_{\rm tr}(q_t-q_{t-1})^2/2$, with $\gamma>0$, $\lambda_{\rm tr}\ge0$, and terminal penalty $\eta_Tq_{L_{\rm hor}-1}^2/2$, $\eta_T\ge0$. Define backward from $K_{L_{\rm hor}}=\eta_T$, $\Lambda_{L_{\rm hor}}^q=0$,
\[
\begin{gathered}
D_t=\gamma+\lambda_{\rm tr}+K_{t+1},\quad a_t=\lambda_{\rm tr}/D_t,\quad
g_t(\phi)=(1+\phi\Lambda_{t+1}^q)/D_t,\\
K_t=\lambda_{\rm tr}(\gamma+K_{t+1})/D_t,\qquad \Lambda_t^q=\lambda_{\rm tr}g_t(\phi).
\end{gathered}
\]
The full-information and boundary feedbacks are $q_t^\theta=a_tq_{t-1}^\theta+\theta g_tX_t$ and $q_t^b=a_tq_{t-1}^b+bg_tX_t$.
\begin{theorem}[Discrete information--value identity]\label{thm:g2-conservation}
The expected regret of the boundary policy and return-channel information satisfy
\[
R_{L_{\rm hor}}^{\theta:b}
=\frac{v_X(\theta-b)^2}{2(1-\phi^2)}\sum_{t=0}^{L_{\rm hor}-1}D_tg_t^2
=\Pi_{L_{\rm hor}}\mathcal I_{L_{\rm hor},\rm return}^{\theta:b},
\]
where $\mathcal I_{L_{\rm hor},\rm return}^{\theta:b}=L_{\rm hor}v_X(\theta-b)^2/[2v_R(1-\phi^2)]$ and $\Pi_{L_{\rm hor}}=(v_R/L_{\rm hor})\sum_tD_tg_t^2$. With $\lambda_{\rm tr}=\eta_T=0$, $\Pi_{L_{\rm hor}}=v_R/\gamma$.
\end{theorem}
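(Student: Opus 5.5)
The plan is to prove both equalities by direct computation along the backward recursion. The two policies are linear feedbacks with the same gain structure: $q_t^\theta=a_tq_{t-1}^\theta+\theta g_tX_t$ and $q_t^b=a_tq_{t-1}^b+bg_tX_t$. Hence the gap $e_t:=q_t^\theta-q_t^b$ satisfies $e_t=a_te_{t-1}+(\theta-b)g_tX_t$ with $e_{-1}=0$.

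I would first verify by backward induction that the full-information value function is quadratic in $q_{t-1}$. Write it as $-\tfrac12K_tq_{t-1}^2+\Lambda_t^q\theta X_tq_{t-1}+(\text{terms free of }q_{t-1})$, where conditioning on $X_t$ uses $\mathbb E[X_{t+1}\mid X_t]=\phi X_t$. This is exactly what produces the stated $D_t$, $g_t(\phi)$, $K_t$ and $\Lambda_t^q$. In particular, the stage-$t$ objective as a function of $q_t$, given $q_{t-1}$ and $X_t$, is the concave quadratic
\[
-\tfrac12D_tq_t^2+(\lambda_{\rm tr}q_{t-1}+\theta X_t(1+\phi\Lambda^q_{t+1}))q_t+\cdots,
\]
whose maximizer is $q_t^\theta$.

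The key step is the regret decomposition. Running any policy against the true loading $\theta$, the one-step loss from choosing $q_t$ instead of the optimizer, holding the state $q_{t-1}$ fixed, is exactly $\tfrac12D_t(q_t-q_t^{*}(q_{t-1}))^2$. This is a performance-difference (telescoping Bellman) identity.

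Apply it to the boundary policy. Its state is $q^b_{t-1}$, and the optimal response there is $a_tq^b_{t-1}+\theta g_tX_t$, so the deviation is $(q_t^b-a_tq^b_{t-1}-\theta g_tX_t)=-(\theta-b)g_tX_t$. Summing the stage losses with the telescoping of value functions gives
\[
R_{L_{\rm hor}}^{\theta:b}=\sum_t\tfrac12D_t(\theta-b)^2g_t^2\,\mathbb E X_t^2.
\]
Stationarity gives $\mathbb EX_t^2=v_X/(1-\phi^2)$, which is the first equality.

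The main obstacle is justifying the performance-difference identity with the cross term $\Lambda^q_{t+1}$. The continuation value of the true problem, evaluated along the boundary trajectory, must telescope. Deviations change future states, but the Bellman equation already accounts for that, because the one-step advantage at each visited state is exactly the quadratic completion above. I would write $J_t(q_{t-1},X_t)$ and use
\[
J_t-(\text{stage reward}+\mathbb E[J_{t+1}\mid\cdot])=\tfrac12D_t(\text{deviation})^2.
\]
The expectation of the cross term $\Lambda^q_{t+1}\theta X_{t+1}q_t$ must be handled via $\mathbb E[X_{t+1}\mid\mathcal F_t]=\phi X_t$. This holds because $X$ is Markov and the policies are adapted.

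The second equality is bookkeeping. The per-observation KL in the return channel is $(\theta-b)^2\mathbb EX_t^2/(2v_R)$, giving the stated $\mathcal I_{L_{\rm hor},\rm return}^{\theta:b}$, and dividing by it gives the stated $\Pi_{L_{\rm hor}}$. For the special case $\lambda_{\rm tr}=\eta_T=0$, all $K_t=0$ and $\Lambda^q_t=0$. Then $D_t=\gamma$ and $g_t=1/\gamma$, so $D_tg_t^2=1/\gamma$ and $\Pi_{L_{\rm hor}}=v_R/\gamma$.
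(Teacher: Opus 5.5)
Your proposal is correct and matches the paper's proof step for step: backward completion of the Bellman square gives the feedbacks, and the one-period Bellman loss at the boundary policy's own inventory is $\tfrac12 D_t[(\theta-b)g_tX_t]^2$. The continuation values then telescope, including the terminal penalty through $K_{L_{\rm hor}}=\eta_T$, and you finish with stationary second moments, division by the return-channel KL, and the zero-cost case $K_t=\Lambda_t^q=0$, $D_tg_t^2=1/\gamma$.
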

\begin{proof}
Backward completion of the Bellman square gives the feedbacks. Evaluated at the boundary policy's own inventory, its one-period Bellman loss is $D_t[(\theta-b)g_tX_t]^2/2$. Continuation values telescope, including terminal inventory cost. Take stationary expectations $\mathbb E X_t^2=v_X/(1-\phi^2)$ and divide by KL. When costs vanish, $D_tg_t^2=1/\gamma$.
\end{proof}
For contemporaneous innovation correlation $|\rho_{\rm corr}|<1$ with the same variances, joint Gaussian KL is $\mathcal I_{L_{\rm hor},\rm full}^{\theta:b}=\mathcal I_{L_{\rm hor},\rm return}^{\theta:b}/(1-\rho_{\rm corr}^2)$; the full-information price per nat is $(1-\rho_{\rm corr}^2)\Pi_{L_{\rm hor}}$. Zero correlation supplies the least-information member only when it belongs to the declared family.

\section{Random economic lifetime}\label{random-economic-death}

Let $H=A_T$, and consider a $\Ginfo$-stopping time $S\le H$ and a binary $\mathcal G_S$-measurable deployment decision $\delta$, with no deployment at death. Assume $\Eone H<\infty$, the stopped likelihood identity, and that observing the deadline adds no likelihood factor omitted from $Z$. Localization and $L^2$ optional sampling give $\KL(\Pone^S\Vert\Pzero^S)=\Eone S/2$. Data processing therefore implies the necessary obstruction
\[
\tfrac12\Eone H\ge\operatorname{kl}(1-\beta,\alpha)
\]
for size $\alpha$ and power $1-\beta>\alpha$. The mean is insufficient: lower tails control how often certification can finish. Let $\tau_{\rm cert}$ be the deployment time, with $\tau_{\rm cert}=+\infty$ on nondeployment. If $F(h)$ and $S_H(h)$ lower-bound $\Pone(\tau_{\rm cert}\le h)$ and $\Pone(H>h)$ for a fixed rule, the union bound gives $\Pone(\tau_{\rm cert}<H)\ge\sup_h[F(h)+S_H(h)-1]_+$ without independence. The canonical terminal power envelope gives the complementary upper bound $\Pone(\delta=1,S<H)\le\inf_h\{\pi_\alpha(h)+\Pone(H>h)\}\wedge1$ whenever the stopped experiment embeds in that canonical experiment.

For a common deadline law independent of the likelihood process under both hypotheses, let $H\sim\mathrm{Exp}(\varrho)$, $\varrho>0$, and deploy at $\tau_h=\inf\{s:Z_s\ge h\}$ only if $\tau_h<H$. Put $s_\varrho=\sqrt{1/4+2\varrho}$, $a_0=s_\varrho+1/2$ and $a_1=s_\varrho-1/2$. The Brownian first-passage Laplace transform gives
\[
\Pzero(\tau_h<H)=e^{-a_0h},\qquad
\Pone(\tau_h<H)=e^{-a_1h}.
\]
Thus $h=-\log\alpha/a_0$ gives exact size $\alpha$ and power $\pi^{\exp}_{\alpha,\varrho}=\alpha^{a_1/a_0}$. This constant-threshold rule meets the power target when that quantity is at least $1-\beta$. Setting $r=\log(1-\beta)/\log\alpha\in(0,1)$ and solving equality yields
\[
\varrho_{\rm bench}=\frac{r}{2(1-r)^2},\qquad
\frac{\Eone H}{2}=\frac{(1-r)^2}{r}.
\]
At $\alpha=0.05$, power $0.90$, the benchmark mean is about $26.47$ KL nats versus $4.28$ for a deterministic deadline. This constant-threshold rule gives a constructive benchmark; the optimal random-deadline frontier remains open.

\section{Sharp near-critical capacity}\label{app:sharp-capacity-proof}

This appendix proves Theorem~\ref{thm:near-critical-capacity} and the reward transfer used in Section~\ref{sharp-economic-capacity}. Throughout,
\[
A_*:=A_{\rm crit}(\alpha,\beta)=(z_{1-\alpha}+z_{1-\beta})^2,
\]
and
\[
c_A:=-\frac A2+\sqrt A\,z_{1-\alpha},
\qquad
\delta_A^{\rm NP}:=\mathbf1_{\{Z_A\ge c_A\}},
\qquad
\pi_\alpha(A):=\Phi(\sqrt A-z_{1-\alpha}).
\]
A randomized rule consists of a stopping time $S\le A$ and a deployment probability $0\le\delta\le1$ measurable with respect to $\mathcal G_S$, satisfying
\[
\Ezero\delta\le\alpha,\qquad \Eone\delta\ge1-\beta.
\]
Write
\[
u:=A-S,\qquad x:=Z_S-c_A.
\]

For $\rho_{\rm R}>0$, define the residual-moment capacity
\[
c^{[\rho_{\rm R}]}_{\alpha,\beta}(A)
:=\sup\Eone[u^{\rho_{\rm R}}\delta]
\]
over the same reliable rule class, and set
\[
\mathcal M_{\alpha,\beta}(\rho_{\rm R})
:=A_*^{\rho_{\rm R}}
\mathbb E\!\left[(N_{\rm G}+z_{1-\beta})_+^{2\rho_{\rm R}}\right],
\qquad N_{\rm G}\sim N(0,1).
\]

\begin{theorem}[Sharp residual-moment law]\label{thm:sharp-residual-moment}
For every $\rho_{\rm R}>0$,
\[
c^{[\rho_{\rm R}]}_{\alpha,\beta}(A_*)=0,
\]
and
\[
\lim_{\varepsilon\downarrow0}
[\log(1/\varepsilon)]^{\rho_{\rm R}}
 c^{[\rho_{\rm R}]}_{\alpha,\beta}(A_*+\varepsilon)
=2^{-\rho_{\rm R}}\mathcal M_{\alpha,\beta}(\rho_{\rm R}).
\]
The upper bound holds over randomized rules, and a binary conditional-error crossing rule attains the limit.
\end{theorem}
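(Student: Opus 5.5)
The plan is to compare every reliable rule with the terminal Neyman--Pearson test $\delta^{\rm NP}_A$, using Lagrangian (ROC-slack) accounting. The central heuristic is that a certificate issued at residual time $u=A-S$ with likelihood margin $x=Z_S-c_A$ incurs a conditional-error cost of order $e^{-x^2/(2u)}$. The total slack available near the frontier is only $O(\varepsilon)$, so early certification is possible only when $u\lesssim x^2/(2\log(1/\varepsilon))$. Under $\Pone$ at $A=A_*$, the terminal margin $Z_A-c_A$ is $N(A/2-c_A,A)$. Since $A/2-c_A=\sqrt A(\sqrt A-z_{1-\alpha})=\sqrt A\,z_{1-\beta}$ at $A_*$, we get $x\approx\sqrt{A_*}(N_{\rm G}+z_{1-\beta})$. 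Substituting this into $u^{\rho_{\rm R}}$ produces exactly $2^{-\rho_{\rm R}}\mathcal M_{\alpha,\beta}(\rho_{\rm R})/[\log(1/\varepsilon)]^{\rho_{\rm R}}$. The case $A=A_*$ follows verbatim from the endpoint argument in the proof of Theorem~\ref{thm:exact-gaussian-threshold}, which forces $S=A$ almost surely on deployment; the randomized version uses uniqueness of the most powerful test.

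\emph{Upper bound.} Set $k=e^{c_A}$. Because $e^{-Z}$ is a $\Pone$-martingale and $\Ezero\delta=\Eone[e^{-Z_S}\delta]$, any rule satisfies
\[
\Eone\delta-k\Ezero\delta=\Eone[\delta(1-ke^{-Z_S})]
=\Eone[(1-ke^{-Z_A})_+]-\Eone\bigl[\Eone[(1-ke^{-Z_A})_+\mid\mathcal G_S]-\delta\,\Eone[1-ke^{-Z_A}\mid\mathcal G_S]\bigr].
\]
The subtracted term dominates $\Eone[\delta\,g(x,u)]$, where $g(x,u):=\Eone[(ke^{-Z_A}-1)_+\mid Z_S,S]$. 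The constraints $\Eone\delta\ge1-\beta$ and $\Ezero\delta\le\alpha$ then bound the first line from below. The first term on the right equals $\pi_\alpha(A)-k\alpha$. Combining these with $\pi_\alpha(A_*+\varepsilon)-(1-\beta)=O(\varepsilon)$ gives $\Eone[\delta g(x,u)]\le C\varepsilon$.

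An explicit Gaussian computation gives $g(x,u)\ge\exp\{-x^2/(2u)(1+o(1))-O(\log(x/\sqrt u))\}$, uniformly for $x>0$. Fix $\eta>0$ and split the expectation according to whether $g\ge\varepsilon^{1-\eta}$. On that event the $\delta$-mass is at most $C\varepsilon^\eta=o(\log(1/\varepsilon)^{-\rho_{\rm R}})$, and $u\le A$ is bounded. On the complementary event we have $u\le x_+^2/(2(1-\eta)\log(1/\varepsilon))(1+o(1))$, and $u$ is forced to zero. Since $u\to0$ there, $Z_A-Z_S$ is negligible, so $\Eone[(Z_S-c_A)_+^{2\rho_{\rm R}}\delta]$ is asymptotically at most $\Eone[(Z_A-c_A)_+^{2\rho_{\rm R}}]\to A_*^{\rho_{\rm R}}\mathbb E[(N_{\rm G}+z_{1-\beta})_+^{2\rho_{\rm R}}]$. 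To make this rigorous, I would bound the contribution from paths with $u\ge u_0$ separately, and use uniform integrability from Gaussian moments. Letting $\eta\downarrow0$ gives the limsup.

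\emph{Lower bound (crossing rule).} Let $L_\varepsilon=(1+\eta)\log(1/\varepsilon)$. Certify at $S=\inf\{s<A:Z_s-c'_A\ge\sqrt{2(A-s)L_\varepsilon}\}$, and otherwise use the terminal test $\{Z_A\ge c'_A\}$. This is a conditional-error crossing rule. The threshold $c'_A$ is shifted from $c_A$ by $O(\varepsilon)$ so that size is exactly $\alpha$. The extra type-I mass and the lost power from early stops are both bounded by the boundary-crossing conditional errors. Via the same martingale identity these are $O(\varepsilon^{1+\eta}\mathrm{polylog})=o(\varepsilon)$, so the power slack of order $\varepsilon$ restores power $\ge1-\beta$. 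For the reward, Brownian scaling near the deadline shows that a path with terminal margin $x>0$ hits the square-root boundary at residual time $u=x^2/(2L_\varepsilon)(1+o(1))$ in $\Pone$-probability. Dominated convergence then gives $\liminf[\log(1/\varepsilon)]^{\rho_{\rm R}}c^{[\rho_{\rm R}]}\ge(1+\eta)^{-\rho_{\rm R}}2^{-\rho_{\rm R}}\mathcal M_{\alpha,\beta}(\rho_{\rm R})$, and we let $\eta\downarrow0$.

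I expect the main obstacle to be the uniform tail estimate for $g(x,u)$, together with the transfer from the margin at $S$ to the terminal margin in the upper bound. The difficulty is that the rule is arbitrary, so the region of moderate $u$ with large $x$ must be controlled without any structure on $S$. I would first try to handle it with the crude bound $g\ge c\,\bar\Phi(x/\sqrt u+\sqrt u/2)$ combined with Mills' ratio.
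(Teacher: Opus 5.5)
Your argument is correct. The upper bound follows the paper's route. Your $\Pone$-side slack identity is the Neyman--Pearson deficit $\mathfrak d_A=\pi_\alpha(A)-\Eone\delta-e^{c_A}(\alpha-\Ezero\delta)$, and your $g$ is exactly the paper's conditional penalty $\mathfrak q(u,x)$. The paper then splits the same way you do: a high-penalty set carrying $\delta$-mass $O(\varepsilon^\eta)$, and a complement where $u\log(1/\varepsilon)\lesssim x_+^2/2$. It finishes with the same transfer, from $\Eone[\delta u]=O(1/\log(1/\varepsilon))$ to $\Eone[\delta|Z_A-Z_S|^k]\to0$. Splitting on the value of $g$ rather than on the region $\mathcal B_\eta$ even spares you the concavity step for $\rho_{\rm R}<1$.

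The penalty bound you plan to use needs repair. The crude bound $g\ge c\,\bar\Phi(x/\sqrt u+\sqrt u/2)$ fails as $u\downarrow0$: at $x=0$, $g\approx\sqrt{u/(2\pi)}$. You need a $\sqrt u$ prefactor together with $(a+b)^2\le(1+\eta)a^2+(1+\eta^{-1})b^2$, which gives the paper's bound $\mathfrak q\ge c_\eta u\exp[-(1+\eta)x_+^2/(2u)]$. Paths with $x\le0$, or with $u\le\varepsilon^\eta$, are then negligible; your displayed bound on $u$ does not literally cover them.

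The genuine difference is the achievability rule. You use the driftless boundary $Z_s-c'_A\ge\sqrt{2(A-s)L_\varepsilon}$ with $L_\varepsilon=(1+\eta)\log(1/\varepsilon)$, keep the terminal cutoff within $o(\varepsilon)$ of $c_A$, and check feasibility perturbatively. Two corrections to that check:
\begin{itemize}
\item The extra null mass is at most $\bar\Phi(\sqrt{2L_\varepsilon}-\sqrt A/2)=\varepsilon^{1+\eta}e^{O(\sqrt{\log(1/\varepsilon)})}$, not $\varepsilon^{1+\eta}$ times a polylog, because your boundary omits the $u/2$ drift. It is still $o(\varepsilon)$.
\item Early stops only add power relative to the terminal test at $c'_A$. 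The only power loss is the $o(\varepsilon)$ threshold shift, which the $\Theta(\varepsilon)$ slack $\pi_\alpha(A)-(1-\beta)$ absorbs.
\end{itemize}
The paper instead calibrates $\widetilde c_A$ to power exactly $1-\beta$ and spends the whole size slack. It stops when the $\Pzero$-martingale $C^{\rm ce}_t$ reaches $\bar\alpha_A=\alpha^{\rm term}_A/\alpha$. Optional sampling then gives size exactly $\alpha$ with no error estimates. Moreover, $k_A^2\sim2\log(1/\varepsilon)$ follows directly from $1-\bar\alpha_A\asymp\varepsilon$, so a single family attains the limit.

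Your rule yields $(1+\eta)^{-\rho_{\rm R}}$ times the limit for each fixed $\eta$. That identifies the limit, but to have one family attain it, as the statement says, take $\eta_\varepsilon\downarrow0$ with $\eta_\varepsilon\sqrt{\log(1/\varepsilon)}\to\infty$. Your boundary is also strictly a square-root rule rather than a constant-level conditional-error crossing.

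Finally, the reward convergence is better argued pathwise than by Brownian scaling. Under the coupling $Z_t=B_t+t/2$, a crossing forces $u\le\sup_t|Z_t-c'_A|^2/(2L_\varepsilon)\to0$. Hence $2L_\varepsilon u\,\delta\to(Z_{A_*}-c_{A_*})_+^2$ almost surely, and Fatou gives the liminf.
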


Theorem~\ref{thm:near-critical-capacity} is the case $\rho_{\rm R}=1$.

\subsection{Conditional Gaussian penalty}

For $N_{\rm G}\sim N(0,1)$ define
\[
\mathfrak q(u,x)
:=\mathbb E\!\left[(e^{-x-u/2-\sqrt u\,N_{\rm G}}-1)_+\right].
\]
Fix $A_{\max}>A_*$ and $\eta\in(0,1)$. There exists $c_\eta>0$ such that, for $0<u\le A_{\max}$,
\begin{equation}\label{eq:sharp-q-lower}
\mathfrak q(u,x)
\ge c_\eta u
\exp\!\left[-\frac{(1+\eta)x_+^2}{2u}\right].
\end{equation}
To obtain the bound, restrict $N_{\rm G}$ to
\[
\left[-x_+/\sqrt u-\sqrt u/2-2,
      -x_+/\sqrt u-\sqrt u/2-1\right].
\]
On this interval the payoff is at least $e^{\sqrt u}-1\ge\sqrt u$. A lower bound on the Gaussian density, together with
$(a+b)^2\le(1+\eta)a^2+(1+\eta^{-1})b^2$, gives \eqref{eq:sharp-q-lower}.

For a feasible rule at $A=A_*+\varepsilon$, define the Neyman--Pearson deficit
\begin{align}
\mathfrak d_A
&:=\Ezero[(e^{Z_A}-e^{c_A})(\delta_A^{\rm NP}-\delta)]\nonumber\\
&=\pi_\alpha(A)-\Eone\delta-e^{c_A}(\alpha-\Ezero\delta).
\label{eq:sharp-deficit}
\end{align}
The integrand is nonnegative pointwise, and hence
\[
0\le \mathfrak d_A
\le \pi_\alpha(A)-(1-\beta)
=O(\varepsilon).
\]
Conditioning at $S$ and retaining the paths that deploy below the terminal Neyman--Pearson cutoff yields
\begin{equation}\label{eq:sharp-deficit-conditional}
\mathfrak d_A
\ge\Eone[\delta\mathbf1_{\{u>0\}}\mathfrak q(u,x)].
\end{equation}
Under $\Pone$, conditional on $\mathcal G_S$, the increment $Z_A-Z_S$ is $N(u/2,u)$.

\subsection{Universal upper bound}

Let
\[
\Lambda_\varepsilon:=\log(1/\varepsilon)
\]
and define
\[
\mathcal B_\eta
:=\left\{u>0:\ u\Lambda_\varepsilon>
\frac{1+\eta}{2(1-\eta)}x_+^2\right\}.
\]
On $\mathcal B_\eta$, \eqref{eq:sharp-q-lower} and \eqref{eq:sharp-deficit-conditional} imply
\[
\Eone[\delta u\mathbf1_{\mathcal B_\eta}]=O(\varepsilon^\eta).
\]
If $0<\rho_{\rm R}\le1$, concavity on the measure with density $\delta\mathbf1_{\mathcal B_\eta}$ gives
\[
\Eone[\delta u^{\rho_{\rm R}}\mathbf1_{\mathcal B_\eta}]
\le
\{\Eone[\delta u\mathbf1_{\mathcal B_\eta}]\}^{\rho_{\rm R}}.
\]
If $\rho_{\rm R}\ge1$, use $u^{\rho_{\rm R}}\le A_{\max}^{\rho_{\rm R}-1}u$. Hence the contribution of $\mathcal B_\eta$, multiplied by $\Lambda_\varepsilon^{\rho_{\rm R}}$, vanishes uniformly over feasible rules.

On $\mathcal B_\eta^c$,
\[
\Lambda_\varepsilon^{\rho_{\rm R}}u^{\rho_{\rm R}}
\le
\left(\frac{1+\eta}{2(1-\eta)}\right)^{\rho_{\rm R}}
 x_+^{2\rho_{\rm R}}.
\]
Taking $\rho_{\rm R}=1$ first gives $\Eone[\delta u]=O(\Lambda_\varepsilon^{-1})$. Since $\delta$ is $\mathcal G_S$-measurable,
\[
\Eone[\delta(Z_A-Z_S)^2]
=
\Eone[\delta(u+u^2/4)]
=O(\Lambda_\varepsilon^{-1}).
\]
Conditional Gaussian moment bounds then yield, for every fixed $k>0$,
\[
\Eone[\delta|Z_A-Z_S|^k]\longrightarrow0.
\]
Consequently,
\[
\Eone[\delta(Z_S-c_A)_+^{2\rho_{\rm R}}]
\le
\Eone[(Z_A-c_A)_+^{2\rho_{\rm R}}]+o(1).
\]
Letting $\eta\downarrow0$ gives
\[
\limsup_{\varepsilon\downarrow0}
\Lambda_\varepsilon^{\rho_{\rm R}}
 c^{[\rho_{\rm R}]}_{\alpha,\beta}(A_*+\varepsilon)
\le
2^{-\rho_{\rm R}}\mathcal M_{\alpha,\beta}(\rho_{\rm R}).
\]
At $A=A_*$, \eqref{eq:sharp-deficit} gives $\mathfrak d_A=0$. Since $\mathfrak q(u,x)>0$ for every $u>0$ and finite $x$, \eqref{eq:sharp-deficit-conditional} forces $\delta u=0$ almost surely, so
$c^{[\rho_{\rm R}]}_{\alpha,\beta}(A_*)=0$.

\subsection{Attaining conditional-error crossing rule}

For $A>A_*$ define the terminal threshold calibrated to power $1-\beta$,
\[
\widetilde c_A:=A/2-\sqrt A\,z_{1-\beta},
\]
and its null rejection probability
\[
\alpha_A^{\rm term}
:=\Pzero(Z_A\ge\widetilde c_A)
=1-\Phi(\sqrt A-z_{1-\beta})<\alpha,
\qquad
\bar\alpha_A:=\frac{\alpha_A^{\rm term}}{\alpha}.
\]
Under $\Pzero$ the conditional terminal-rejection probability
\[
C_t^{\rm ce}
:=\Pzero(Z_A\ge\widetilde c_A\mid\mathcal G_t)
=
\Phi\!\left(
\frac{Z_t-\widetilde c_A-(A-t)/2}{\sqrt{A-t}}
\right),
\qquad t<A,
\]
is a bounded continuous martingale. Let $S_A$ be its first crossing of $\bar\alpha_A$ before $A$, with $S_A=A$ if no crossing occurs, and set
\[
\delta_A^{\rm ce}:=\mathbf1_{\{S_A<A\}}.
\]
Every terminal-positive path crosses the boundary. Optional sampling gives
\[
\Pzero(\delta_A^{\rm ce}=1)=\alpha,
\qquad
\Pone(\delta_A^{\rm ce}=1)
\ge\Pone(Z_A\ge\widetilde c_A)=1-\beta.
\]
Thus the rule is feasible.

As $\varepsilon\downarrow0$,
\[
1-\bar\alpha_A
\sim
\frac{\varphi_N(z_{1-\alpha})}{2\alpha\sqrt{A_*}}\,\varepsilon,
\qquad
\bigl[\Phi^{-1}(\bar\alpha_A)\bigr]^2
\sim2\Lambda_\varepsilon.
\]
On a crossing, let
\[
u_A^{\rm rem}:=A-S_A.
\]
Then
\[
\Phi^{-1}(\bar\alpha_A)\sqrt{u_A^{\rm rem}}
=Z_{S_A}-\widetilde c_A-u_A^{\rm rem}/2.
\]
Couple the experiments under $\mathbb P_1$ using $Z_t=B_t+t/2$ on
$[0,A_{\max}]$, and put $u_A^{\rm rem}=0$ when no crossing occurs. For all
$A$ sufficiently close to $A_*$, write
$k_A=\Phi^{-1}(\bar\alpha_A)>0$. There is a random variable $M_{\rm dom}$
with every finite moment, independent of $A$, such that on a crossing
\[
0\le k_A\sqrt{u_A^{\rm rem}}
=Z_{S_A}-\widetilde c_A-u_A^{\rm rem}/2\le M_{\rm dom}.
\]
Thus $k_A^2u_A^{\rm rem}\le M_{\rm dom}^2$, and the residual time tends to
zero on crossing paths. If $Z_{A_*}>c_{A_*}$, terminal rejection, and hence
a crossing, occurs for all sufficiently small $A-A_*$. Brownian continuity
then gives convergence to $(Z_{A_*}-c_{A_*})^2$. If $Z_{A_*}<c_{A_*}$, an
infinite sequence of crossings would have $S_A\to A_*$ and a negative
limiting right-hand side in the crossing identity, which is impossible.
The equality event has probability zero. Consequently,
\[
k_A^2u_A^{\rm rem}\longrightarrow (Z_{A_*}-c_{A_*})_+^2
\]
almost surely and, by domination, in every finite $L^q$. Since
$[\Phi^{-1}(\bar\alpha_A)]^2/(2\Lambda_\varepsilon)\to1$,
\[
\Lambda_\varepsilon^{\rho_{\rm R}}
\Eone[(u_A^{\rm rem})^{\rho_{\rm R}}\delta_A^{\rm ce}]
\longrightarrow
2^{-\rho_{\rm R}}\mathcal M_{\alpha,\beta}(\rho_{\rm R}).
\]
This matches the upper bound and proves Theorem~\ref{thm:sharp-residual-moment}. At $\rho_{\rm R}=1$,
\[
\mathcal K_{\alpha,\beta}
=\frac12\mathcal M_{\alpha,\beta}(1)
=\frac{A_*}{2}
\left(
[1+z_{1-\beta}^2](1-\beta)
+z_{1-\beta}\varphi_N(z_{1-\beta})
\right).
\]

\subsection{General residual rewards}\label{app:general-residual-rewards}

Let $\mathcal R_A:[0,A]\to[0,\infty)$ be measurable and uniformly bounded
for $A$ near $A_*$, with $\mathcal R_A(0)=0$. Suppose
$\mathcal R_A(u)/u^{\rho_{\rm R}}\to a_0>0$ uniformly as $A\downarrow A_*$
and $u\downarrow0$. Fix $h>0$ and choose $m>\rho_{\rm R}$. The
residual-moment theorem gives, uniformly over feasible rules,
\[
\mathbb E_1[\delta\mathbf1_{\{u\ge h\}}]
\le h^{-m}\mathbb E_1[\delta u^m]
=O([\log(1/\varepsilon)]^{-m}).
\]
Uniform boundedness makes the reward contribution on $u\ge h$ negligible
after multiplication by $[\log(1/\varepsilon)]^{\rho_{\rm R}}$. For each
$\eta>0$, the uniform local expansion gives, for sufficiently small $h$
and $A-A_*$,
\[
(a_0-\eta)u^{\rho_{\rm R}}\le\mathcal R_A(u)
\le(a_0+\eta)u^{\rho_{\rm R}},\qquad 0<u<h.
\]
The upper residual-moment bound and the attaining conditional-error rule
give matching upper and lower reward bounds. The discarded $u\ge h$ part
of the $\rho_{\rm R}$-moment is also negligible, since it is bounded by
$h^{\rho_{\rm R}-m}\mathbb E_1[\delta u^m]$. Letting $\eta\downarrow0$ yields
\[
\sup_{(S,\delta)}\Eone[\mathcal R_A(A-S)\delta]
\sim\frac{a_0\mathcal M_{\alpha,\beta}(\rho_{\rm R})}
{[2\log(1/\varepsilon)]^{\rho_{\rm R}}}.
\]
The value at $u=0$ contributes nothing because $\mathcal R_A(0)=0$.
This proves \eqref{eq:general-residual-reward}.

\subsection{Monitoring grids and implementation lead}\label{app:monitoring-scope}

The sharp law allows the preterminal monitoring date to approach the horizon with $\varepsilon$. A fixed finite grid with a final preterminal gap bounded away from zero defines a different local asymptotic regime. If every counted certificate must precede economic death by a fixed information-time lead $\ell>0$, then the observation horizon available for certification is $A-\ell$ and the feasibility boundary shifts to $A_*+\ell$. Calendar-time lead and information-time lead coincide only under a unit-speed information clock.

\subsection{Equal lifetime information, different cash capacity}\label{app:cash-capacity-counterexample}

On $[0,1]$ define
\[
\Delta^{(1)}(t)=1-t,
\qquad
\Delta^{(2)}(t)=(1-t)^2(1+b_{\rm poly}t+a_{\rm poly}t^2),
\]
where
\[
a_{\rm poly}:=\frac{135-5\sqrt{345}}8,
\qquad
b_{\rm poly}:=2-\frac{2a_{\rm poly}}5.
\]
Direct integration gives
\[
\Delta^{(1)}(0)=\Delta^{(2)}(0)=1,
\qquad
\int_0^1\Delta^{(1)}(t)\,dt
=
\int_0^1\Delta^{(2)}(t)\,dt
=\frac12,
\]
and
\[
\int_0^1[\Delta^{(1)}(t)]^2dt
=
\int_0^1[\Delta^{(2)}(t)]^2dt
=\frac13.
\]
With the same constant volatility, the two opportunities have the same
current edge, physical horizon, integrated future cash value, lifetime
information and $CR$. To make the near-critical family explicit, take the
common volatility $\sigma_A^2=1/(3A)$ and let $A\downarrow A_*$. Both
experiments then have information horizon $A$, with positive limiting
volatility and fixed terminal profiles satisfying the uniform reward
hypotheses. Their local crossing orders are $p=1$ and $p=2$, so \eqref{eq:cash-capacity-sharp} gives
\[
\mathsf{CAC}^{\rm cash}_{1}(A_*+\varepsilon)
\asymp[\log(1/\varepsilon)]^{-2/3},
\]
\[
\mathsf{CAC}^{\rm cash}_{2}(A_*+\varepsilon)
\asymp[\log(1/\varepsilon)]^{-3/5}.
\]
Lifetime information determines feasibility; cash value also depends on the payoff map.

\section{Proofs for unknown decay and model transfer}\label{app:unknown-decay-proofs}

\subsection{Ordered least-profile family}

Let $\underline\Delta$ and $\underline A_t$ be as in Section~\ref{ordered-unknown-decay}.  The standardized matched statistic
\[
\bar Z_T^{\rm lf}
:=
\frac{\int_0^T\underline\Delta(u)\,dY_u}
{\sigma^2\sqrt{\underline A}}
\]
is $N(0,1)$ under the boundary law.  Under alternative $\vartheta$ its mean is
\[
\frac{\int_0^T\underline\Delta(u)\Delta_\vartheta(u)\,du}
{\sigma^2\sqrt{\underline A}}
\ge\sqrt{\underline A}.
\]
Therefore $\mathbf1_{\{\bar Z_T^{\rm lf}\ge z_{1-\alpha}\}}$ has size $\alpha$ and worst-case power
$\Phi(\sqrt{\underline A}-z_{1-\alpha})$, attained at the least member.  No level-$\alpha$ rule can have larger worst-case power because the least member belongs to the family.

For the sharp capacity statement, apply the conditional-error crossing policy from Appendix~\ref{app:sharp-capacity-proof} on the deterministic least-profile clock.  Coupling each alternative and the least member with the same Brownian path gives
\[
\underline Z_t^{\vartheta}-\underline Z_t^{\rm lf}
=
\frac1{\sigma^2}\int_0^t
\underline\Delta(u)[\Delta_\vartheta(u)-\underline\Delta(u)]\,du
\ge0.
\]
Hence every least-profile crossing occurs no later under the other alternatives. Residual least-profile information is then pathwise larger. The same is true of the coupled payoff for a common nonnegative nondecreasing residual reward, or for the memberwise-dominating payoff maps specified in the theorem. The least member supplies the matching upper bound, so the sharp constants coincide with the simple least-profile constants.

\subsection{Least-favorable mixtures}

For the finite Gaussian experiment of Section~\ref{nonordered-unknown-decay}, define
\[
\mathcal D_\alpha
:=
\{\delta\in L^\infty(\Pzero):0\le\delta\le1,\ \Ezero\delta\le\alpha\}.
\]
For $w\in\mathcal W_J$,
\[
\Ezero[L_w\delta]=\sum_{j=1}^J w_j\mathbb E_j\delta
\]
is continuous and bilinear in $(\delta,w)$.  The decision set is weak-star compact and convex, and $\mathcal W_J$ is compact and convex.  The minimax theorem yields
\[
\sup_{\delta\in\mathcal D_\alpha}\min_j\mathbb E_j\delta
=
\min_{w\in\mathcal W_J}\sup_{\delta\in\mathcal D_\alpha}\Ezero[L_w\delta].
\]
The inner problem is a Neyman--Pearson test against the mixture likelihood $L_w$.

If two distinct alternatives satisfy $\|m_1\|^2=\|m_2\|^2=A_*$, each has individual maximal power $1-\beta$.  A common rule attaining that power under both would be a most-powerful size-$\alpha$ test for each member.  Continuous Gaussian likelihoods make each most-powerful test unique up to null sets, whereas distinct equal-norm vectors define distinct rejection halfspaces.  Hence the individual frontiers cannot be attained simultaneously.

\subsection{Same-contract approximation}

Let the policy notation be as in Section~\ref{estimated-model-transfer}.  Tightened approximate feasibility implies
\[
\operatorname{size}(\widehat\pi)
\le
\widehat{\operatorname{size}}(\widehat\pi)+\eta_P
\le\alpha,
\]
\[
\operatorname{pow}_j(\widehat\pi)
\ge
\widehat{\operatorname{pow}}_j(\widehat\pi)-\eta_P
\ge1-\beta
\quad\text{for every }j.
\]
Define
\[
\Pi(a,p):=\{\pi\in\Pi:\operatorname{size}(\pi)\le a,
\ \inf_j\operatorname{pow}_j(\pi)\ge p\},
\qquad
\mathcal U^\star(a,p):=\sup_{\pi\in\Pi(a,p)}\mathcal U(\pi).
\]
Assume $2\eta_P<\min(\alpha,\beta)$ and that
$\Pi(\alpha-2\eta_P,1-\beta+2\eta_P)$ is nonempty.
Suppose $\widehat\pi$ is $\varepsilon_{\rm opt}$-optimal for the tightened approximate problem with objective $\widehat{\mathcal U}$. Uniform memberwise reward error implies
$|\mathcal U(\pi)-\widehat{\mathcal U}(\pi)|\le\eta_V$.
Every policy in the twice-tightened true class is feasible for the once-tightened approximate problem. Comparing $\widehat\pi$ to an arbitrary such policy, and taking the supremum, gives
\[
\mathcal U(\widehat\pi)
\ge\mathcal U^\star(\alpha-2\eta_P,1-\beta+2\eta_P)
-2\eta_V-\varepsilon_{\rm opt}.
\]
Consequently,
\begin{align*}
\mathcal U^\star(\alpha,1-\beta)-\mathcal U(\widehat\pi)
&\le 2\eta_V+\varepsilon_{\rm opt}\\
&\quad+\mathcal U^\star(\alpha,1-\beta)
-\mathcal U^\star(\alpha-2\eta_P,1-\beta+2\eta_P).
\end{align*}
The last difference is the loss due to tightening the reliability contract. This argument guarantees the fixed worst-case criterion; each member's separate optimum remains outside this guarantee.

A sufficient probability bound is obtained from total variation, with convention $\operatorname{TV}(P,Q):=\sup_E|P(E)-Q(E)|$, where the supremum is over measurable events. If
\[
\max_{j\in\{0,1,\ldots,J\}}\operatorname{TV}(P_j,\widehat P_j)\le\eta_P,
\]
then every bounded history-measurable deployment probability changes by at most $\eta_P$.  If the reward range has length at most $V_{\rm span}$ and the approximate payoff differs pointwise by at most $\kappa_V$, one may take
\[
\eta_V=V_{\rm span}\eta_P+\kappa_V.
\]
Pinsker's inequality gives
\[
\operatorname{TV}(P_j,\widehat P_j)
\le
\sqrt{\tfrac12\KL(P_j\Vert\widehat P_j)}.
\]
For finite-dimensional Gaussian history laws with common positive-definite covariance $\Sigma$ and means $m,\widehat m$,
\[
\operatorname{TV}\{N(m,\Sigma),N(\widehat m,\Sigma)\}
=
2\Phi\!\left(
\frac{\|\Sigma^{-1/2}(m-\widehat m)\|}{2}
\right)-1.
\]
These inequalities put model-estimation error and the certification margin on the same probability scale.

\section{Notation and scope of the models}\label{app:notation}

Decorations distinguish mathematical roles; local model assumptions remain part of each result.
\begin{center}\small
\begin{tabular}{@{}p{0.29\textwidth}p{0.65\textwidth}@{}}
\toprule Symbol & Meaning and scope\\\midrule
$\mathbb P_0,\mathbb P_1;\ U_{\rm rnd}$ & Null/alternative laws; independent auxiliary uniform randomizer available at time zero.\\
$\alpha,\beta$ & Size and nondeployment bounds; sharp asymptotics require $0<\alpha<1-\beta<1$.\\
$A_t;\ H=A_T;\ A$ & Likelihood quadratic-variation clock; random terminal clock; deterministic horizon.\\
$t_{\rm inv}(s);\ Z_s$ & Stopped inverse clock; log likelihood in information time.\\
$S,\delta;\ \rho_{\rm dep}$ & Stopping time and binary decision; convexified deployment probability in $[0,1]$.\\
$c_{\alpha,\beta}(A)$ & Reliable residual-information capacity, reported as zero below feasibility.\\
$A_*,\mathcal I_*$ & Critical information time and KL information, $A_*=2\mathcal I_*$.\\
$\Delta;\ c,p$ & Boundary drift gap; terminal coefficient (vector or scalar as stated) and crossing order $p>0$.\\
$\lambda_{\rm dec};\lambda_{\rm Lag},\omega$ & Decay rate; nonnegative size/power dual multipliers.\\
$\lambda_q;\lambda_{\rm tr}$ & Predictable-certificate tilt; quadratic turnover-cost coefficient.\\
$v_{\rm KL};\ v_R,v_X$ & Value per KL nat; return and predictor innovation variances.\\
$\rho_{\rm corr};\rho_{\rm R}$ & Innovation correlation; positive residual-reward exponent.\\
$\kappa_{\rm crowd};\kappa_{\rm KL}$ & Linear crowding sensitivity; stationary pairwise information rate.\\
$L_{\rm hor};\ L_{\rm score}$ & Discrete horizon; empirical lifetime score with descriptive scope.\\
$L_j,L_w$ & Alternative and mixture likelihood ratios in the finite robust experiment.\\
$\mathcal U(\pi);\eta_P,\eta_V$ & Worst-member transfer value; uniform probability/reward approximation bounds.\\
$\mathcal K_{\alpha,\beta};\mathcal M_{\alpha,\beta}(\rho_{\rm R})$ & Sharp inverse-logarithmic constant; critical Gaussian positive-part moment.\\
$\mathcal R_A(u);\ K_p^{\rm unit}$ & Residual reward with the stated uniform local expansion; unit-exposure reward coefficient.\\
\bottomrule
\end{tabular}
\end{center}

\end{document}